\renewcommand\footnotetextcopyrightpermission[1]{}
  \providecommand\BibTeX{{%
    \normalfont B\kern-0.5em{\scshape i\kern-0.25em b}\kern-0.8em\TeX}}}
\newtheorem{question}{Question}
\newtheorem{lemma}{Lemma}
\newtheorem{theorem}{Theorem}
\newtheorem{definition}{Definition}[]
\DeclarePairedDelimiter\floor{\lfloor}{\rfloor}
\newcommand{\algo}{PrestigeBFT\xspace}
\def\wct<#1>{\raisebox{.5pt}{\textcircled{\raisebox{-.9pt} {#1}}}}
\def\pn<#1>{\raisebox{.5pt}{\textcircled{\raisebox{-.5pt} {\scriptsize #1}}}}
\mathchardef\mhyphen="2D
\newcommand{\RN}[1]{
  \textup{\uppercase\expandafter{\romannumeral#1}}
}
\pgfplotsset{compat=1.18}
\begin{document}

\title{\algo: Revolutionizing View Changes in BFT Consensus Algorithms with Reputation Mechanisms}

\author{Gengrui Zhang, Fei Pan, Sofia Tijanic, and Hans-Arno Jacobsen}
\affiliation{
  \institution{\textit{University of Toronto}}
  \country{}}
\email{{gengrui.zhang, fei.pan, sofia.tijanic}@mail.utoronto.ca, jacobsen@eecg.toronto.edu}

\renewcommand{\shortauthors}{Gengrui Zhang, et al.}

\begin{abstract}
This paper proposes \algo, a novel leader-based BFT consensus algorithm that addresses the weaknesses of passive view-change protocols. Passive protocols blindly rotate leadership among servers on a predefined schedule, potentially selecting unavailable or slow servers as leaders. \algo proposes an active view-change protocol using reputation mechanisms that calculate a server's potential correctness based on historic behavior. The active protocol enables servers to campaign for leadership by performing reputation-associated work. As such, up-to-date and correct servers with good reputations are more likely to be elected as leaders as they perform less work, whereas faulty servers with bad reputations are suppressed from becoming leaders by being required to perform more work. Under normal operation, \algo achieves $5\times$ higher throughput than the baseline that uses passive view-change protocols. In addition, \algo remains unaffected under benign faults and experiences only a $24\%$ drop in throughput under a variety of Byzantine faults, while the baseline throughput drops by $62\%$ and $69\%$, respectively.
\end{abstract}

\maketitle

\section{Introduction}
\label{sec:introduction}
The rapid development of distributed systems has spurred extensive research on Byzantine fault-tolerant (BFT) consensus algorithms. Among them, leader-based BFT algorithms have been favored by practical applications due to their high performance. These algorithms operate state machine replication to produce deterministic results using two distinct protocols: the view-change and replication protocol. The view-change protocol selects a leader (primary) for each view, while the replication protocol enables the leader to initiate consensus with followers (backups). While prior research has primarily concentrated on the replication protocol to optimize system performance, the importance of the view-change protocol has often been overlooked. However, today, the view-change protocol has become a critical factor in system performance as view changes occur more frequently than previously assumed due to various factors associated with fault tolerance, performance criteria, and decentralization with fairness imperatives~\cite{causesofcrash, guo2013failure, amir2008byzantine, clement2009making, aublin2013rbft, kokoris2018omniledger, diemConsensus}. 
For example, \wct<1> network problems, bursty workloads, operator errors, and software bugs can result in leader failure\cite{causesofcrash}, resulting in more frequent view changes as applications scale up~\cite{guo2013failure}.
\wct<2> Since faulty leaders can intentionally slow down processing without triggering timeouts~\cite{amir2008byzantine}, some approaches monitor a leader's performance and invoke a view change if the performance falls below a set threshold~\cite{clement2009making, aublin2013rbft}.
Furthermore, \wct<3> faulty leaders can unfairly handle client requests~\cite{kokoris2018omniledger, zhang2020byzantine}, resulting in some approaches frequently changing leadership to mitigate unfairness~\cite{diemConsensus}. Evidently, the view-change protocol is vital for system performance, especially in blockchain applications where frequent view changes are becoming the norm~\cite{cachin2017blockchain, diem}.

\begin{SCfigure}[1][t]
    \centering
    \includegraphics[width=0.45\linewidth]{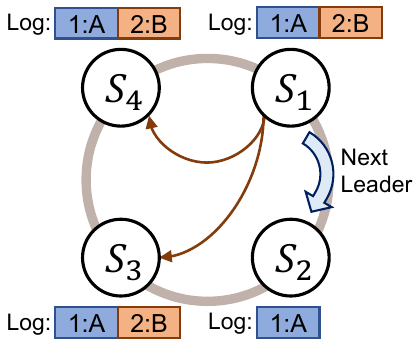}
    \caption{The passive view-change protocol follows a fixed schedule to rotate leadership; it cannot avoid an already crashed server or a slow server to become a leader ($S_2$).}
    \label{fig:passive-rotation}
\end{SCfigure}

Numerous state-of-the-art BFT algorithms (e.g., \cite{yin2019hotstuff, gueta2019sbft, stathakopoulou2022state, stathakopoulou2019mir, danezis2022narwhal, spiegelman2022bullshark, keidar2021all}), despite optimizing the replication protocol extensively, all rely on similar view-change mechanisms introduced by PBFT~\cite{castro1999practical}. The passive protocol follows a predefined schedule to rotate leadership among servers: for a total of $n$ servers, a leader server ($L$) for a view $V$ is decided such that $L$ = V mod $n$. For example, in Figure~\ref{fig:passive-rotation}, $S_1$ is the leader in $V1$; when a view change occurs, $S_2$ becomes the leader in $V2$, and $S_3$ in $V3$, and so on.

Unfortunately, the passive protocol lacks robustness and efficiency. During a view change, since all servers blindly follow a predefined schedule and rotate leadership, the passive protocol cannot skip a scheduled server that is already unavailable, which leads to weak robustness. In addition, the passive protocol can result in inefficiency in replication because it cannot ensure \emph{optimistic responsiveness} (OR)~\cite{pass2018thunderella}, which requires a non-faulty leader to be up-to-date and make immediate progress after being assigned~\cite{attiya1994bounds, yin2019hotstuff}. However, when a slow server is rotated to be the leader by a passive protocol, it must sync to become up-to-date first and then starts operating consensus. Thus, BFT algorithms using the passive view-change protocol have to add a sync-up phase after each value is committed to obtain OR (e.g., from two-phase to three-phase in HotStuff~\cite{yin2019hotstuff}), but this comes at the cost of reduced throughput and increased latency due to additional messages and rounds.

Figure~\ref{fig:passive-rotation} shows an $n=4$ system where $S_1$ is the leader in $V1$, and value $B$ has been replicated at sequence number (SN) $2$ among $S_1$, $S_3$, and $S_4$. Next, we assume that $S_1$ fails, causing a view change to take place. \wct<1> If $S_2$ has already crashed, $S_2$ will still blindly be assigned to be the leader in $V2$, and the system must wait for timeouts from $f+1$ servers to realize that $S_2$ has failed before moving on to $S_3$ in $V3$. On the other hand, \wct<2> if $S_2$ is alive, it cannot make immediate progress because it must first sync to become up-to-date (i.e., know the highest SN). Therefore, in a circle of leadership rotations among $n{=}3f{+}1$ servers (i.e., from $S_1$ to $S_n$), the probability of encountering an unavailable or slow leader is $f/(3f+1) \approx 33\%$ in the worst case. 

In order to improve the robustness and efficiency of view changes, we set out to investigate active view changes where servers no longer follow a predefined schedule. Raft's leader election mechanism ushered a way for designing active view-change protocols under non-Byzantine (benign) failures~\cite{ongaro2014search}. It allows servers to actively campaign for leadership upon detecting a leader's failure and vote for an up-to-date server to become a new leader. Consequently, it can prevent unavailable and slow servers from becoming leaders. However, in the context of BFT, this approach alone is insufficient. While servers are empowered to campaign for leadership, it also opens the door for Byzantine servers to repeatedly initiate new view changes to seize control of leadership and neglect replication (repeated view change attacks). Therefore, resolving this issue is essential for active view changes to be deployed under BFT.

To tackle this challenge, we propose \algo, a new BFT consensus algorithm with an active view-change protocol featuring reputation mechanisms. \textbf{Our reputation mechanism utilizes a server's behavior history to generate a reputation value that reflects the likelihood of the server's correctness. This reputation value then determines the probability of the server being selected as a new leader during the active view-change protocol.}

\algo penalizes suspiciously faulty behavior with worsening reputations, while it rewards protocol-obedient behavior with improving reputations. A server's reputation value is utilized to assess its likelihood of becoming a new leader. During view changes, the active view-change protocol imposes computational work on each leadership campaigner, where the difficulty of the computation is determined by the campaigner's reputation value. Correct servers, who exhibit protocol-obedient behavior and thus maintain a "good" reputation, perform negligible computational work. On the other hand, faulty servers, whose behavior history has led to a "bad" reputation, perform more time-consuming computational work. Thus, correct and up-do-date servers are more likely to be elected than faulty servers over time.

Equipped with its reputation-embedded active view-change protocol, \algo demonstrates both robustness and efficiency. During normal operation, its replication achieves optimistic responsiveness with $5\times$ higher throughput than HotStuff~\cite{yin2019hotstuff}. Under benign faults, passive view-change protocols suffer from an approximate $65\%$ drop in throughput, whereas \algo's performance remains unaffected. Furthermore, under a variety of Byzantine faults, \algo's reputation engine swiftly suppresses faulty servers from attaining leadership, resulting in only a $24\%$ drop in throughput compared to that under normal operation. 

\algo makes the following key contributions:

\begin{itemize}
   
    \item Its view-change protocol is the first active protocol operating under BFT. By enabling servers to proactively campaign for leadership, it prevents the election of unavailable or slow servers, thereby achieving optimistic responsiveness.

    \item Its reputation mechanisms effectively convert a server's behavior history during replication and view changes into a reputation value that indicates the server's likelihood of being correct. The reputation value is crucial in determining a server's eligibility for leadership in the view-change protocol.
    
    \item It demonstrates a unique combination of robustness and efficiency, with improved performance even under Byzantine failures. Faulty servers are quickly suppressed during view changes, and their probability of being elected rapidly decreases after they perform attacks that relegate their reputations.
\end{itemize}

\begin{figure*}[t]
\minipage{0.33\textwidth}
    \includegraphics[width=\textwidth]{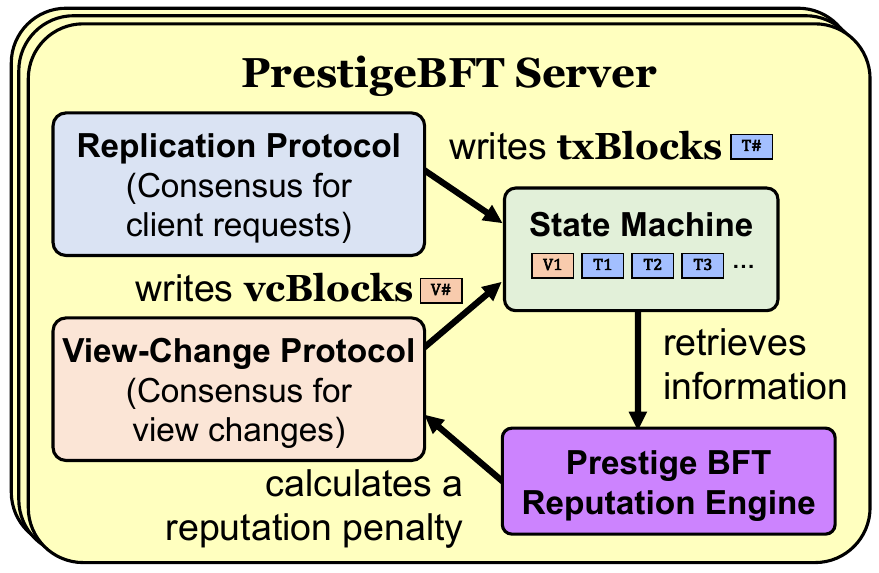}
    \caption{\algo architecture.}
    \label{fig:architecture}
\endminipage \hfill
\minipage{0.66\textwidth}
    \includegraphics[width=\textwidth]{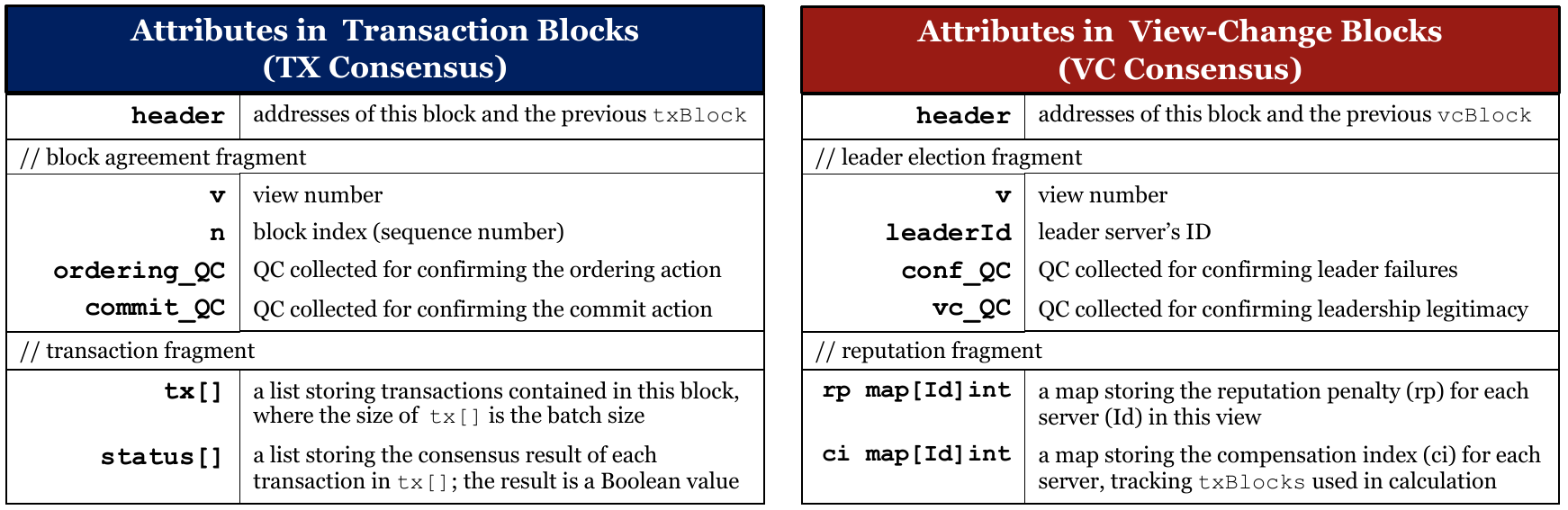}
    \caption{Attributes in \texttt{vcBlock}s and \texttt{txBlock}s.}
    \label{fig:blocks}
\endminipage 
\end{figure*}

\section{\algo overview}
\label{sec:overview}
\algo, similar to other state-of-the-art leader-based BFT consensus algorithms (e.g., PBFT~\cite{castro1999practical} and HotStuff~\cite{yin2019hotstuff}), moves through a succession of system configurations called views. 
Views are integers that increase monotonically. Each view starts with a \emph{view-change period} conducted by the view-change protocol that decides a leader and may follow with a \emph{replication period} conducted by the replication protocol that achieves consensus for client requests.

\textbf{\algo architecture.} 
Besides the view-change and replication protocols, \algo establishes a unique reputation mechanism (shown in Figure~\ref{fig:architecture}). At any given time, a server operates in one of four states: \textit{follower}, \textit{redeemer}, \textit{candidate}, and \textit{leader}. Under normal operation, there is one leader and the other servers are followers, where all servers operate under the replication protocol, which conducts consensus for committing client requests, producing \texttt{txBlock}s (shown in Figure~\ref{fig:blocks}) that record quorum certificates ($QC$s)~\cite{gifford1979weighted}. When a view change is invoked, the view-change protocol produces \texttt{vcBlock}s that record leadership and servers' reputation information in the new view. Note that \texttt{txBlock}s and \texttt{vcBlock}s are the deterministic consensus results of replication and view change.

Each server has a reputation engine that has predefined rules to calculate reputation values. The reputation engine is utilized during view changes, wherein it retrieves the necessary states of \texttt{txBlock}s and \texttt{vcBlock}s stored in the state machine through read operations. Based on the retrieved information, it calculates a \textbf{reputation penalty} (an integer). The penalty reflects the server's likelihood of correctness and determines the work that the server performs to become a candidate. The process for a server, initially as a follower, to become a leader works at a high level as follows:

\begin{enumerate}
    \item Each server is initially a follower. If a follower triggers a view change confirmed by $f{+}1$ servers in view $V$, it will campaign for leadership and become a redeemer.

    \item The redeemer increments its view to $V+1$ and gets its reputation penalty from its reputation engine. It then performs computation determined by the reputation penalty; once completed, it transitions to a candidate.

    \item The candidate starts a leader election by collecting votes from $2f+1$ servers; if it succeeds in time, it becomes the leader in view $V+1$.

    \item The leader prepares a \texttt{vcBlock} including the election result and updated reputation information and then broadcasts the \texttt{vcBlock} to others.
\end{enumerate}

The reputation penalty is critical in successful elections. Servers with higher penalties are more suspected of being faulty and will perform more computational work, making the election process substantially more difficult for them. Next, we show how the reputation mechanism translates a server's past behavior to a reputation penalty that serves as an indicator of the server's correctness.

\section{\algo's reputation mechanism}
\label{sec:reputation}

The reputation mechanism analyzes a server's behavior in past replication and view changes and produces a reputation penalty ($rp$), represented as an integer. A higher penalty corresponds to a worse reputation and indicates a higher level of suspicion that the server may be malicious.

The calculation of $rp$ involves two steps: penalization and compensation. It increases $rp$ for a server that seeks to become the leader (penalization) and reduces $rp$ if the server exhibits good historic behavior (compensation). \textbf{Therefore, a server's reputation penalty may either increase, decrease, or remain unchanged from its current value after being assessed by the reputation engine.} Algo.~\textsc{CalcRP} shows the calculation workflow.

{\footnotesize
\setlength{\intextsep}{0pt}
\begin{algorithm}[t]
\caption{\textsc{Calculate-Reputation-Penalty (CalcRP)}}
\label{algo:calcrp}
\begin{algorithmic}[1]
    \Require $V'$, vcBlock, txBlock, Id \Comment{$V'$ is the new view}
    \State $V$, $rp$ = \texttt{vcBlock}.v, \texttt{vcBlock}.rp[Id]
    \Comment{get current view and penalty}
    \State new\_rp\_temp = Eq.\ref{eq:increase-penalty}($V'$, $V$, rp)

    \State ti, ci = \texttt{txBlock}.n, \texttt{vcBlock}.ci[Id] \label{re:ci} \Comment{get ti and ci from stored blocks}
    \State P = $[]$int$\lbrace rp \rbrace$ \Comment{init penalty set with $rp$}
    \While{vcBlock.header.preVcBlock != nil} \label{re:pstart}
        \State vcBlock = vcBlock.header.preVcBlock \Comment{iterates to the first block}
        \State P.add(vcBlock.rp[Id]) \label{re:pend}
        \Comment{P contains all past penalties}
    \EndWhile
    \State delta\_tx, delat\_vc = Eq.\ref{eq:delta-tx}$(ti, ci)$, Eq.\ref{eq:delta-vc}$(P, rp)$
    \State new\_rp = Eq.\ref{eq:decrease-penalty}(new\_rp\_temp, delta\_tx, delat\_vc)
    \State \Return new\_rp, ci
\end{algorithmic}
\end{algorithm}
}

\subsubsection*{Init}
The initial view and penalty can be defined differently. For simplicity, we set initial view to $V1$ and $rp^{(1)}=1$.

\subsubsection*{Step 1: Penalization}
A server's $rp$ is increased for campaigning for leadership for the new view ($V'$) following Eq.~\ref{eq:increase-penalty}. The increase in $rp$ is the increase in view numbers, which prevents Byzantine servers from overloading the view data structure (an integer). The higher the increase in views, the higher the penalty will be. Correct servers will always increase their views by one.

\begin{equation}
\label{eq:increase-penalty}
    rp^{(V')}_{temp} = rp^{(V)} + (V' - V)
\end{equation} 
After applying penalization, the calculation proceeds to the second step, which involves deducting from the increased $rp$ if the server's behavior history meets certain criteria.

\begin{figure*}[t]
\begin{subfigure}[b]{0.3\textwidth}
    \centering
    \includegraphics[width=0.97\linewidth]{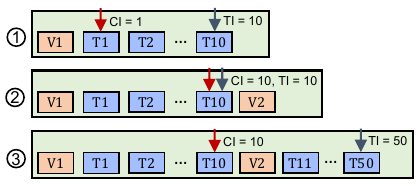}
    \caption{Examples of calculating incremental log responsiveness ($\delta_{tx}$).}
    \label{fig:compesationindex}
\end{subfigure}
\hfill
\begin{subfigure}[b]{0.35\textwidth}
    \centering
    \includegraphics[width=\linewidth]{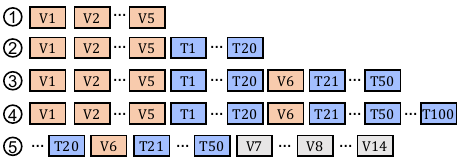}
    \caption{Examples of historic behavior. The server is the leader in highlighted \texttt{V\#} but not in gray \texttt{V\#}.}
    \label{fig:reputation-behavior}
\end{subfigure}
\hfill
\begin{subfigure}[b]{0.33\textwidth}
    \centering
    \begin{adjustbox}{width=0.99\linewidth}
    \begin{tabular}{c|cc >{\bfseries}c|lcc >{\bfseries}c}
         \hline
         & $\texttt{ci}$ & $\texttt{ti}$ & $\delta_{tx}$ & \multicolumn{1}{c}{$\mathcal{P}$} & $\delta_{vc}$ & $\delta$ & $rp^{(V')}$\\
         \hline
         \wct<1> & 1 & 1 & 0 & $\{1,2,3,4,5\}$ & 0.19 & 0 & 6 \\
         \wct<2> & 1 & 20 & 1 & $\{1,2,3,4,5\}$ & 0.19 & 1.14 & 5 \\
         \hline
         \wct<3> & 20 & 50 & 0.6 & $\{1,2,3,4,5,5\}$ & 0.25 & 0.89 & 6 \\
         \wct<4> & 20 & 100 & 0.8 & $\{1,2,3,4,5,5\}$ & 0.25 & 1.2 & 5 \\
         \hline
         \wct<5> & 20 & 50 & 0.6 & $\mathcal{P}^5$ & 0.36 & 1.29 & 5 \\
         \hline
    \end{tabular}
    \end{adjustbox}
    {\small \newline
    $\mathcal{P}^{5} {=} \lbrace 1, 2, 3, 4, 5, 5, ... , 5 \rbrace$
    //$5$ appears 10 times
    }
    \caption{Breakdown of calculating compensation in~\ref{fig:reputation-behavior}; $rp^{(V')}$ will be the new penalty in $V'$.}
    \label{fig:reputation-calculations}
\end{subfigure}
\caption{Examples of calculating reputation penalty ($rp$), where ``\texttt{V\#}'' are \texttt{vcBlock}s and ``\texttt{T\#}'' are \texttt{txBlock}s.}
\label{fig:reputation-examples}
\end{figure*}

\noindent\begin{minipage}{.33\linewidth}
\begin{equation}
  \label{eq:delta-tx}
    \delta_{tx} =  \dfrac{\texttt{ti} - \texttt{ci}}{\texttt{ti}}
\end{equation}
\end{minipage}
\hfill
\begin{minipage}{.66\linewidth}
  \begin{equation}
  \label{eq:delta-vc}
    \delta_{vc} = 1- Sigmoid(\dfrac{rp^{(V)}-\mu_{\mathcal{P}}}{\sigma_{\mathcal{P}}})
  \end{equation}
\end{minipage}

\begin{equation}
    \label{eq:decrease-penalty}
    \delta = rp^{(V')}_{temp} C_{\delta} \delta_{tx} \delta_{vc},
    \quad rp^{(V')} = rp^{(V')}_{temp} - \floor{\delta} 
\end{equation}

\subsubsection*{Step 2: Compensating good behavior history}
The compensation has two criteria: \emph{incremental log responsiveness} ($\delta_{tx}$ in Eq.~\ref{eq:delta-tx}), which is calculated using \texttt{txBlock}s, and \emph{leadership zealousness} ($\delta_{vc}$ in Eq.~\ref{eq:delta-vc}), which is calculated using \texttt{vcBlock}s. The final $rp$ is calculated in Eq.~\ref{eq:decrease-penalty} with a possible deduction considering both criteria. 

\textbf{The first criterion ($\delta_{tx}$) considers good behavior to be up-to-date replication}. In Eq.~\ref{eq:delta-tx}, \texttt{ti} is the number of \texttt{txBlock}s this server has committed, which is the sequence number of its latest $\texttt{txBlock}$, and \texttt{ci} is the compensation index representing the number of \texttt{txBlock}s this server has used for past compensation stored in the current \texttt{vcBlock} (Line~\ref{re:ci}). Initially, \texttt{ti}=1 and \texttt{ci}=1, so $0 \leq \delta_{tx} \leq 1$.

Figure~\ref{fig:compesationindex} shows three examples of calculating $\delta_{tx}$ for a server (say $S_a$). \wct<1> $S_a$ has replicated $10$ \texttt{txBlock}s in $V1$, so its \texttt{ci}=1 and \texttt{ti}=10. 
\wct<2> If $S_a$ campaigns for leadership for $V2$, after applying Eq.~\ref{eq:delta-tx}, its $\delta_{tx}=0.9$ with $10$ \texttt{txBlock}s used for compensation. If $S_a$ is elected, \texttt{ci}=10. 
\wct<3> If $S_a$ then replicates $50$ \texttt{txBlock}s in total and campaigns for $V3$, its $\delta_{tx}=0.8$. 
Therefore, to receive a higher $\delta_{tx}$, the reputation mechanism entices servers to replicate more \texttt{txBlock}s.

Considering log responsiveness is a common aspect of reputation-based approaches (e.g., DiemBFT~\cite{diemConsensus}), where the more transactions a server replicates, the more reliable it is perceived to be. Nevertheless, since \algo is an active view-change protocol, relying solely on behavior in replication is inadequate. \emph{It is crucial to consider cases where Byzantine servers repeatedly acquire leadership but make limited progress in replication.} Therefore, in addition to log responsiveness, our reputation mechanism considers a server's historic penalties in previous view changes as well. 

\textbf{The second criterion ($\delta_{vc}$) considers good behavior to be having gradually increasing penalties during past view changes.} 
$\delta_{vc}$ first computes the z-score of a server's current penalty in relation to its past penalties, taking into account the rate of change of the current penalty over past penalties. It retrieves the server's historic penalties stored in \texttt{vcBlock}s and adds them into a set $\mathcal{P}$ (Line~\ref{re:pstart} to~\ref{re:pend}); then, it calculates the mean ($\mu_{\mathcal{P}}$) and standard deviation ($\sigma_{\mathcal{P}}$) of $\mathcal{P}$. Finally, the \texttt{Sigmoid} function normalizes the z-score between $0$ and $1$. Thus, $0 < \delta_{vc} < 1$.

Therefore, a higher $\delta_{vc}$ value indicates a slower increase in penalties, which is more towards the behavior of correct servers. Since correct servers adhere to the protocol for triggering view changes, they are unlikely to be penalized while regaining leadership in each view; doing so would require proactively performing significantly increasing computation to ``fight against'' the penalty increase (discussed in \S\ref{sec:algo:active-vc}).

Finally, after obtaining $\delta_{tx}$ and $\delta_{vc}$, the new $rp$ is calculated by Eq.~\ref{eq:decrease-penalty}, where $C_{\delta}$ is a constant that may be used by different applications to adjust the effect of $\delta_{tx}$ and $\delta_{vc}$. For simplicity, we set $C_{\delta}=1$. Since $0 {\leq} \delta_{tx} {\leq} 1$ and $0 {<} \delta_{vc} {<} 1$, the deduction ($\delta$) is a portion of the increased penalty of Eq.~\ref{eq:increase-penalty}; i.e., 
$0 \leq \delta< rp^{(V')}_{temp}$.
As such, the reputation mechanism will raise suspicion of  malicious behavior by increasing $rp$ if it observes a pattern of penalized leadership repossession with limited replication. Conversely, it will decrease $rp$ if it observes historic behavior that implies high log responsiveness and gradually increased or unchanged historic penalties.

\textbf{More examples.}
Figure~\ref{fig:reputation-behavior} and~\ref{fig:reputation-calculations} provide five examples of calculating $rp$ for different server behavior. They show how the reputation mechanism responds to suspicious malicious behavior and to protocol-obedient behavior.

\begin{enumerate}
    \item[\wct<1>] Server $S_a$ has been the leader from $V1$ to $V5$ without replication. Thus, its $\delta_{tx}$ remains $0$, resulting in no compensation in Eq.~\ref{eq:decrease-penalty} with its $rp$ only increasing. If $S_a$ campaigns for leadership for the next view ($V6$), its $rp$ will increase to~$rp^{(6)} {=} 6$.
    
    \item[\wct<2>] If $S_a$ conducts consensus for $20$ \texttt{txBlock}s in $V5$ and then campaigns for leadership for the next view, its ${\delta_{tx} {=} 1}$. It will receive a compensation of $1$ with unchanged $rp$ ($rp^{(6)} {=} rp^{(5)} {=} 5$).
\end{enumerate}

\noindent \textit{Analysis.}
$S_a$'s behavior in \wct<1> is extremely suspicious to be malicious, as it keeps repossessing leadership without making progress in replication. The reputation mechanism captures this pattern and keeps increasing its reputation penalty. Compared to~\wct<1>, $S_a$'s behavior in \wct<2> reduces suspicion as it starts to replicate transactions; the reputation mechanism encourages this behavior and grants compensation.

\begin{enumerate}
    \item[\wct<3>] In $V6$, $S_a$ replicates $20$ more \texttt{txBlock}s. If $S_a$ starts to campaign for $V7$, its ${\delta_{tx}{=}0.6}$ by Eq.~\ref{eq:delta-tx} as \texttt{ci}=20 and \texttt{ti}=50. Thus, $S_a$ receives no compensation with its penalty increasing to $rp^{(7)}=6$.
    
    \item[\wct<4>] If $S_a$ replicates more \texttt{TxBlock}s for a total of $100$ in $V6$, its ${\delta_{tx}=0.8}$. In this case, if $S_a$ campaigns for leadership in $V7$, it will receive a compensation of $1$ with its $rp$ remaining unchanged ($rp^{(7)} {=} 5$).
\end{enumerate}

\noindent \textit{Analysis.}
Incremental log responsiveness ($\delta_{tx}$) expects an increasing number of \texttt{txBlock}s after each compensation (e.g., $S_a$ cannot get compensated in \wct<3> but can be compensated in \wct<4>); this prevents a server from frequently occupying leadership but only making limited progress in replication.

\begin{enumerate}
    \item[\wct<5>] Continuing in \wct<3>, assume $S_a$ is no longer eager to leadership after $V6$, staying as a follower from $V7$ to $V14$ (the gray \texttt{vcBlock}s where other servers are leaders). Its $\delta_{vc}$ kept increasing as its penalty remains unchanged as $5$ from $V7$ to $V14$. If $S_a$ campaigns for leadership in $V15$, it will be compensated by $1$.
\end{enumerate}

\noindent \textit{Analysis.}
$S_a$ in \wct<5> can be compensated as \wct<4> with the same level of replication in \wct<3> when it stops repossessing leadership with a suspicious history. It shows that $\delta_{vc}$ incentivizes servers with a history of increasing penalties to become indifferent to leadership.

The examples show the effectiveness, efficiency, and simplicity of the reputation mechanism in transforming a server's prior actions into an integer value ($rp$) that is indicative of its correctness. In addition, the calculation schema is highly adaptable and can be customized for specific use cases. For example, users can define the criteria for useful \texttt{txBlock}s in $\delta_{tx}$ and alter the impact of $\delta_{tx} \delta_{vc}$ by modifying $C_{\delta}$. Step-by-step calculations are provided in Appendix~\ref{sec:ap:examples}.

\textbf{Features.} 
\textit{The reputation mechanism does not incur additional cost on replication}. The reputation engine is independent (shown in Figure~\ref{fig:architecture}) and involved only when a server has become a redeemer in a view change; i.e., the replication has already stopped.

Moreover, \textit{the reputation mechanism never writes to the state machine}. The reputation engine operates as a ``consultant'' who calculates an $rp$ when called. It never changes a server's $rp$ in the current view; i.e., in a given view $V$, a server's $rp$ remains unchanged throughout $V$. The calculated $rp$ will become a server's new $rp$ in the next view only if it is elected as the new leader through VC consensus. 

\section{The \algo consensus algorithm}
\label{sec:algo}
This section introduces \algo's system model, active view-change protocol, and replication protocol.

\subsection{System model}
\label{sec:algo:system-model}
We adopt the partial synchrony network model introduced by Dwork et al.~\cite{dwork1988consensus}, where there is a known bound $\Delta$ and an unknown Global Stabilization Time (GST), such that after GST, all transmissions between two non-faulty servers arrive within time $\Delta$. \algo does not require network synchrony to provide safety, but it requires partial synchrony to provide liveness.

We use a Byzantine failure model, meaning that faulty servers may behave arbitrarily. We assume independent server failures where each server represents an independent entity. \algo tolerates up to $f$ Byzantine servers out of $n{=}3f+1$ servers (i.e., $f=\floor{\frac{n-1}{3}}$) with no limit on the number of faulty clients. Similar to other BFT algorithms~\cite{yin2019hotstuff, castro1999practical, gueta2019sbft, kotla2007zyzzyva, zhang2020byzantine, danezis2022narwhal}, we do not consider DDoS attacks (e.g., buffer overflow attacks), which are often handled by lower-level mechanisms, such as rate limiting and admission control, outside of the scope of consensus algorithms. 

\textbf{Cryptographic primitives and quorums.} 
\algo applies ($t$, $n$) threshold signatures where $t$ out of $n$ servers collectively sign a message~\cite{shoup2000practical}. Threshold signatures can convert $t$ individually signed messages (of size $O(n)$) into one fully signed message (of size $O(1)$), which can then be verified by all $n$ servers, proving that $t$ servers have signed it. \algo uses threshold signatures to form quorum certificates ($QC$s) in consensus by setting $t{=}2f+1$. 

\textbf{Attack vector and threat model.}
We allow for a faulty server to collude with the other $f{-}1$ faulty servers as well as an unlimited number of faulty clients. The set of faulty servers can change dynamically, with servers transitioning between correct and faulty states, as long as the total number of faulty servers does not exceed $f$. Faulty servers can behave arbitrarily and maliciously, but we assume that faulty servers cannot intervene to prevent the state changes of non-faulty servers specified by the algorithm, which is a fundamental assumption of Byzantine fault tolerance~\cite{attiya2004distributed}. For example, faulty servers cannot prevent a non-faulty server from delivering messages to other non-faulty servers after GST. We also assume that a faulty server (and its colluding faulty servers) are computationally bound. They cannot produce a valid signature of a non-faulty server. It is worth noting that our system model is the same as other standard partially synchronous BFT algorithms~\cite{yin2019hotstuff, buchman2016tendermint, castro1999practical, gueta2019sbft}, without additional assumptions.

\subsection{The active view-change protocol}
\label{sec:algo:active-vc}
The view-change (VC) protocol achieves VC consensus for deciding on a new leader and updating its reputation penalty and compensation index (shown in Figure~\ref{fig:blocks}). The VC protocol attains the following properties:

\begin{enumerate} [label=\textbf{P\arabic*}]
    \item At most one leader can be elected in a view. \label{vc:p1}

    \item An elected leader has the most up-to-date replication (ensuring optimistic responsiveness). \label{vc:property:or}
        
    \item An elected leader's reputation penalty and the correspondingly performed computation can be verified by all non-faulty servers. \label{vc:property:verify}
\end{enumerate}

\begin{figure}
    \centering
    \includegraphics[width=\linewidth]{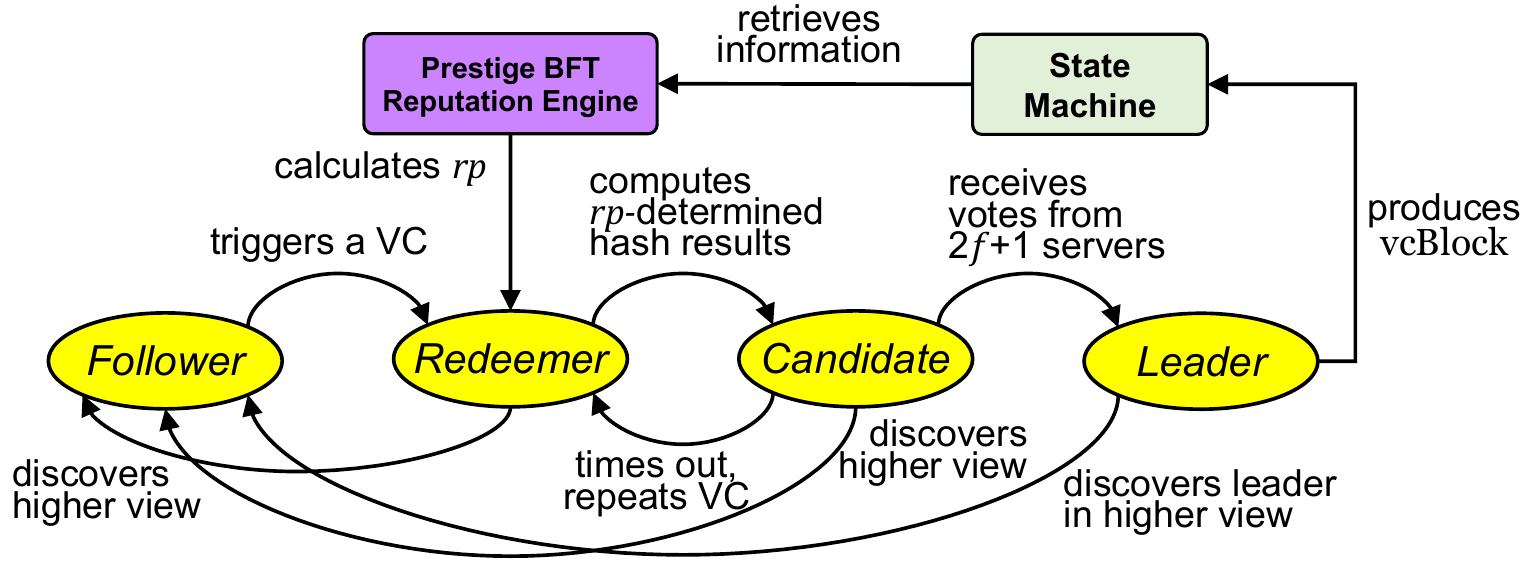}
    \caption{Server states transition in view changes.}
    \label{fig:state-trans}
    \vspace{-1em}
\end{figure}

Next, we describe the VC protocol with server state transitions (illustrated in Figure~\ref{fig:state-trans}) in Algo.~\textsc{State-Transition} and demonstrate how these properties are achieved.

\begin{algorithm}[H]
\algrenewcommand\algorithmicfor{\textbf{upon}}
\footnotesize
\caption{State-Transition}
\begin{algorithmic}[1]
  \Statex \textcolor{blue}{$\triangledown$ As a follower}
  \For{receiving a \textsc{Compt}} \label{aglo:af:procedurecompt}
    \If{\textsc{Compt}.$\delta_c$ is valid} 
    \textbf{send} \textsc{Compt} to \texttt{Leader} \label{algo:af:relay}
    \EndIf
    \State \textbf{new} timer for \textsc{Compt}.tx
    \While{timer}
        \If{\textsc{Compt}.tx is committed} 
        \Return \label{fol:compt-committed} \Comment{Leader is correct}
        \EndIf
    \EndWhile
    \State \textbf{broadcast} \textsc{ConfVC} $\leftarrow$ $\langle V, \textsc{Compt}, \sigma_{S_i} \rangle$ and \textbf{reset} timer \label{algo:af:inspection}
    \While{timer}
        \For{receiving $f+1$ \textsc{ReVC} messages} \label{algo:af:revc}
            \State \textbf{convert} \textsc{ReVC}s to 
            \texttt{conf\_QC} and \textbf{transition} to \texttt{Redeemer} \label{algo:af:confqc}            
            \State \Return \Comment{Leader is faulty; start a new VC}
        \EndFor
    \EndWhile
    \State \textbf{tag client} \textsc{Compt}.c \Comment{Client can be faulty}
  \EndFor

  \For{receiving a \textsc{ConfVC}} \label{algo:af:confvc}
    \If{\textsc{ConfVC.Compt} has been received from \textsc{Compt}.c}
    \State \textbf{send} \textsc{ReVC} $\leftarrow$ $\langle V, \sigma_{S_i} \rangle$
    \EndIf
  \EndFor
 
  \For{receiving a \textsc{CampVC}} \label{algo:af:pro3} \label{algo:fol:pro3} \Comment{voting criteria in \S\ref{sec:ascandidate}}

    \If{\textsc{CampVC}.V' < myVcBlock.v} \Return \EndIf
    \If{!\ref{criterion:1}} \Return \Comment{vote only once in a view} \EndIf 

    \If{!\ref{criterion:2}} \Return \Comment{verify if this VC is confirmed} \EndIf
    \If{\textsc{CampVC}.V > myVcBlock.v}
        \State \Call{SyncUp}{myVcBlock, \textsc{CampVC}.V} 
        \Comment{sync-up view changes} \label{algo:af:syncvc}
    \EndIf

    \State $ti = \textsc{CampVC}.txBlock.n$
    \If{$ti$ < myTxBlock.n} \Return \Comment{\ref{criterion:3} verifying replication} \label{algo:candy:replistatus} \EndIf
    
    \If{$ti$ > myTxBlock.n}
        \State \Call{SyncUp}{myTxBlock, $ti$} \Comment{sync-up replication} \label{algo:af:synctx}
    \EndIf
     
    \State $rp', ci'$ = \Call{CalcRP}{\textsc{CampVC}.V', myVcBlock, txBlock, CandID} \label{algo:af:rcalrp}

    \If{$ci' != \textsc{CampVC}.ci$} \Return \label{algo:af:verici} \EndIf

    \If{$rp' != \textsc{CampVC}.rp$} \Return \label{algo:af:verirp} \Comment{\ref{criterion:4} verifying $rp$} \EndIf

    \State $hr' = \Call{Hash}{\textsc{CampVC}.txBlock, \textsc{CampVC}.nc}$ \label{algo:af:rehash}
        
    \If{!\Call{Prefix}{$hr', rp'$}} \Return \label{algo:af:rehashcheck} \Comment{\ref{criterion:5} verifying computation}\EndIf 

    \State \textbf{send} \textsc{VoteCP} $\leftarrow$ $\langle \textsc{CampVC}.V', \sigma_{S_i} \rangle$
  \EndFor

    \Statex \textcolor{blue}{$\triangledown$ As a redeemer} 
    \State \textbf{retrieve} vcBlock, txBlock, V \label{algo:red:start}
    \State $V' = V + 1$ \Comment{increments current view}
    \State $rp, ci =$ \Call{CalcRP}{$V'$, vcBlock, txBlock, myID}
    \State \textbf{stop} replication in $V$
    \State \textbf{newThread} (\textbf{upon} discovering higher view \textbf{transition} to \texttt{Follower})    
    \While{$1$} \label{algo:ar:powstarts}
        \State $nc = \Call{Gen-Nonce}{rand}$
        \Comment{Randomly generates a string}
        \State $hr= \Call{Hash}{\texttt{txBlock}, nc}$  
        \If{\Call{Prefix}{$hr$, $rp$}} \label{algo:ar:powends} \Comment{Check if $hr$ has a prefix of $rp$ 0s}
            \State \textbf{transition} to \texttt{Candidate}  
            \State \Return $V$, $V'$, rp, nc, hr, ci, \texttt{txBlock}
        \EndIf
    \EndWhile

    \Statex \textcolor{blue}{$\triangledown$ As a candidate}
    \State \textbf{reset} timer \Comment{starts an election}
    \State \textbf{broadcast} \textsc{Camp} $\leftarrow$ $\langle \texttt{conf\_QC}, V, V', rp, nc, hr, ci, \texttt{txBlock}, \sigma_{S_i} \rangle$ \label{algo:ascandy:camp}
    
    \State \textbf{newThread} (\textbf{upon} discovering higher view \textbf{transition} to \texttt{Follower}) \label{algo:ascandy:gobackaf}
    
    \While{timer}
        \For{receiving $2f+1$ \textsc{VoteCP} messages}
            \State \textbf{convert} \textsc{VoteCP}s to 
            \texttt{vc\_QC} and
            \textbf{transition} to \texttt{Leader} 
        \EndFor
    \EndWhile
    \State \textbf{transition} to \texttt{Redeemer}

    \Statex \textcolor{blue}{$\triangledown$ As a leader}
    \State \textbf{new} vcBlock\{$V'$, myID, conf\_QC, vc\_QC, vcBlock.rp, vcBlock.cp\} \label{algo:al:1}
    \State vcBlock.rp[myID], vcBlock.ci[myID] = rp, ci \label{algo:al:2}
    \State \textbf{broadcast} \texttt{vcBlock} \label{algo:al:sendnewb}
    \State \textbf{newThread} (\textbf{upon} discovering higher view \textbf{transition} to \texttt{Follower})

    \For{receiving 2f+1 \textsc{vcYes} messages}
    \State \textbf{store} vcBlock and \textbf{start} replication in $V'$ as \texttt{Leader}
    \EndFor
\end{algorithmic}
\end{algorithm}

\subsubsection{As a Follower.} \label{sec:asfollower}

Each server is initially a follower and has a timer with a random timeout. The timeout should be sufficiently greater than network latency ($\Delta$), allowing ample time for a correct leader to complete consensus (e.g., timeout range = [300, 600~ms] for a $\Delta{=}30$~ms). 

View changes can be invoked by policy-defined criteria and failure detection. The former can be implemented differently according to application specifications. For example, a throughput-threshold policy that changes a view if a leader fails to operate at an expected throughput (e.g., Aardvark~\cite{clement2009making}) or a timing policy that changes a view every $5$ minutes. The detection of a leader's failure involves both clients and servers. If a client (\texttt{c}) cannot confirm its proposed transaction (\texttt{tx}) in time through the replication protocol, it broadcasts a complaint (\textsc{Compt}) message with the proposal message (\textsc{Prop} in~\S\ref{sec:algo:replication}) it sent to the leader, including \texttt{tx}, \texttt{c}, and the client's signature $\sigma_c$, suspecting a leader's failure.

Assume a server $S_i$ operates as a follower; after receiving a \textsc{Compt} message, $S_i$ verifies and relays it to the leader and then waits for consensus to be completed (Line~\ref{algo:af:relay}). If \texttt{tx} is committed before the timer expires (Line~\ref{fol:compt-committed}), it shows that the leader is still correct. Otherwise, $S_i$ suspects that the leader or client may be faulty and starts an inspection by broadcasting a \textsc{ConfVC} message (Line~\ref{algo:af:inspection}), where $V$ is the current view number and $\sigma_{S_i}$ is the signature that $S_i$'s signs this message. The other followers, after receiving a \textsc{ConfVC} from $S_i$, check if they have received the same \textsc{Cmpt} from client $c$ (Line~\ref{algo:af:confvc}). If so, they reply with a \textsc{ReVC} message.

If $S_i$ receives $f{+}1$ replies in time (including itself), it converts them to a threshold signature with $t{=}f{+}1$, which forms a quorum certificate (\texttt{conf\_QC}). $S_i$ considers the leader faulty and starts a view change by transitioning to a redeemer. On the other hand, if $f{+}1$ replies cannot be collected in time, $S_i$ will tag client \texttt{c} as faulty. 

This failure detection mechanism prevents faulty clients, faulty servers, and their collusion from inflicting unnecessary view changes on correct followers. Since a correct client is required to broadcast its complaint to all servers, at least $2f{+}1$ correct servers can relay the complaint to the leader. Thus, a view change is confirmed by at least a correct server (Line~\ref{algo:af:revc}). Note that we do not assume DDoS attacks (in~\S\ref{sec:algo:system-model}), such as faulty clients overwhelming servers by pouring complaints. This can be handled by rate control or blacklisting tagged clients.

\subsubsection{As a Redeemer.} \label{sec:asredeemer}
After becoming a redeemer, $S_i$ retrieves the view number ($V$), the \texttt{vcBlock} of the current view, and the latest committed \texttt{txBlock} (\texttt{txBlock}.n is the highest among all \texttt{txBlock}s). $S_i$ first increments its view to $V'$ and calls the reputation engine to get its $rp$ and $ci$ for view $V'$. Next, it computes a hash puzzle (similar to Proof-of-Work~\cite{nakamoto2019bitcoin}): it generates a random string ($nc$) and hashes the combination of $nc$ and \texttt{txBlock} until the hash result ($hr$) has a prefix of $rp$ zero bytes (e.g., $hr$=``0000966sv0d3...'' under $rp{=}4$). Thus, the higher the $rp$ is, the more iterations it takes to find a prefix with $rp$ leading zeros.

Servers with a higher $rp$ will spend more time and energy to ``redeem themselves'' from the imposed work, while servers with a lower $rp$ can complete the computation and transition to a candidate more quickly.

In addition, a redeemer transitions back to follower when it discovers a leader of a higher view. If the leader's \texttt{vcBlock} is valid (see \S\ref{sec:asleader}), it indicates that the redeemer is out-of-sync. The redeemer will abort ongoing computation and operate as a follower in the higher view.

\subsubsection{As a Candidate.} \label{sec:ascandidate}

After becoming a candidate, $S_i$ broadcasts a campaign message (Line~\ref{algo:ascandy:camp}), where \texttt{conf\_QC} was collected when confirming this view change (Line~\ref{algo:af:confqc}), $V$ to $\texttt{txBlock}$ are the results after the redeemer state, and $\sigma_{S_i}$ is the signature that $S_i$ signs this message with. Then, $S_i$ waits for votes from the other servers. The other servers, operating as followers, vote for $S_i$ (Line~\ref{algo:fol:pro3}) with the following criteria: 

\begin{enumerate} [label=\textbf{C\arabic*}]
    \item \label{criterion:1}
    The follower has not voted in this view (\textsc{Camp}.$V'$).
    
    \item \label{criterion:2}
    The threshold of \textsc{Camp}.\texttt{conf\_QC} is $f{+}1$.
    
    \item \label{criterion:3}
    The candidate's replication is at least as up-to-date as the follower's.
    
    \item \label{criterion:4}
    The candidate's $rp$ can be recalculated and verified.
    
    \item \label{criterion:5}
    The candidate's performed computational work aligns with its $rp$; i.e., \textsc{Camp}.$hr$ has $rp$ leading zeros.
\end{enumerate}

\ref{criterion:1} enforces that a server votes at most once in a view, and thus guarantees Property~\ref{vc:p1} that at most one leader can be elected in a given view. \ref{criterion:2} guarantees that the current view change is necessary and was confirmed by at least one correct server.

Since BFT consensus operates in $QC$s of size $2f+1$, up to $f$ servers can be correct but stale (fallen behind in their logs including \texttt{txBlock}s and \texttt{vcBlock}s). For example, if $S_1$ fails in Figure~\ref{fig:passive-rotation} and $S_3$ becomes a candidate, $S_4$ and $S_3$ have identical logs. However, while $S_2$ is correct, it has stale logs at the time of $S_1$'s crash. Therefore, stale servers must sync to update their logs before verifying requests from candidates. To achieve this, the \textsc{SyncUp} function acquires needed blocks from the candidate:
{\small
\begin{algorithmic}
  \Function{SyncUp}{btype, end} \Comment{btype is a block interface}
    \State $start$ = btype.id \Comment{id = view/n in vcBlocks/txBlocks}
    \State \textbf{acquire} blocks[] from $start$ to $end$ from remote
    \State \textbf{validate} all blocks in blocks[] through their $QC$s
    \State btype = blocks[:-1] \Comment{set myBlock to the latest block}
  \EndFunction
\end{algorithmic}
}
If followers fall behind in view changes, they call \textsc{SyncUp} to acquire missing \texttt{vcBlock}s (Line~\ref{algo:af:syncvc}). Then, they check if the candidate's replication is at least up-to-date as themselves (Line~\ref{algo:candy:replistatus} for~\ref{criterion:3}), and sync up replication if falling behind (Line~\ref{algo:af:synctx}). \ref{criterion:3} enforces the election of a candidate that has the most up-to-date log, which ensures Property~\ref{vc:property:or}.

After syncing up, followers can verify the candidate's $rp$ and associated computation. They use the same calculation scheme by calling Algo.~\textsc{CalcRP} with the candidate's ID, where $rp$ and $ci$ should be reproduced (Line~\ref{algo:af:rcalrp} to~\ref{algo:af:verirp}). If so, \ref{criterion:4} is satisfied. Then, followers verify the candidate's computational work. They reproduce the hash result and check if the result has a prefix of $rp$ leading zero bytes (Line~\ref{algo:af:rehash} to~\ref{algo:af:rehashcheck}). If so, \ref{criterion:5} is satisfied, thereby ensuring Property~\ref{vc:property:verify}. Note that followers only hash once ($\mathcal{O}(1)$) to verify the computation. Finally, followers send a vote back to the candidate. Proofs of~\ref{vc:p1},~\ref{vc:property:or}, and~\ref{vc:property:verify} are provided  in Appendix~\ref{sec:ap:correctness}.

The candidate becomes the new leader if it can collect $2f{+}1$ \textsc{VoteCP}s in time. It then coverts the votes to a \texttt{vc\_QC} with a threshold of $2f{+}1$ and declares leadership. During this process, the candidate may find itself out-of-sync if it discovers a leader operating in a higher view; it will abort the election and transition back to a follower (Line~\ref{algo:ascandy:gobackaf}).

On the other hand, if the candidate neither becomes a leader nor transitions back to a follower when its timer expires, split votes may have occurred, where multiple candidates campaigning for the same view and collecting partial votes (because of \ref{criterion:1}), similar to Raft's split votes~\cite{ongaro2014search}. In this case, the candidate transitions back to a redeemer with incremented view $V'{+}1$ (Line~\ref{algo:red:start}) and starts a new campaign. It is worth noting that this situation is extremely rare, especially with randomized timers (in~\S\ref{sec:asfollower}). Our evaluation in~\S\ref{sec:evaluation} shows that split votes never occurred in 10,000 view changes with just $50$ms of randomization.

\subsubsection{As a Leader} \label{sec:asleader}

After becoming a leader, $S_i$ prepares a new \texttt{VcBlock} with the parameters shown in Figure~\ref{fig:blocks}. It inherits the old view's \texttt{vcBlock} (view $V$) with its updated $rp$ and $ci$ (Line~\ref{algo:al:2}) and broadcasts the new \texttt{vcBlock}. \textbf{Note that only the elected leader may have a change in its $rp$ in view changes; unsuccessful view change attempts will not result in an $rp$ change.} 

{\footnotesize
\algrenewcommand\algorithmicfor{\textbf{upon}}
\begin{algorithmic}
\Procedure{Receiving}{\texttt{newVcBlock}} \label{algo:af:pro4} \Comment{\texttt{vcBlock} sent by the leader}
    \State \textbf{validate} \texttt{newVcBlock} and \textbf{compare} with \texttt{myVcBlock} \label{algo:af:rvcb:compare} 

    \State \textbf{send} \textsc{vcYes} $\leftarrow$ $\langle \texttt{newVcBlock.v}, \sigma_{S_i} \rangle$
    
    \State \textbf{stop} \texttt{myVcBlock}.v and \textbf{start} 
    operating in \texttt{newVcBlock}.v as follower  
\EndProcedure
\end{algorithmic}
}

The other non-leader servers (i.e., followers, redeemers, and candidates), follow the above steps to verify \texttt{newVcBlock}. They validate the $QC$s and compare the reputation segment of \texttt{newVcBlock} with that of \texttt{myVcBlock} (i.e., the current \texttt{vcBlock} of $V$). If the only change is the leader's $rp$ and $ci$, servers adopt \texttt{newVcBlock} and send a \textsc{vcYes} message to the leader. When the leader collects $2f{+}1$ \textsc{vcYes}s, the consensus for \wct<1> \emph{deciding a new leader} for the new view and \wct<2> \emph{updating the new leader's reputation penalty} has been achieved; then, normal operation resumes under the new leadership.

\textbf{A note on using Proof-of-Work (PoW).} 
In \algo, \textbf{PoW is never involved in replication.} It is only used as an implementation of the reputation penalty to reduce the probability of electing suspected faulty servers in view changes. Alternatively, Verifiable Delay Functions (VDF)~\cite{boneh2018verifiable} can also be used to delay high penalty servers for participating view changes. Our use of PoW imposes a computational cost on attackers without overburdening correct servers (``let bad guys pay''). Our evaluation shows that the cost for correct servers is negligible (less than 20~ms for $rp{<}5$) but becomes significantly higher for attackers (hours for $rp{>}8$). While VDF is more environmentally friendly, PoW is an efficient and economic deterrent to malicious attacks.

\subsubsection{Refresh penalties}
\label{sec:optimizations}
In the partial synchrony model (\S~\ref{sec:algo:system-model}), when GST is sufficiently long, it may trigger timeouts on non-faulty servers. This may cause non-faulty leaders to get penalized in the long run.
\algo allows a refresh on imposed $rp$ when at least $f{+}1$ non-faulty servers get penalized with their $rp$ exceeding a \emph{threshold} ($\pi$). The refresh process for a server $S_i$ is as follows.

\begin{enumerate}
    \item $S_i$ broadcasts a \textsc{Ref} message: $\langle \textsc{Ref}, V, \sigma_{S_i} \rangle$.
    
    \item Upon receiving $2f+1$ \textsc{Ref}s from different servers (including itself), $S_i$ converts them to a \texttt{rs\_QC} and set its $rp$ and $ci$ to the initial values. It then broadcasts a \textsc{Rdone} message: $\langle \textsc{Rdone}, \texttt{rs\_QC}, V, rp, ci, \sigma_{S_i} \rangle$.
    
    \item After receiving a \textsc{Rdone} message from $S_i$, the other servers verify \texttt{rs\_QC} and update $S_i$'s $rp$ and $ci$ in the current \texttt{VcBlock}. 
\end{enumerate}

This refresh mechanism ensures that when a server refreshes its penalty, there have been at least $f+1$ correct servers (in \texttt{rs\_QC}) whose $rp$ has exceeded $\pi$. The refresh sets both $rp$ and $ci$ to their initial values, relieving the imposed potential computational work as well as refreshing the compensation of $\delta_{tx}$ and $\delta_{vc}$ for future $rp$ calculations.

\subsection{The replication protocol}
\label{sec:algo:replication}

With Properties~\ref{vc:p1}~\ref{vc:property:or} and~\ref{vc:property:verify} in the view-change protocol, \algo can achieve consensus with optimistic responsiveness~\cite{pass2018thunderella} using a standard two-phase replication protocol, which are the ordering phase and commit phase. In replication, \textit{servers never respond to a leader that has a lower view.} A replication consensus instance works as follows.

\noindent \textbf{Invoking a consensus service (1 round):}
A client broadcasts a proposal $\langle \textsc{Prop}, t, d, c, \sigma_c, tx\rangle$ to all servers, including a unique timestamp ($t$), a transaction ($tx$), the digest of the transaction ($d$), its ID ($c$), and its signature ($\sigma_c$) that signs $t$, $d$, and $c$. It then waits for this proposal to be committed. 

\noindent \textbf{Phase 1: constructing \texttt{ordering\_QC} (2 rounds):}

\begin{itemize}
\item The leader starts a consensus instance for $tx$ when \wct<1> it receives a \textsc{Prop} message from a client, or \wct<2> $f+1$ \textsc{Compt} messages from different servers. The leader then assigns a unique sequence number $n$ to \textsc{Prop} and broadcasts an ordering message: $\langle \textsc{Ord}, \langle \textsc{Prop} \rangle, n, V, \sigma_{S_i} \rangle$.

\item Followers verify the received \textsc{Ord} message by checking that $n$ has not been used. Then, they send a reply to the leader with their signatures.

\item The leader waits for $2f+1$ replies and converts them (of size $\mathcal{O}(n)$ in total) to a threshold signature (of size $\mathcal{O}(1)$), which forms the $QC$ of \texttt{ordering\_QC}.
\end{itemize}

\noindent \textbf{Phase 2: constructing \texttt{commit\_QC} (3 rounds)}

\begin{itemize}
\item The leader then broadcasts a \textsc{Cmt} message with the obtained $QC$: $\langle \textsc{Cmt}, \texttt{ordering\_QC}, V, \sigma_{S_i} \rangle$

\item Followers verify \texttt{ordering\_QC}'s threshold and then send replies to the leader with their signatures.

\item The leader waits for $2f{+}1$ replies to form \texttt{commit\_QC} and prepares a \texttt{txBlock} (shown in Figure~\ref{fig:blocks}) by setting the block agreement and transaction fragments accordingly. Then, it broadcasts  \texttt{txBlock} and sends a \textsc{Notif} message to the client.
\end{itemize}

\subsubsection*{Terminating consensus instance (1 round)}

\begin{itemize}
    \item Followers verify the received \texttt{txBlock} and then send a \textsc{Notif} message to the client.
    
    \item If the client can receive $f+1$ \textsc{Notif}s before its timer expires, it considers $tx$ committed. Otherwise, it complains to the servers (in \S\ref{sec:algo:active-vc}).
\end{itemize}

The replication protocol has a message complexity of $O(n)$ and a time complexity of $7$ (rounds). \algo achieves optimistic responsiveness (OR) using a two-phase protocol, which is more efficient than HotStuff~\cite{yin2019hotstuff}'s three-phase protocol. HotStuff's additional phase is to sync up non-faulty servers about the commit result, as its passive VC protocol blindly rotates leadership. In contrast, \algo's active VC protocol allows servers to elect the most up-to-date candidate, leading to high performance due to the reduced messages and rounds needed (see~\S\ref{sec:evaluation}).

\section{Correctness argument}
\label{sec:correctnes}

This section sketches the correctness arguments including safety and liveness. We denote the assumed computation bound of faulty servers in our system model as $\gamma$ (\S\ref{sec:algo:system-model}).

\begin{theorem}[Validity]
In each consensus instance, if all servers have received the same $tx$, then any $tx$ committed by a non-faulty server must be that common $tx$.
\end{theorem}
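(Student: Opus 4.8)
The plan is to trace the provenance of any committed transaction backwards through the two quorum certificates that the replication protocol constructs, exploiting the fact that every $QC$ of threshold $2f{+}1$ necessarily carries the signature of at least $f{+}1$ non-faulty servers (since $n{=}3f{+}1$ and at most $f$ servers are faulty), together with the unforgeability of the client's signature $\sigma_c$ assumed in the threat model.

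First I would argue that a non-faulty server commits $tx'$ in a consensus instance only after validating a \texttt{txBlock} whose \texttt{commit\_QC} is a threshold signature of $2f{+}1$ servers over the \textsc{Cmt} message carrying that instance's \texttt{ordering\_QC}; by the counting argument above and by unforgeability, at least one genuine non-faulty signature is present, so some non-faulty server verified the \textsc{Cmt}, which in turn requires a valid \texttt{ordering\_QC}. Applying the same counting argument one level down, the \texttt{ordering\_QC} is a threshold signature of $2f{+}1$ servers — again including at least $f{+}1$ non-faulty ones — over the \textsc{Ord} message, which embeds the client's proposal $\langle \textsc{Prop}, t, d, c, \sigma_c, tx' \rangle$. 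Thus at least one non-faulty server $S_j$ verified that \textsc{Ord} before signing: it checked the well-formedness of the proposal (valid $\sigma_c$ over $(t,d,c)$ and $d$ equal to the digest of $tx'$) and bound this instance to that proposal.

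Next I would close the argument: because faulty servers are computationally bounded and cannot forge $\sigma_c$, the proposal certified by \texttt{ordering\_QC}, and hence the transaction $tx'$ it contains, must be a transaction the client actually issued, and it is exactly the transaction $S_j$ received for this instance. By hypothesis every non-faulty server — in particular $S_j$ — received the same $tx$ in this instance, so $tx' = tx$. Since this reasoning applies to any transaction committed by any non-faulty server, the claim follows.

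The main obstacle will be making the provenance chain airtight, namely arguing that what the two $QC$s certify is the transaction itself (tying $tx'$ to the digest $d$ inside the client-signed $\langle\textsc{Prop}\rangle$) and that the instance a non-faulty follower signs off on is the same instance for which "all servers received the same $tx$" is assumed — i.e., ruling out a faulty leader repackaging a different (possibly still validly client-signed, in the faulty-client case) transaction under a reused sequence number. This is handled by the follower's checks that $n$ has not already been used and that the enclosed \textsc{Prop} matches what it received, so no \texttt{ordering\_QC} for a transaction other than the common $tx$ can be assembled with the participation of $f{+}1$ non-faulty servers.
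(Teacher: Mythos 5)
Your proposal is correct and follows essentially the same route as the paper's proof: trace a committed transaction back from its \texttt{commit\_QC} to the underlying \texttt{ordering\_QC}, then use the $2f{+}1$ quorum size to conclude the committed value is the common $tx$ seen by the signers. You simply flesh out the sketch with the signature-unforgeability and digest-binding details that the paper leaves implicit.
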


\begin{proof}
Each committed $tx$ must have been endorsed by a \texttt{commit\_QC}. A server that signs in \texttt{commit\_QC} must have verified a corresponding \texttt{ordering\_QC}. Since \texttt{ordering\_QC} is signed by $2f{+}1$ servers, a committed $tx$ must be the common value that has been seen by at least $2f{+}1$ servers.
\end{proof}

\begin{lemma}
\label{lemma:f+1uptodate}
At least $f+1$ non-faulty servers are up-to-date.
\end{lemma}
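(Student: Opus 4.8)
The plan is to argue by tracing the quorum certificates that witness the latest committed state. Observe that whenever a \texttt{txBlock} (or \texttt{vcBlock}) is finalized, it carries a $QC$ of threshold $2f{+}1$; the corresponding $2f{+}1$ signers must have locally stored that block (a server only signs \textsc{vcYes}/\textsc{Notif}/\textsc{Cmt}-reply after validating and adopting the block). Among any set of $2f{+}1$ servers, at most $f$ can be faulty, so at least $f{+}1$ of those signers are non-faulty. Hence for the \emph{most recent} committed block, there are $f{+}1$ non-faulty servers that hold it — i.e., that are up-to-date on the highest sequence number (for \texttt{txBlock}s) and highest view (for \texttt{vcBlock}s). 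This is essentially the standard quorum-intersection counting argument specialized to \algo's two block types.

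First I would make precise what ``up-to-date'' means here: a server is up-to-date if its latest stored \texttt{txBlock} has the globally highest sequence number $n$ and its latest stored \texttt{vcBlock} has the globally highest view $V$ — exactly the state the VC protocol's criterion \ref{criterion:3} and the \textsc{SyncUp} routine are designed to propagate. Second, I would invoke the fact that any committed/decided block is endorsed by a $2f{+}1$-threshold signature (from the construction of \texttt{commit\_QC} in \S\ref{sec:algo:replication} and \texttt{vc\_QC}/\texttt{vcBlock} adoption in \S\ref{sec:asleader}), and that a non-faulty server produces its partial signature on such a block only after storing it. Third, I would apply the pigeonhole bound $|$signers$| = 2f{+}1 > f \geq |$faulty$|$ to conclude at least $f{+}1$ non-faulty signers, and observe these are precisely up-to-date servers, since no later block can exist without its own $2f{+}1$ quorum, which would intersect this set of non-faulty holders in a way consistent with them being current after \textsc{SyncUp} during the next consensus step.

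I would also note the base case (initialization): at $V1$ with the genesis \texttt{vcBlock} and \texttt{txBlock}, all $n \geq 2f{+}1$ non-faulty servers trivially hold the same initial state, so the invariant holds, and each successful replication or view-change round preserves it by the argument above.

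The main obstacle I anticipate is handling the transient window during a view change or a \textsc{SyncUp}, where a non-faulty server may momentarily lag: I must argue that ``up-to-date'' is about the set of servers holding the latest \emph{finalized} block (which always exists once a quorum has signed), not about every non-faulty server at every instant, and that faulty servers cannot prevent the $f{+}1$ non-faulty signers of that last quorum from retaining it — which follows from the system-model assumption (\S\ref{sec:algo:system-model}) that faulty servers cannot alter the state transitions of non-faulty servers. Care is also needed to state the lemma uniformly across \texttt{txBlock}s and \texttt{vcBlock}s, since replication and view changes each advance one of the two; I would phrase the quorum argument generically over ``the last block in either log.''
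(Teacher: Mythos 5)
Your proposal is correct and uses essentially the same argument as the paper: both rest on the observation that \texttt{commit\_QC} and \texttt{vc\_QC} have threshold $2f{+}1$, of which at most $f$ signers can be faulty, leaving at least $f{+}1$ non-faulty servers that hold the latest committed state. The extra elaborations you add (precise definition of up-to-date, base case, transient sync windows) are fine but not needed beyond the paper's counting argument.
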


\begin{proof}
Lemma~\ref{lemma:f+1uptodate} is straightforward. Both \texttt{vc\_QC} in \texttt{vcBlock}s and \texttt{commit\_QC} in \texttt{txBlock}s have a threshold of $2f+1$, which can include up to $f$ faulty servers. Therefore, there are at least $f+1$ non-faulty servers included in these $QC$s and thus are up-to-date in replication in the highest view.
\end{proof}

\begin{lemma}
\label{lemma:oneleader}
In any given view change, at least $f+1$ non-faulty servers are eligible for being elected as the leader. 
\end{lemma}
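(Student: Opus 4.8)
The plan is to show that the only servers excluded from being electable as leader in a view change are those that are detectably stale or Byzantine, and that these number at most $f$. First I would invoke Lemma~\ref{lemma:f+1uptodate}: there exists a set $\mathcal{U}$ of at least $f+1$ non-faulty servers that are up-to-date in replication (latest \texttt{txBlock}) and in view changes (latest \texttt{vcBlock}) for the highest view. The claim is that every server in $\mathcal{U}$ can, in principle, win the election — i.e., it can satisfy all five voting criteria \ref{criterion:1}--\ref{criterion:5} from the perspective of the $\geq 2f+1$ non-faulty voters.

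The key steps, in order: (i) Argue that a server $S\in\mathcal{U}$ that legitimately becomes a redeemer does so with a valid \texttt{conf\_QC} of threshold $f+1$ (guaranteed because a correct client broadcasts its complaint to all servers, so $\geq 2f+1$ correct servers relay it, so any confirmed view change carries a genuine $f+1$-certificate), which discharges \ref{criterion:2}. (ii) Since $S$ is up-to-date, for any non-faulty voter $S_j$ the check ``$ti \geq$ myTxBlock.n'' holds (\ref{criterion:3}), and after the voter runs \textsc{SyncUp} on its \texttt{vcBlock}s it will agree with $S$'s view, so \ref{criterion:3} and the implicit view-alignment succeed. (iii) Because $S$ runs \textsc{CalcRP} with the same deterministic inputs (the common latest \texttt{vcBlock} and \texttt{txBlock}) that each synced non-faulty voter uses, the voter recomputes exactly $S$'s $rp'$ and $ci'$, so \ref{criterion:4} holds. (iv) Since $S$ is non-faulty it honestly performs the hash puzzle until $hr$ has $rp'$ leading zero bytes; this terminates (it is a bounded search, and for a correct server $rp'$ is small since its view increments by one), so the voter's single re-hash check passes and \ref{criterion:5} holds. (v) \ref{criterion:1} is not an obstacle to $S$ itself — it concerns whether voters have already voted; since we only need the existence of enough willing voters, and at most $f$ faulty servers plus possibly one competing candidate are unavailable as voters, the remaining non-faulty followers suffice to reach $2f+1$ votes. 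Conclude: each of the $\geq f+1$ servers in $\mathcal{U}$ is a viable leader, so at least $f+1$ non-faulty servers are eligible.

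The main obstacle I expect is step (iv) combined with the quantifier structure of ``eligible.'' I need to be careful about what ``eligible for being elected'' means: it cannot mean ``is guaranteed to be elected'' (only one can win per view by \ref{vc:p1}), so it must mean ``could be elected by the protocol with no non-faulty server violating its rules.'' Making this precise requires checking that a non-faulty server in $\mathcal{U}$ genuinely satisfies every criterion \emph{as seen by the synced non-faulty voters} — the subtle points being that \textsc{SyncUp} actually brings stale non-faulty voters into agreement on both the \texttt{vcBlock} chain and the \texttt{txBlock} chain (so that their \textsc{CalcRP} reproduces the candidate's value bit-for-bit, including the historic penalty set $\mathcal{P}$ assembled by walking \texttt{preVcBlock} pointers), and that the PoW search for a correct server's small $rp$ terminates well within the election timer. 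A secondary subtlety is ensuring the $f+1$ up-to-date non-faulty servers from Lemma~\ref{lemma:f+1uptodate} are the same ones that can legitimately hold a \texttt{conf\_QC}; this follows because a confirmed view change is broadcast-confirmed, so any up-to-date non-faulty server either already holds the \texttt{conf\_QC} or can obtain it via sync-up along with the blocks.
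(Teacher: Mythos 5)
Your proposal is correct and follows essentially the same route as the paper: it invokes Lemma~\ref{lemma:f+1uptodate} to obtain the $f{+}1$ up-to-date non-faulty servers and then argues that stale non-faulty voters can \textsc{SyncUp} and thus validly vote for any of them, which is exactly the paper's argument. Your version merely spells out the criteria \ref{criterion:1}--\ref{criterion:5} (determinism of \textsc{CalcRP}, termination of the small-$rp$ hash search, existence of the \texttt{conf\_QC}) that the paper leaves implicit, so no substantive difference.
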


\begin{proof}
We denote the set of non-faulty and up-to-date servers in Lemma~\ref{lemma:f+1uptodate} as $\mathcal{S}_1$, and non-faulty but stale servers as $\mathcal{S}_2$ ($|\mathcal{S}_1|+|\mathcal{S}_2|=2f+1$). In the worst case, $|\mathcal{S}_1|=f+1$ and $|\mathcal{S}_2|=f$. Since $\mathcal{S}_2$ can always sync up to a more up-to-date candidate (in \S\ref{sec:ascandidate}), $\forall S_i \in \mathcal{S}_2$ can vote for $\forall S_j \in \mathcal{S}_1$. Thus, $\forall S_j \in \mathcal{S}_1$ are eligible for an election.
\end{proof}

The major difference between the passive and active VC protocols is that faulty servers can actively campaign for leadership and replace a correct leader. In the passive protocol, $f$ faulty servers cannot usurp leadership from a correct leader when they are not the scheduled leader. In the active protocol, all servers can campaign for leadership, which gives faulty servers the opportunity to replace a correct leader. The orchestration of \algo's voting-based leader election and reputation mechanisms makes great effort to suppress Byzantine servers from being elected and reduces the possibility of Byzantine leaders over time. 

\begin{lemma}
\label{lemma:repossess}
Faulty servers cannot repossess leadership indefinitely without making progress in replication.
\end{lemma}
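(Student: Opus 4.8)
The plan is to show that any strategy by which a faulty server $S_i$ repeatedly regains leadership while committing only a bounded number of \texttt{txBlock}s forces its reputation penalty $rp$ to grow without bound, and that once $rp$ grows past the point where the hash-puzzle work exceeds what a $\gamma$-bounded adversary can perform within a single view's timeout window, $S_i$ can no longer complete the redeemer-to-candidate transition in time — hence it stops being elected. The argument is therefore a combination of (i) a monotonicity claim about $rp$ under ``leadership repossession with bounded replication'' and (ii) a quantitative claim that unbounded $rp$ eventually defeats the computational budget $\gamma$.

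First I would isolate the behavior being ruled out: $S_i$ is elected as leader in an infinite sequence of views, but across all of those leaderships the number of \texttt{txBlock}s it commits is bounded by some constant $M$. Since $\delta_{tx} = (\texttt{ti}-\texttt{ci})/\texttt{ti}$ and $\texttt{ci}$ is updated to $\texttt{ti}$ each time compensation is granted (by the \texttt{CalcRP} bookkeeping and Line~\ref{re:ci}), a bounded total of committed blocks means that after finitely many of these elections the quantity $\texttt{ti}-\texttt{ci}$ is eventually $0$ (no new blocks since the last compensation), so $\delta_{tx}=0$ from some point on. By Eq.~\ref{eq:decrease-penalty}, $\delta = rp^{(V')}_{temp}\,C_\delta\,\delta_{tx}\,\delta_{vc} = 0$, hence $\floor{\delta}=0$, so $rp^{(V')}=rp^{(V')}_{temp}$ with \emph{no} compensation. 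Meanwhile Eq.~\ref{eq:increase-penalty} gives $rp^{(V')}_{temp}=rp^{(V)}+(V'-V)\ge rp^{(V)}+1$ whenever $S_i$ actually advances a view to campaign. Therefore $rp$ strictly increases by at least one on each such election and, over infinitely many of them, $rp\to\infty$. (I would also note the only escape hatch — the penalty refresh of \S\ref{sec:optimizations} — requires a \texttt{rs\_QC} of $2f{+}1$ \textsc{Ref}s, i.e.\ at least $f{+}1$ \emph{non-faulty} servers must have $rp>\pi$; $S_i$ and its $\le f{-}1$ colluders cannot manufacture this, so a purely adversarial repossession loop cannot trigger a refresh.)

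Next I would close the loop with the computational bound. To become a candidate, a redeemer must find a nonce $nc$ with $\textsc{Hash}(\texttt{txBlock},nc)$ having a prefix of $rp$ zero bytes (Lines~\ref{algo:ar:powstarts}--\ref{algo:ar:powends}), and by \ref{criterion:5} / \ref{vc:property:verify} voters reject any campaign whose $hr$ does not carry exactly $rp'$ leading zeros, where $rp'$ is the value they independently recompute via \textsc{CalcRP}; so $S_i$ cannot cheat on the puzzle. The expected work to solve it is $256^{rp}$ hash evaluations, which grows unboundedly in $rp$. Since the candidate must broadcast \textsc{Camp} and collect $2f{+}1$ \textsc{VoteCP}s before voters' (bounded, $\Theta(\Delta)$-scale) timers expire — and, crucially, before any \emph{correct} server's timeout elapses and a correct, up-to-date server (which exists and is election-eligible by Lemmas~\ref{lemma:f+1uptodate} and~\ref{lemma:oneleader}) itself becomes a redeemer with near-zero $rp$ and wins the next view — there is a fixed wall-clock budget per attempt. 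A $\gamma$-bounded adversary performs at most $\gamma$ hashes in that budget, so once $rp$ exceeds $\log_{256}\gamma$ (which happens after finitely many forced increments), $S_i$ fails to produce a valid candidate message in time, loses the view to a correct server, and is no longer repossessing leadership — contradicting the assumed infinite sequence.

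The main obstacle, and the step I would spend the most care on, is the second half: making precise the ``fixed wall-clock budget per attempt'' so that unbounded $rp$ really does imply eventual failure. One has to argue that correct servers' view-change timers, once GST has passed, reliably fire and carry a correct up-to-date redeemer through its (trivial, small-$rp$) puzzle and a successful election \emph{before} the high-$rp$ faulty redeemer finishes, and that the refresh mechanism cannot be abused by the adversary to reset $rp$ — both of which I sketched above but which deserve a careful statement. A secondary subtlety is handling the case where $S_i$ keeps committing \emph{some} blocks (so $\delta_{tx}>0$ intermittently): here I would observe that $\delta_{tx}<1$ unless $\texttt{ci}=0$, that $\delta=\floor{rp^{(V')}_{temp}\delta_{tx}\delta_{vc}}$ with $\delta_{vc}<1$ strictly, so the deduction is a strict fraction of $rp^{(V')}_{temp}$ and cannot fully cancel the $+(V'-V)$ increase unless replication genuinely keeps pace (i.e.\ $\texttt{ti}$ grows), which is exactly ``making progress in replication'' — the behavior the lemma permits. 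This is where the reputation design does the real work, and I would present it as the crux rather than the timing estimate.
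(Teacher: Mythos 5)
Your proposal is correct and follows essentially the same route as the paper's own proof: bounded replication progress forces $\delta_{tx}=0$ (since $\texttt{ci}$ catches up to $\texttt{ti}$ after compensation), so no deduction can offset the penalization of Eq.~\ref{eq:increase-penalty}, $rp$ grows without bound, and once the hash-puzzle work exceeds the computational bound $\gamma$ the server can no longer transition from redeemer to candidate and hence is never elected. Your additional care about the refresh mechanism (requiring $f{+}1$ non-faulty servers above $\pi$, hence not abusable), the intermittent case $0<\delta_{tx}<1$, and the wall-clock budget per attempt fills in details the paper's sketch leaves implicit, but does not change the argument's structure.
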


\begin{proof}
If a faulty server does not make progress in replication, it cannot get compensated as its $\delta_{tx}$ remains $0$. Thus, its $rp$ keeps increasing. In addition, if a faulty server makes limited replication and stops, after it gets compensated, its $ci = ti$. If it no longer make replication progress, its $\delta_{tx}=0$ and its $rp$ keeps increasing. When the required computational work exceeds the faulty server's computation capability $\gamma$, the faulty server cannot transition to a candidate and thus will never be elected. 
\end{proof}

\begin{theorem}[Liveness] \label{theorem:liveness:paper}
After GST, a non-faulty server eventually commits a proposed client request.
\end{theorem}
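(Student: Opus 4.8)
The plan is to argue in two stages: first, that after GST the system cannot be stuck in view changes forever — some non-faulty, up-to-date server eventually becomes leader and holds leadership long enough; second, that once such a leader is in place, the two-phase replication protocol commits the client's request. For the first stage I would combine Lemmas~\ref{lemma:f+1uptodate}, \ref{lemma:oneleader}, and \ref{lemma:repossess}. Lemma~\ref{lemma:repossess} rules out the Byzantine repeated-view-change attack: a faulty server that repossesses leadership without replicating sees its $rp$ grow without bound, so eventually the hash-puzzle work required exceeds its computational bound $\gamma$ and it can no longer reach the candidate state. Hence after finitely many views, every faulty server is either replicating honestly or unable to campaign, and by Lemma~\ref{lemma:oneleader} at least $f{+}1$ non-faulty up-to-date servers remain eligible. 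I would then invoke the randomized-timeout argument from \S\ref{sec:asfollower}–\S\ref{sec:ascandidate} (as in Raft): with randomized timers, with probability~1 some non-faulty up-to-date candidate eventually wins $2f{+}1$ \textsc{VoteCP}s before a competing timeout fires, avoiding split votes, and becomes leader.

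Next I would show this non-faulty leader is not immediately deposed. A correct follower only starts a new view change after a \textsc{Compt} whose \texttt{tx} is not committed before its timer expires, and only if $f{+}1$ servers confirm via \textsc{ReVC} (Line~\ref{algo:af:revc}); since timeouts are chosen sufficiently larger than $\Delta$ and the leader is up-to-date (Property~\ref{vc:property:or}, via Lemma~\ref{lemma:f+1uptodate}), after GST the leader completes the $7$-round replication instance within one timeout period, so no correct server confirms a view change. Faulty servers alone cannot muster the $f{+}1$ \textsc{ReVC} confirmations, nor — being computationally bounded and, by the first stage, now high-$rp$ — can they produce a valid \textsc{Camp} with matching PoW to out-elect the correct leader. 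The refresh mechanism (\S\ref{sec:optimizations}) handles the residual case where a long GST pushed $f{+}1$ correct servers' penalties past $\pi$: they reset $rp$ and $ci$, so a correct leader is never permanently locked out by accumulated benign penalties.

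Finally, the second stage is the easy part: with a stable non-faulty leader and all correct servers in view $V'$ after GST, a client's \textsc{Prop} (rebroadcast if needed) reaches the leader, which assigns $n$, collects $2f{+}1$ replies for \texttt{ordering\_QC}, then $2f{+}1$ for \texttt{commit\_QC}, produces the \texttt{txBlock}, and the client gathers $f{+}1$ \textsc{Notif}s — all within bounded time since each round is $\le\Delta$. I would also note monotonic view numbers and the "never respond to a lower view" rule prevent stale leaders from interfering.

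The main obstacle is the first stage: formally bounding the number of view changes before the adversary is neutralized. One must argue that \emph{every} strategy available to the $f$ faulty servers — stalling as leader, partial replication then stopping, colluding to trigger confirmations, exploiting sync-up, or gaming $\delta_{tx}/\delta_{vc}$ — either makes honest progress or drives $rp$ monotonically upward toward the point where the required PoW exceeds $\gamma$; Lemma~\ref{lemma:repossess} asserts this but a fully rigorous liveness proof needs the quantitative claim that this happens in finitely many views, together with care that the faulty set may change dynamically (a faulty server "becoming correct" has a high $rp$ it must still redeem, which is conservative and hence fine). I would treat the probabilistic split-vote avoidance as a standard Raft-style argument and relegate the detailed case analysis to the appendix referenced as \S\ref{sec:ap:correctness}.
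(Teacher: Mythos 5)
Your proposal is correct to the same degree of rigor as the paper's own sketch, and it rests on the same two pillars---Lemma~\ref{lemma:repossess} to neutralize repeated view-change attacks (the required work eventually exceeds $\gamma$, so a non-replicating faulty server can no longer reach the candidate state) and Lemma~\ref{lemma:oneleader} to guarantee that an up-to-date non-faulty candidate is always electable---but your decomposition is genuinely different from the paper's. The paper argues by a blunt case split on who holds leadership: if faulty servers keep control, Lemma~\ref{lemma:repossess} forces them to eventually conduct replication, at which point they act as ``temporary non-faulty leaders'' and the client's request commits under them; if they release leadership, Lemma~\ref{lemma:oneleader} yields a correct, up-to-date leader. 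You instead prove the stronger statement that a stable correct leader is eventually installed and retained, which requires the extra ingredients you supply: the leader-stability argument (after GST no correct server confirms a view change under a correct, up-to-date leader, and faulty servers alone cannot assemble the $f{+}1$ \textsc{ReVC} confirmations), the randomized-timeout split-vote argument, and the refresh mechanism for benign penalty accumulation. The paper does cover these points, but outside the liveness proof proper---leader stability is its appendix ``leadership robustness'' theorem, non-penalization of losing candidates is Lemma~\ref{lemma:unsuccessfulattempts}, and refresh enters only the appendix version of the liveness proof---so your route buys a more self-contained and stronger conclusion at the cost of more machinery, while the paper's buys brevity by accepting that progress may first occur under a faulty-but-replicating leader. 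Your closing caveat is well placed: the quantitative claim that faulty servers are neutralized within finitely many views is asserted rather than derived in the paper too, so the gap you flag is one the paper shares rather than one you introduce.
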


\begin{proof}[Proof (sketch)]
In any given time, leadership is in one of the following two conditions: \wct<1> leadership is controlled by $f$ faulty servers, or \wct<2> leadership is released by $f$ faulty servers.

In \wct<1>, with Lemma~\ref{lemma:repossess}, faulty leaders must at some point start to conduct replication. Otherwise, they cannot  control the leadership indefinitely. When they start to conduct replication, they become temporary non-faulty leaders.

In \wct<2>, with Lemma~\ref{lemma:oneleader}, a leader will eventually be elected from up-to-date and non-faulty servers. Thus, after GST, a client request will eventually be committed by all non-faulty servers. Therefore, in both cases, \algo ensures that a client eventually receives replies to its request after GST.
\end{proof}

\begin{theorem}[Safety]
\label{theorem:safety}
Non-faulty servers do not decide on conflicting blocks. That is, non-faulty servers do not commit two \texttt{txBlock}s at the same sequence number $n$.
\end{theorem}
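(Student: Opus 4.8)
The plan is to argue by contradiction using a standard quorum-intersection argument, adapted to \algo's two-phase replication and the fact that view changes never rewind committed \texttt{txBlock}s. First I would suppose two non-faulty servers commit \texttt{txBlock}s $B$ and $B'$ at the same sequence number $n$, with $B \neq B'$. A committed \texttt{txBlock} at $n$ must carry a \texttt{commit\_QC}, which by construction requires a \texttt{ordering\_QC} on the same $\langle\textsc{Prop}, n, V\rangle$, and each $QC$ is a threshold signature of $2f{+}1$ servers.

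The argument splits on whether $B$ and $B'$ were committed in the same view. \emph{Same view $V$:} the two \texttt{ordering\_QC}s for $B$ and $B'$ are each signed by $2f{+}1$ servers, so their signer sets intersect in at least $2f{+}1 + 2f{+}1 - n = f{+}1$ servers, hence in at least one non-faulty server. But by Property~\ref{vc:p1} there is at most one leader in view $V$, and a non-faulty follower verifies that $n$ has not been used before replying to an \textsc{Ord} message (the replication protocol in~\S\ref{sec:algo:replication}); so a non-faulty server never signs two \texttt{ordering\_QC}s at the same $n$ in the same view — contradiction. \emph{Different views $V < V'$:} here I would show that any leader of a view $V' > V$ is forced to respect the commit made in $V$. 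By Lemma~\ref{lemma:f+1uptodate} at least $f{+}1$ non-faulty servers are up-to-date, and criterion~\ref{criterion:3} together with \textsc{SyncUp} forces any candidate that is elected in $V'$ to have a log at least as up-to-date as every voter's; since a \texttt{commit\_QC} at $n$ in $V$ was witnessed by $2f{+}1$ servers, at least $f{+}1$ non-faulty servers hold $B$ at sequence number $n$, and the $2f{+}1$-threshold \texttt{vc\_QC} electing the new leader must intersect this set — so the new leader already holds $B$ at $n$ and will not re-propose a conflicting block there. Iterating this across the chain of views from $V$ up to $V'$ (each view change preserving the prefix, by the same \textsc{SyncUp}/quorum-intersection step) yields $B' = B$, a contradiction.

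I would present the cross-view step as an induction on $V' - V$: the base case is the single-view argument above, and the inductive step uses that the leader of view $k{+}1$ is elected by a \texttt{vc\_QC} intersecting the $f{+}1$ non-faulty holders of the committed prefix up to view $k$, so it carries that prefix forward; then no non-faulty server in view $k{+}1$ will sign an \texttt{ordering\_QC} that conflicts with it at sequence number $n$.

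The main obstacle I anticipate is making the cross-view ``the new leader already holds $B$ at $n$'' step airtight: I need that a candidate elected in $V'$ truly cannot hide a shorter log — i.e., that criterion~\ref{criterion:3} is checked by \emph{enough} voters that at least one non-faulty voter who holds $B$ at $n$ must have checked it, and that \textsc{SyncUp} validates acquired blocks via their $QC$s so the candidate cannot be fed a forged alternative prefix. This hinges on the quorum-intersection count $2f{+}1 + 2f{+}1 - n = f{+}1 \geq 1$ and on the computational-boundedness assumption $\gamma$ (faulty servers cannot forge the threshold signatures in a \texttt{commit\_QC} or \texttt{vc\_QC}), both of which are given in~\S\ref{sec:algo:system-model}; the care is in chaining it through an unbounded number of intervening (possibly faulty-led) views without losing the invariant.
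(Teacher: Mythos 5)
Your proposal is correct and takes essentially the same route as the paper: a proof by contradiction via quorum intersection on the $2f{+}1$-sized certificates, combined with Property~\ref{vc:p1} and the view-change up-to-dateness guarantee (criterion~\ref{criterion:3}, Property~\ref{vc:property:or}) to rule out a conflicting commit surviving into later views. The paper's version is simply more compressed---it intersects the two \texttt{commit\_QC}s directly using the $\mathcal{S}_1$/$\mathcal{S}_2$/$\mathcal{S}_f$ partition and, in the appendix, appeals to the elected leader being up-to-date (so it never reuses a sequence number) where you spell out the same cross-view step as an explicit induction with a \texttt{vc\_QC} intersection.
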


\begin{proof}[Proof (sketch)]
With Property~\ref{vc:p1}, no view has more than one leader. Next, we prove this theorem by contradiction. We use the partition of servers in Lemma~\ref{lemma:oneleader} and denote faulty servers as $\mathcal{S}_f$. We claim there are \texttt{txBlock} and $\texttt{txBlock}_{\diamond}$, both committed with sequence number $n$. 

In this case, \texttt{commit\_QC} and $\texttt{commit\_QC}_{\diamond}$ are both signed by $2f+1$ servers. Say \texttt{commit\_QC} is signed by servers in $\mathcal{S}_1 \cup \mathcal{S}_f$. Then, servers in $\mathcal{S}_1$ cannot sign $\texttt{commit\_QC}_{\diamond}$ with $n$. Although faulty servers in $\mathcal{S}_f$ can double commit, $\texttt{commit\_QC}_{\diamond}$ can only find servers in $\mathcal{S}_f \cup \mathcal{S}_2$ ($|\mathcal{S}_f|{+}|\mathcal{S}_2|{=}2f$) to sign it, which is not sufficient to form a $QC$ of size $2f+1$. Therefore, $\texttt{commit\_QC}_{\diamond}$ cannot be formed, which contradicts our claim.
\end{proof}

Due to space limitations, here we provided only proof sketches and refer readers to the Appendix, where we provide the complete proofs with visualized analysis (Appendix~\S\ref{sec:ap:correctness}), collected Q\&A from researchers, CS/ECE students, and developers (Appendix~\S\ref{sec:ap:collectedq}), and various examples (Appendix~\S\ref{sec:ap:examples}) to aid understanding.

\section{Evaluation}
\label{sec:evaluation}

We implemented \algo in Golang and deployed it on $4$, $16$, $31$, $61$, and $100$ VM instances on a popular cloud platform. Each instance includes a machine with $4$ vCPUs supported by $2.40$ GHz Intel Core processors (Skylake) with a cache size of $16$ MB, $15$ GB of RAM, and $75$ GB of disk space running on Ubuntu $18.04.1$ LTS. The TCP/IP bandwidth measured by \texttt{iperf} is around $400$ MB/s with a raw network latency between two instances less than $2$~ms. We use the following notations to report on the results.

\begin{tabular}{ll}
     $n$ & The number of VMs (scales) \\
     $\beta$ & The number of transactions in a batch (batch size) \\
     $d$ & Emulated network delays (ms) using \texttt{Netem} \\
     $m$ & The message size (e.g., $m=32$ bytes) \\
\end{tabular}

\begin{figure*}[t]
\minipage{0.25\textwidth}
    \centering
    \includegraphics[width=\linewidth]{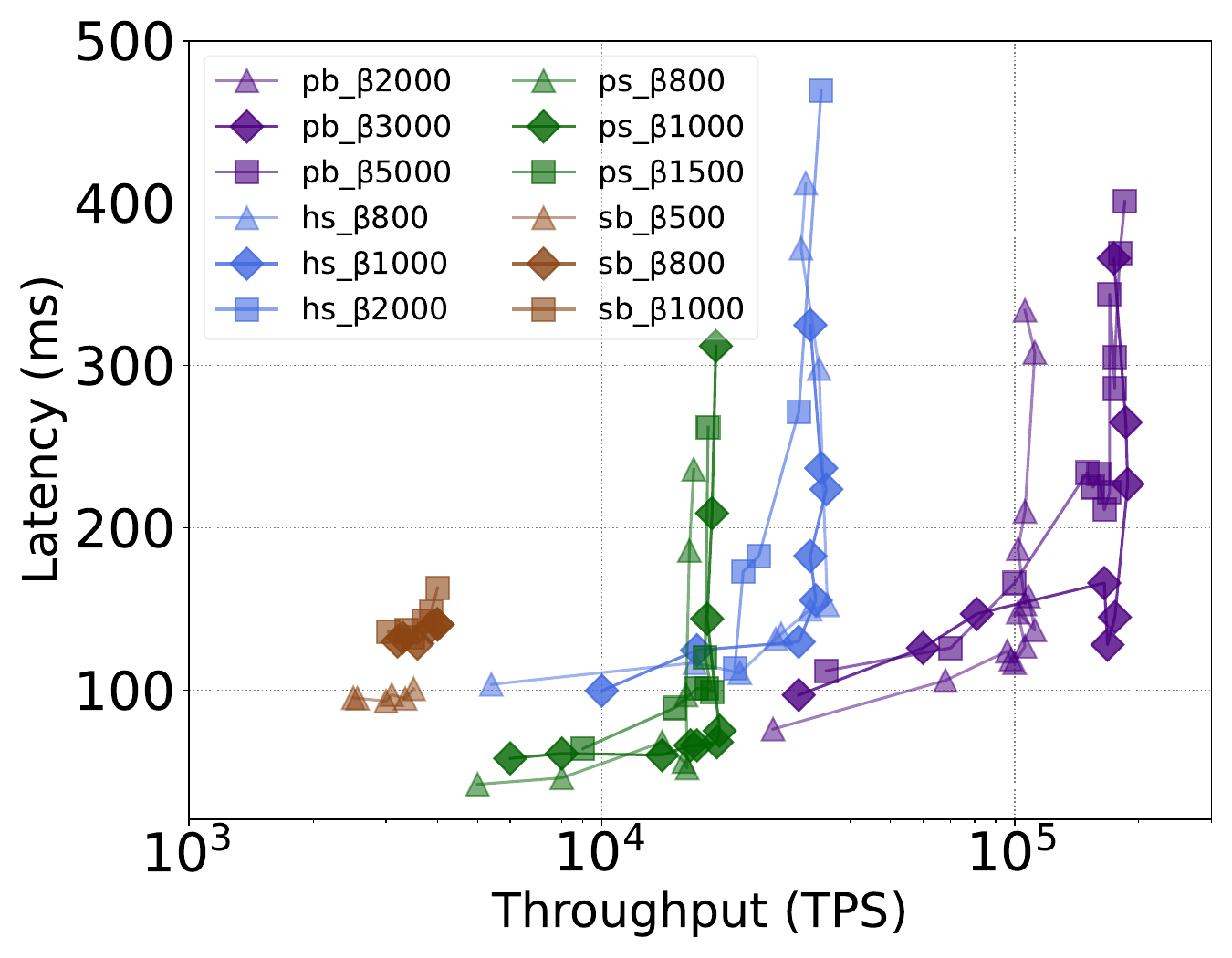}
    \caption{Performance under batching ($n{=}4$ and $m{=}32$).}
    \label{fig:eva:normalpeak}
\endminipage
\hfill
\minipage{0.48\textwidth}
    \centering
    \includegraphics[width=.98\linewidth]{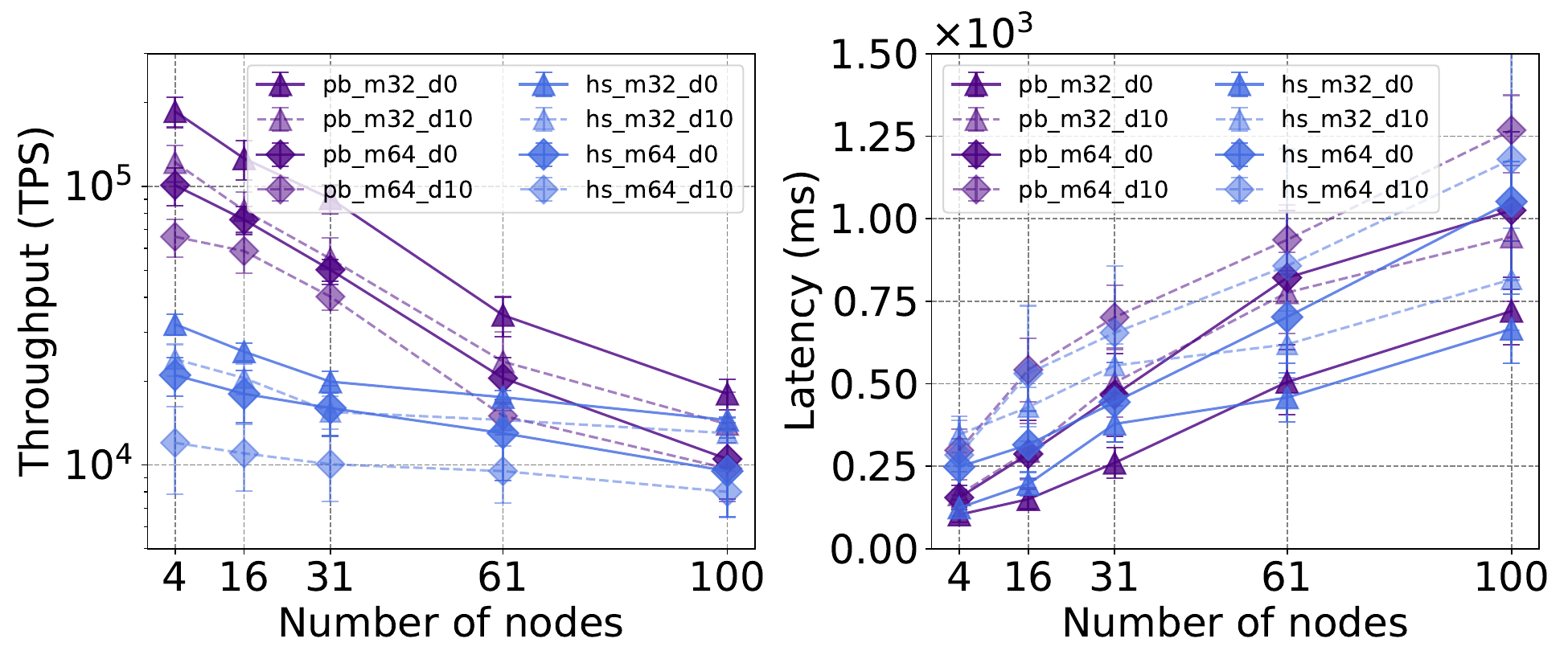}
    \caption{Throughput (left) and latency (right) under increasing system scales ($m=32$ and $64$, $d=0$ and $10\pm5$~ms).}
    \label{fig:eva:normalscalability}
\endminipage
\hfill
\minipage{0.25\textwidth}
    \centering
    \includegraphics[width=\linewidth]{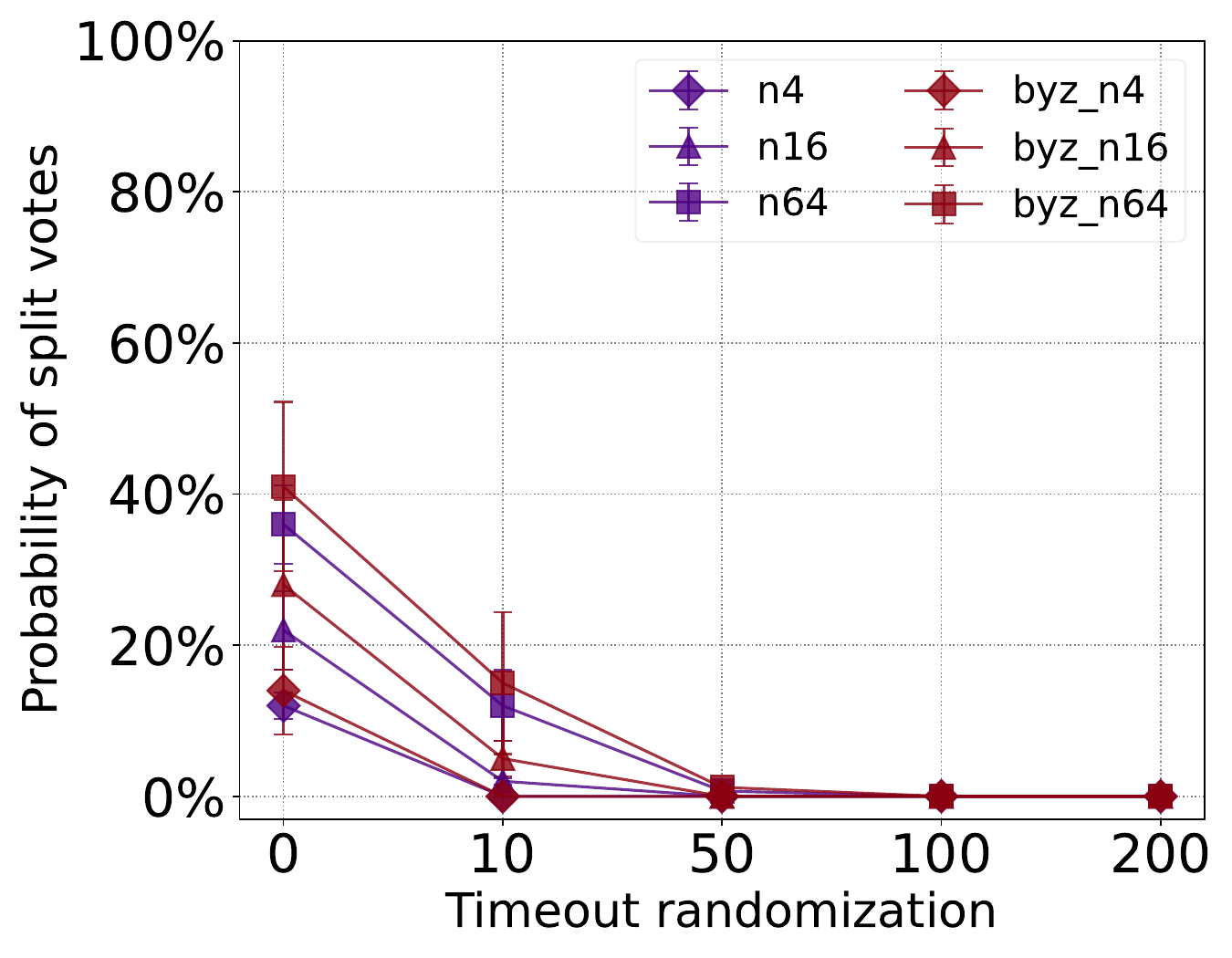}
    \caption{Split votes under different timeout randomization.}
    \label{fig:eva:splitvotes}
\endminipage
\end{figure*}

\subsection{Performance under normal operation}
\label{sec:eva:normal-op}

Performance is reported in terms of throughput and latency. Since linear BFT algorithms have shown significant performance advantage compared to non-linear algorithms (e.g., HotStuff~\cite{yin2019hotstuff} ($\mathcal{O}(n)$) $>$ PBFT~\cite{castro1999practical}($\mathcal{O}(n^2)$) $>$ RBFT~\cite{aublin2013rbft}($\mathcal{O}(n^3)$)), we conducted end-to-end comparisons for \algo (\texttt{pb}) against three linear BFT algorithms: SBFT~\cite{concord} (\texttt{sb}), HotStuff~\cite{libhotstuff} (\texttt{hs}), and Prosecutor~\cite{zhang2021prosecutor} (\texttt{pr}) using their original implementations. Throughput was measured in transactions per second (TPS) on servers, i.e., the number of requests committed in one second. Latency was measured on clients from when a request is sent to when $f{+}1$ \textsc{notif}s are received. 

\textbf{Peak performance.} 
The peak performance was measured when $n{=}4$. Clients generated random requests of $m{=}32$ bytes and waited for one request to complete before sending the next one. We kept increasing batch sizes for each algorithm to find the batch size that resulted in the highest throughput while maintaining a low latency. Under each batch size, we kept deploying more clients until their generated workloads were sufficient (until an elbow of a curve occurs (Figure~\ref{fig:eva:normalpeak})).

\algo outperforms its baselines with a peak performance at a throughput of 186,012 TPS and a latency of 166~ms at $\beta {=} 3000$. Its peak performance is $5.2\times$ higher than HotStuff's, which peaked at $35,428$ TPS in $129$~ms at $\beta {=} 1000$. The high performance is attributed to the reduced phases in replication while achieving optimistic responsiveness. Prosecutor performed at a similar throughput compared to HotStuff with a lower latency, and SBFT peaks at $4,872$ TPS in $148$~ms at $\beta {=} 800$ (similar results are reproduced by~\cite{zhang2020byzantine}).

\textbf{Scalability.} We evaluated the performance of the two best-performing algorithms (i.e., \texttt{pb} and \texttt{hs}) at increasing system scales with two workloads ($m{=}32$ and $64$ bytes), choosing their best batch sizes when $n{=}4$ (i.e., $\beta {=}1000$ for \texttt{hs} and $\beta{=}3000$ for \texttt{pb}). In addition to the raw network latency ($d{=}0$), we implemented additional network delays of $d{=}10 {\pm} 5$~ms at normal distribution using \texttt{netem} to emulate a higher network latency. The results show that the throughput of both algorithms decreases while their latencies increase with cluster sizes (shown in Figure~\ref{fig:eva:normalscalability}). Under the emulated network delay, consensus latency increased significantly as delayed messaging prolongs packing requests into batches and also results in a high variance.

\subsection{Performance under failures}
\label{sec:eva:under-failures}
We also conducted experiments to evaluate \algo's performance under failures. Since we cannot simulate all types of Byzantine failures, we have considered the following four common attacks.

\begin{enumerate} [label=\textbf{F\arabic*}]
    \item (Timeout attacks) Faulty servers set their timeouts to the same of $f$ randomly picked correct servers. \label{byz:timeout}  
    
    \item (Quiet participants) Faulty servers do not respond to any request (similar to crash/send omission failures). \label{byz:quiet}
    
    \item (Equivocation) Faulty servers reply to a quest by sending back erroneous messages.
    \label{byz:equivocation}
    
    \item (Repeated view-change attacks) Faulty servers campaign for leadership when they are not the leader. \label{byz:vcattacks}
\end{enumerate}

\textbf{Split votes under~\ref{byz:timeout}.}
The nature of active view changes allows servers to campaign for leadership. In theory, multiple servers can become candidates simultaneously and may cause split votes (discussed in~\S\ref{sec:ascandidate}) prolonging an election. Nevertheless, for this scenario to occur, competing candidates must detect a leader's failure $+$ finish their computations $+$ arrive their campaign requests at other servers in the same period, which is extremely rare when correct servers randomize their timeouts.

We accessed $10,000$ view changes under $n{=}4$, $16$, and $64$ with increasing amount of randomization ($\epsilon$) and set timeouts from [$800$, $800{+}\epsilon$~ms]. We observe that randomization significantly reduced the occurrence of split votes (shown in Figure~\ref{fig:eva:splitvotes}). Without failures, split votes stopped occurring with just $\epsilon {=} 50$~ms while timeout attacks (\ref{byz:timeout}) only slightly increased the occurrence and could not inflict split votes when $\epsilon {>} 100$~ms.

\textbf{Performance under \ref{byz:quiet} and \ref{byz:equivocation}.} HotStuff uses a passive view-change protocol inherited from PBFT~\cite{castro1999practical}, which is also used by numerous leader-based BFT algorithms, such as Zyzzyva~\cite{kotla2007zyzzyva}, SBFT~\cite{gueta2019sbft}, Aardvark~\cite{clement2009making}, and RBFT~\cite{aublin2013rbft}. This makes HotStuff an ideal baseline for comparing the performance of passive protocols with \algo's active protocol. The trends in performance changes can provide insights into the effectiveness of the passive vs. active comparison. 

To show the performance under frequent view changes, we implemented a simple timing policy; i.e., each server starts a view change every $x$~time in a view. Thus, a server in both algorithms triggers a view change if \wct<1> a leader failure is reported, or \wct<2> the current view has operated for $x$ time. We set $x{=}10$ (more frequent, higher decentralization) and $30$ seconds (less frequent, lower decentralization), denoted by (\texttt{r10}) and (\texttt{r30}), respectively. We set HotStuff's initial timeout to $1$ second and our timeouts from [$800$, $1200$~ms]. 

We arbitrarily chose $f{=}1$ and $f=1, 3, 5$ servers when $n{=}4$ and $n{=}16$ to perform~\ref{byz:quiet} and~\ref{byz:equivocation} and measured the throughput (shown in Figure~\ref{fig:eva:quietandequiv}). We ran each algorithm for $20$~min\footnote{Under $n=16$ and \texttt{r/30S}, to rotate leadership in a full circle in the passive view-change protocol, it requires at least $30~s\times16= 8~min$.}. 
Generally, \ref{byz:equivocation} caused a higher drop in throughput than~\ref{byz:quiet}, as it sends erroneous messages consuming the bandwidth.

\begin{figure*}[t]
\minipage{0.49\textwidth}
    \centering
    \includegraphics[width=.98\linewidth]{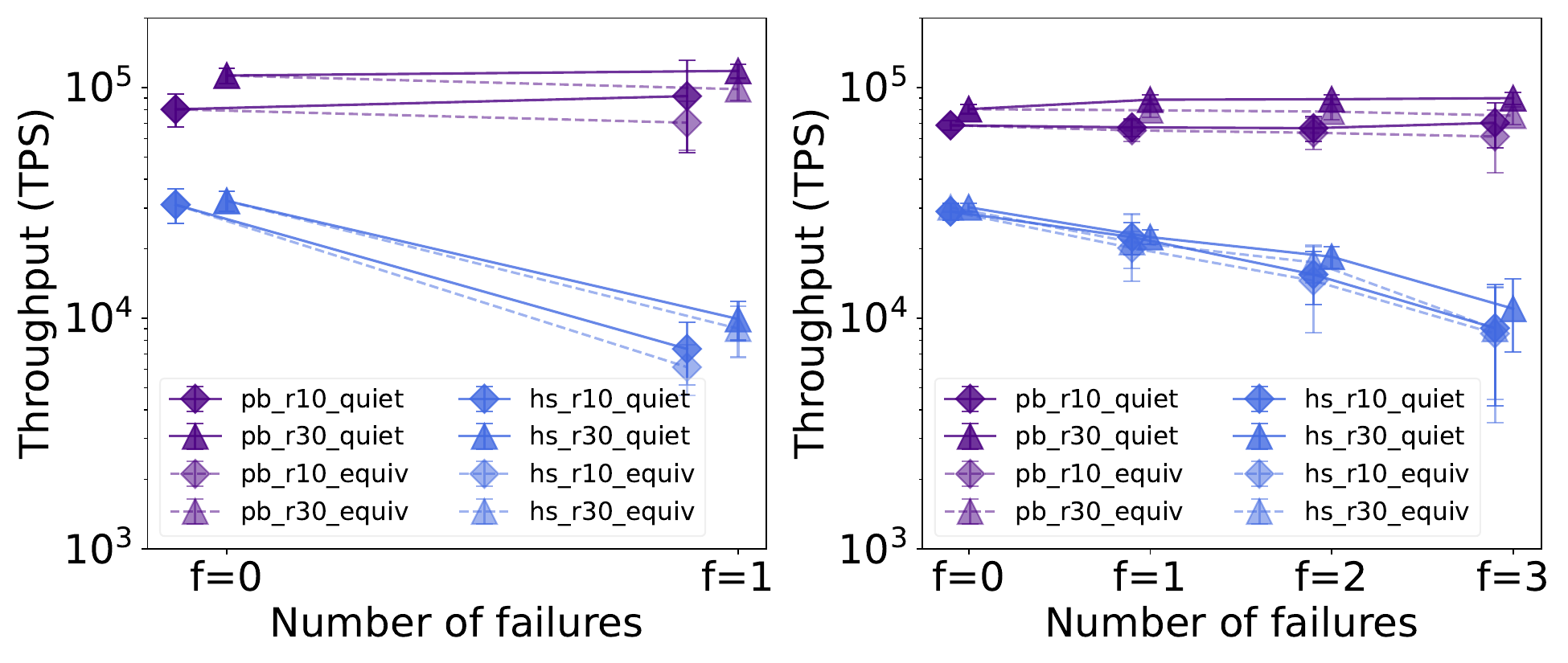}
    \caption{Throughput comparisons under quiet and equivocation attacks (\ref{byz:quiet}+\ref{byz:equivocation}) in $n{=}4$ (left) and $n{=}16$ (right).}
    \label{fig:eva:quietandequiv}
\endminipage \hfill
\minipage{0.49\textwidth}
    \centering
    \includegraphics[width=.98\linewidth]{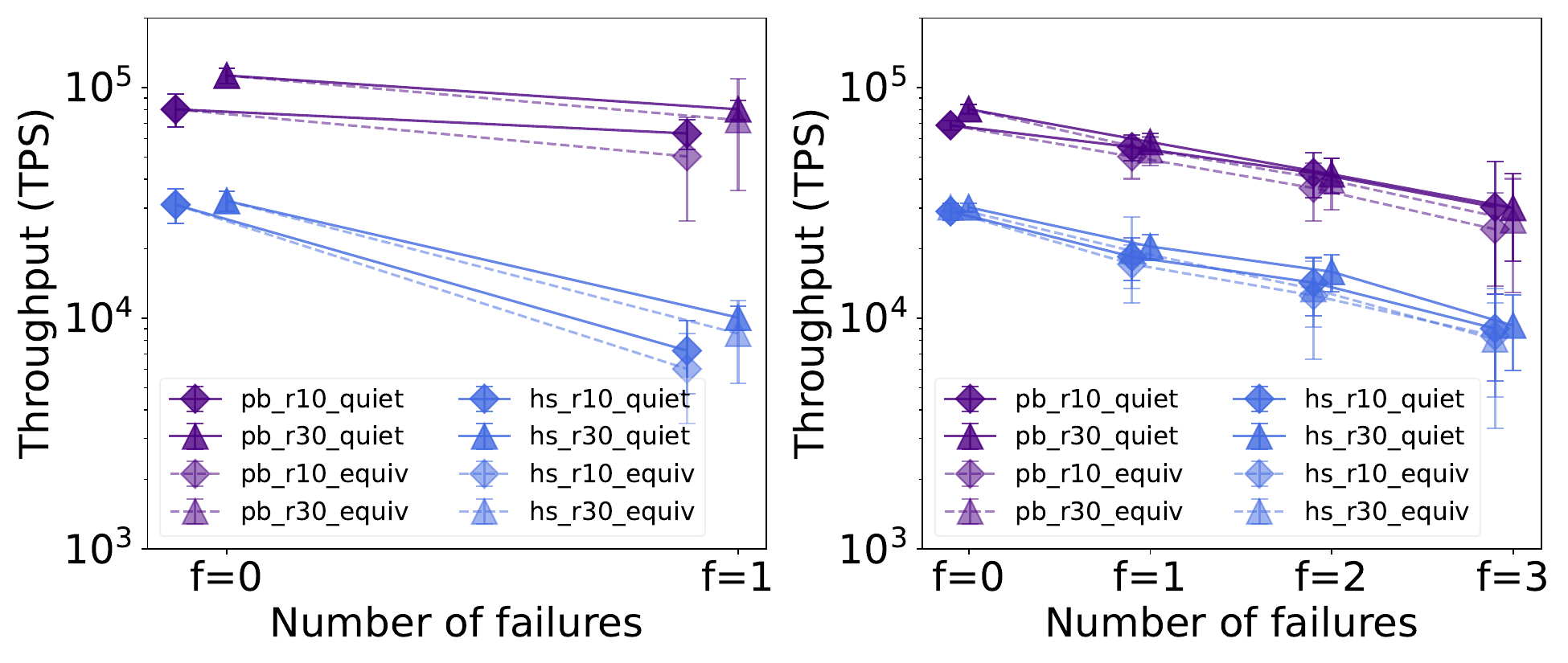} 
    \caption{Throughput comparisons under repeated VC attacks (\ref{byz:vcattacks}+\ref{byz:quiet} and \ref{byz:vcattacks}+\ref{byz:equivocation}) in $n{=}4$ (left) and $n{=}16$ (right).}
    \label{fig:eva:byzfaults}
\endminipage
\end{figure*}

HotStuff encountered $f$ faulty servers being assigned as a leader by its passive VC protocol. Each faulty leader cost the system around $1.2$~s ($1$~s timeout + $200$~ms voting and switching leaders) with no transactions replicated. When $n{=}4$, HotStuff's throughput dropped by nearly $62\%$, from $32,234$ TPS ($f{=}0$) to $12,051$ TPS ($f{=}1$) at \texttt{hs\_r30\_{quiet}}.  Additionally, more frequent rotations resulted in a higher decrease in throughput under both~\ref{byz:quiet} and~\ref{byz:equivocation}. 

\begin{figure*}[t]
\minipage{0.24\textwidth}
    \centering
    \includegraphics[width=\linewidth]{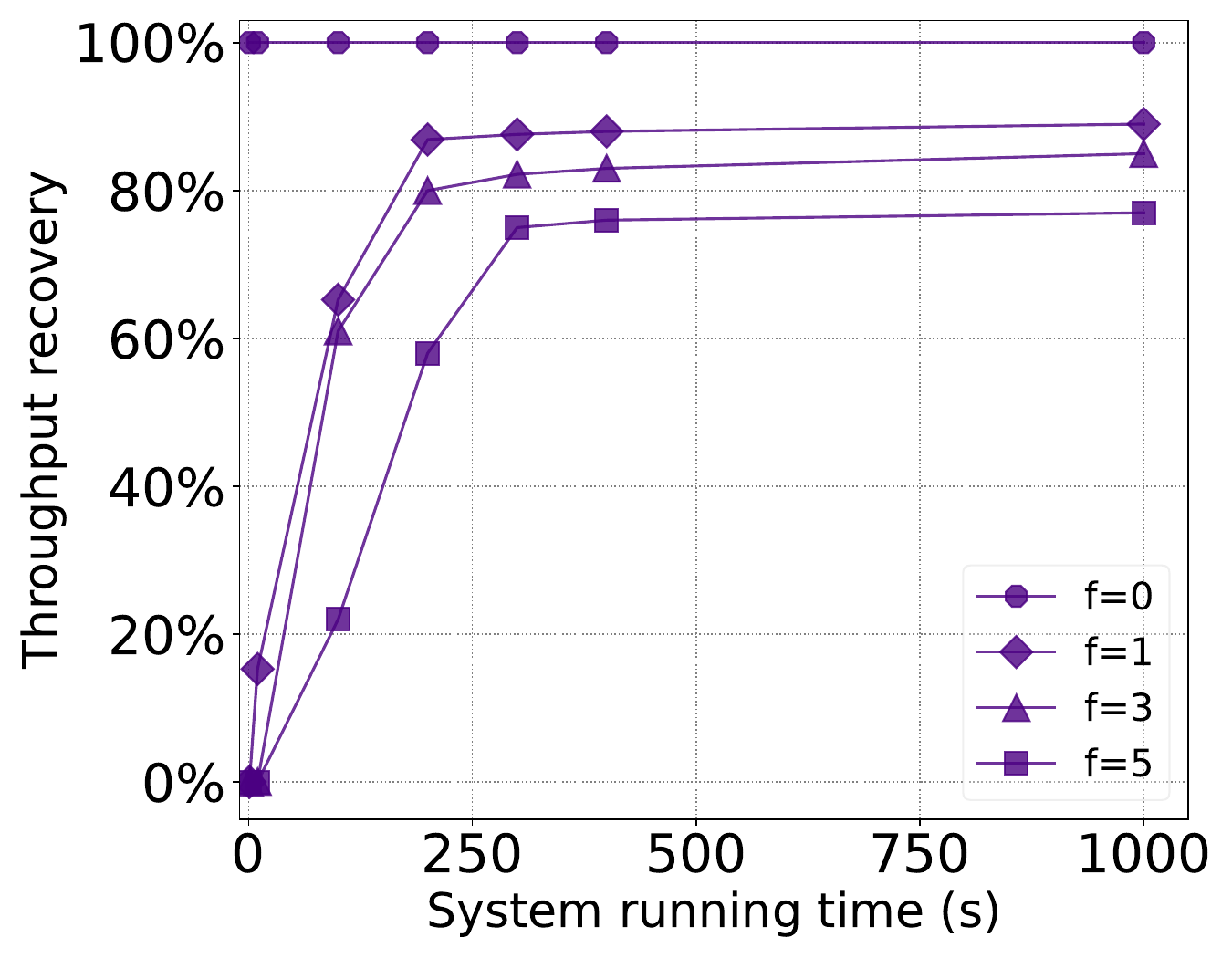}
    \caption{Improving TPS under \ref{byz:vcattacks}+\ref{byz:quiet} in \texttt{pb\_r10\_{quiet}}.}
    \label{fig:eva:byz:vbimproving}
\endminipage \hfill
\minipage{0.24\textwidth}
    \centering
    \includegraphics[width=\linewidth]{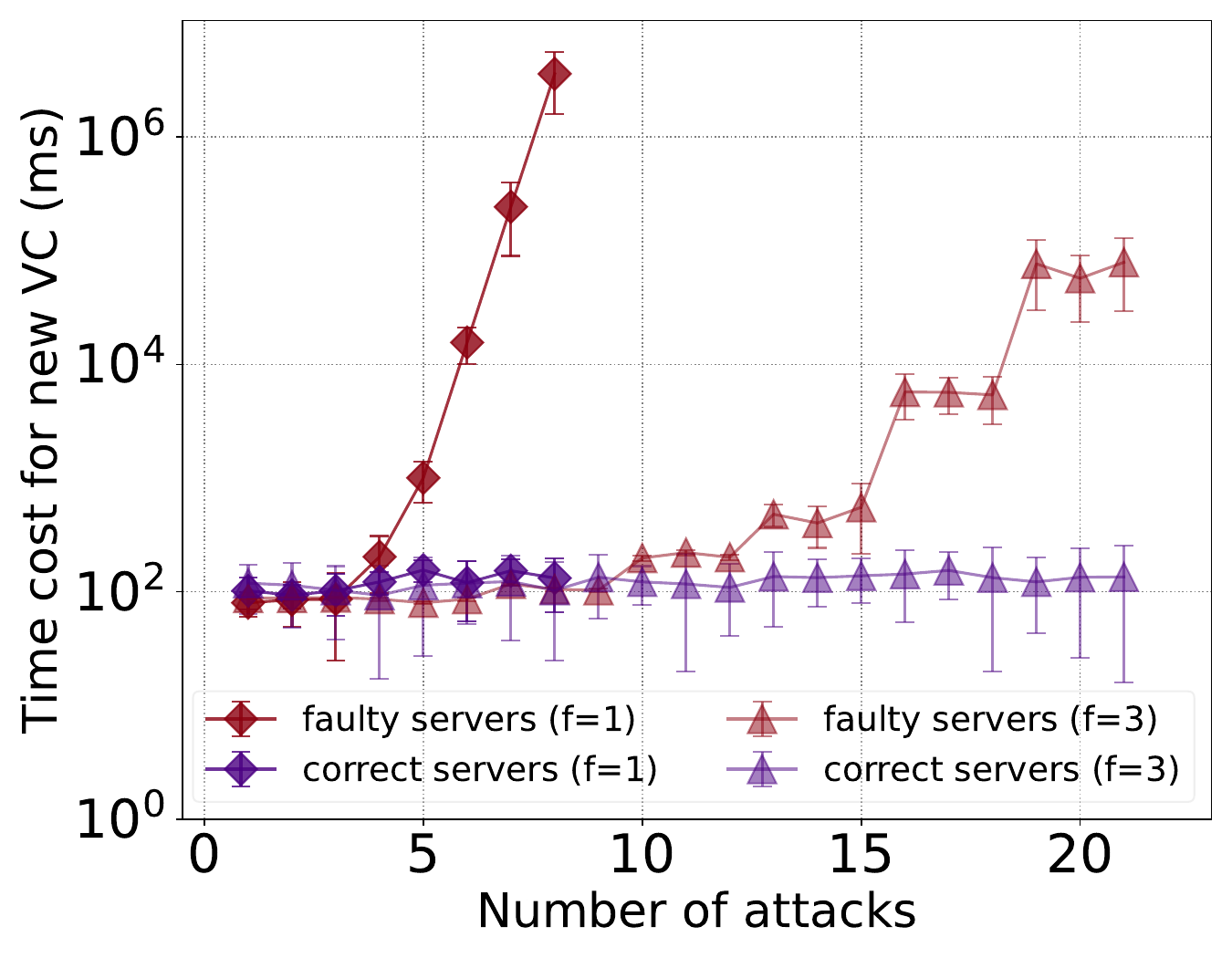}
    \caption{Time costs to start a view change under attacks.}
    \label{fig:eva:timecost}
\endminipage  \hfill
\minipage{0.24\textwidth}
    \centering
    \includegraphics[width=0.97\linewidth]{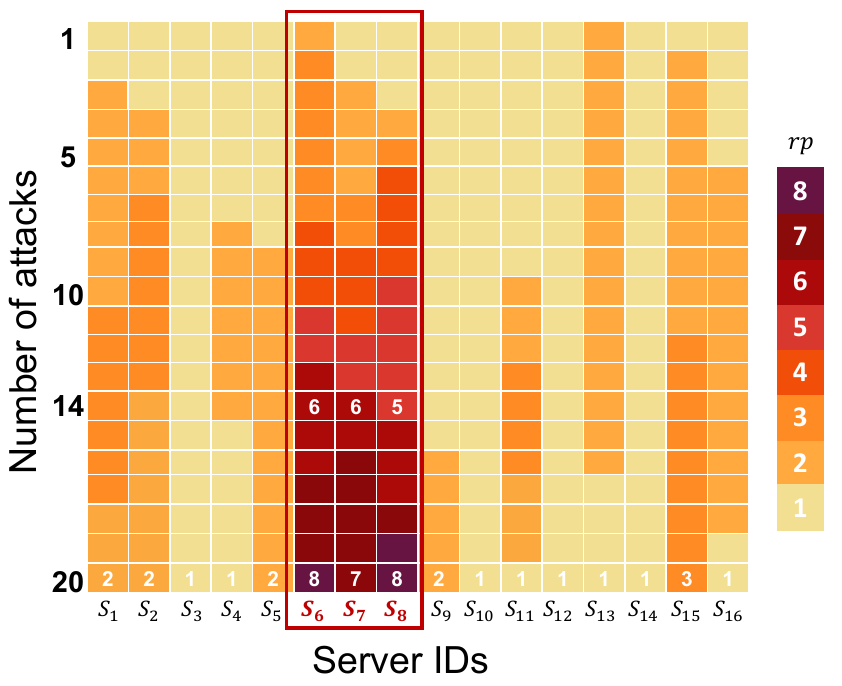}
    \caption{The change of server $rp$ under $f{=}3$ in Fig.~\ref{fig:eva:timecost}.}
    \label{fig:eva:heatmap}
\endminipage \hfill
\minipage{0.24\textwidth}
    \centering
    \includegraphics[width=0.97\linewidth]{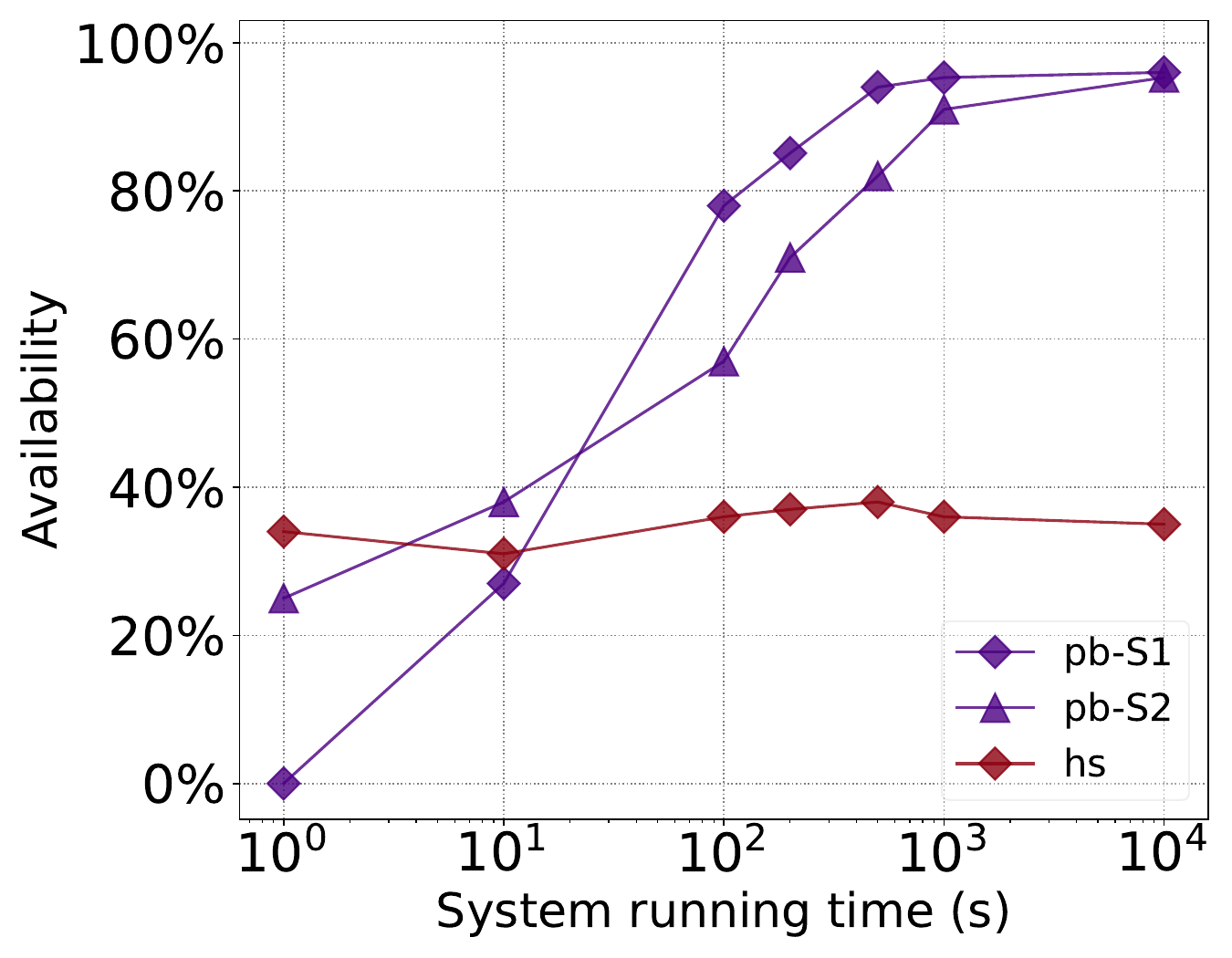}
    \caption{Availability under different types of attacks.}
    \label{fig:eva:availability}
\endminipage
\end{figure*}

In contrast, \algo's throughput was nearly unaffected. Interestingly, with more quiet servers (\ref{byz:quiet}), its throughput saw an increase (e.g., throughput had a $14\%$ from $80,418$ TPS under $f{=}0$ to $91,765$ TPS under $f{=}1$ at \texttt{vb\_r10\_{quiet}}). Since $f$ quiet servers do not consume network bandwidth, more transactions can be exchanged by correct servers.

\textbf{Performance under \ref{byz:vcattacks}+\ref{byz:quiet} and \ref{byz:vcattacks}+\ref{byz:equivocation}.} 
We also evaluated the behavior of repeated view change attacks, which is the most detrimental to \algo's active VC protocol. We arbitrarily chose $f{=}1$ when $n{=}4$ and $f{=}1,3,5$ when $n{=}16$ faulty servers to perform the two combined attacks: when a faulty server becomes a leader, it becomes quiet (\ref{byz:vcattacks}+\ref{byz:quiet}) or perform equivocation (\ref{byz:vcattacks}+~\ref{byz:equivocation}). We allow faulty servers to collude to launch attacks when $f{>}1$ by performing joint computation and sharing logs.

As shown in Figure~\ref{fig:eva:byzfaults}, HotStuff was hit with a similar sustained drop in throughput as in Figure~\ref{fig:eva:quietandequiv}. Its throughput saw a drop of $69\%$ under $n{=}4$, from $32,234$ TPS ($f{=}0$) to $10,051$ TPS ($f{=}1$) at \texttt{hs\_r10\_{quiet}}. Since HotStuff follows a predefined schedule to rotate leadership, faulty servers cannot be selected when they are not scheduled despite they sent view change requests. Compared to the result in Figure~\ref{fig:eva:quietandequiv}, throughput drops slightly higher because of the erroneous messages faulty servers kept sending. 

\algo witnessed a moderate drop in overall throughput. Under $n{=}4$, its throughput dropped by $24\%$, from $80,418$ TPS ($f{=}0$) to $61,208$ TPS ($f{=}1$) at \texttt{vb\_r10\_{quiet}}. In addition, \algo showed an improving performance over the experiment, as faulty servers are constantly penalized after launching attacks without replication. When faulty servers struggled to launch new attacks, \algo gained failure-free views with correct leaders conducting replication. Figure~\ref{fig:eva:byz:vbimproving} shows the trend of the improving throughput: at the beginning of the experiment, \algo suffered from the attacks and could not make progress in replication (in the first $10$s). However, after being repeatedly penalized by the reputation mechanism, faulty servers were quickly suppressed in view changes while correct servers regained leadership and resumed normal operation (staring from $100$s). At time $1000$s, \algo's throughput recovered to $87\%$ of its throughput under normal operation ($f=0$).

We show the time costs for faulty servers launching repeated VC attacks (\ref{byz:vcattacks}+\ref{byz:quiet}). Our implementation uses SHA-256 as the hashing algorithm; thus, the probability of finding a hash that has a prefix of $rp$ leading $0$s is as follows.

\noindent\begin{minipage}{\linewidth}
\centering
$Pr(rp) = \frac{2^{256-8rp}}{2^{256}} = 2^{-8rp}$ 
\end{minipage}
$Pr(rp)$ requires exponentially increasing computation to find a required hash result, which resulted in the skyrocketing time cost for attackers (shown in Figure~\ref{fig:eva:timecost}).

We show the change of server $rp$s in Figure~\ref{fig:eva:heatmap} throughout the attacks in Figure~\ref{fig:eva:timecost} when $f{=}3$, where $S_6$, $S_7$, and $S_8$ are the three faulty servers. After faulty servers' $rp$ exceeded $5$, they began to struggle to perform the required hash computation (Figure~\ref{fig:eva:timecost}) and cannot prevent a correct leader from conducting replication. At this time, correct servers start to regain leadership and apply compensation with reduced $rp$ (from the $14$th attack in Figure~\ref{fig:eva:heatmap}). When the $rp$ of faulty servers increased to $8$, they were unable to launch new attacks and could not become a leader in future view changes.

\textbf{Availability.}
In addition, faulty servers in \algo can ``smartly'' launch repeated VC attacks: they can calculate their $rp$ and launch attacks only when they can get compensated. We name this strategy as \texttt{S2} and the previous strategy, where faulty servers launch attacks whenever they are not the leader, as \texttt{S1}. We kept both \algo and HotStuff ($f{=}3$) running for $10^4$s and reports their availability in Figure~\ref{fig:eva:availability}. To pursue \texttt{S2}, faulty servers must temporarily behave correctly and allow for replication, thereby giving \algo failure-free time. With more transactions replicated, $ti$ in Eq.~\ref{eq:delta-tx} keeps increasing, and faulty servers must behave correctly increasingly longer to continuously get  compensated. Overall, \algo exhibits a significantly higher availability when faulty servers are penalized.

\vspace{-1em}
\subsection{Summary of results}
\label{sec:eva:summary}
The evaluation results show that \wct<1> \algo achieves high performance in replication in terms of throughput and latency. \wct<2> \algo's performance is unaffected by faulty servers being quiet or equivocating, remaining at a high throughput. \wct<3> Under repeated VC attacks, \algo's reputation mechanism quickly suppresses faulty servers with improving performance and availability over time.

\section{Related work}
\label{sec:related-work}

Consensus algorithms provide safety and liveness for state machine replication (SMR)~\cite{schneider1990state} under different failure assumptions. With increasing software scales, Byzantine failures are becoming more common due to hardware glitches, operator errors, and worldwide anonymous collaboration, especially in blockchains where participants may intentionally break the protocol to gain more profit~\cite{zhang2020byzantine, daian2020flash, huang2021rich, lewis2014flash}.

\textbf{Leader-based BFT algorithms} have been favored by permissioned blockchains, such as HyperLedger Fabric~\cite{androulaki2018hyperledger} and Diem~\cite{diem}. After PBFT~\cite{castro1999practical} pioneered a practical BFT solution with an $O(n^2)$ message complexity using public-key signatures, numerous approaches have been proposed for optimizations in terms of replication and view changes.

\textbf{Replication} optimizations have been focusing on various aspects. They use speculative decisions and reduce workloads for a single leader~\cite{kotla2007zyzzyva, duan2014hbft, gunn2019making}, 
develop high performance implementations~\cite{el2019blockchaindb, bessani2014state, guerraoui2010next, buchman2016tendermint, sousa2018byzantine, peng2020falcondb, satija2020blockene, ngo2020tolerating}, use sharding mechanisms to improve throughput~\cite{amiri2021sharper},
reduce messaging costs~\cite{song2008bosco, yang2021dispersedledger, distler2011increasing, martin2006fast, liu2016xft}, 
offer confidentiality protection dealing with secret sharing~\cite{vassantlal2022cobra},
apply accountability for individual participants~\cite{civit2021polygraph, shamis2022ia, neu2021ebb}, limit faulty behavior using trusted hardware~\cite{behl2017hybrids, chun2007attested, kapitza2012cheapbft, levin2009trinc}, and utilize threshold signatures~\cite{libert2016born, shoup2000practical} to achieve linear message complexity~\cite{gueta2019sbft, yin2019hotstuff, zhang2021prosecutor}.

In addition, DAG-based protocols have achieved high performance by separating transaction distribution from consensus~\cite{danezis2022narwhal, spiegelman2022bullshark, keidar2021all}. \algo's view change protocol can also be applied in DAG for efficiently selecting leaders as well as in transaction pipelining~\cite{ford2019threshold, baird2016swirlds}.

\textbf{View changes} detect leader failures and move the system to new views~\cite{aiyer2005bar}. PBFT~\cite{castro1999practical} developed a passive view-change mechanism where servers follow a predefined leader schedule to rotate leaders. Because of its simplicity, this mechanism has been adopted by numerous BFT algorithms~\cite{kotla2007zyzzyva, duan2014hbft, gunn2019making, gueta2019sbft, yin2019hotstuff, clement2009making, abraham2020sync, distler2011increasing, aublin2013rbft, diemConsensus, el2019blockchaindb, bessani2014state, guerraoui2010next, buchman2016tendermint, gupta2020resilientdb}. 
Aardvark~\cite{clement2009making} imposes regular view changes when a leader slows down by a certain threshold, and HotStuff~\cite{yin2019hotstuff} suggests that views be rotated for each request. However, the frequent passive view changes will result in frequent faulty new leaders.

\textbf{Reputation approaches with history.} Learning from the past to predict the future is a common design philosophy in Computer Science, such as the multi-level feedback queue in CPU scheduling~\cite{corbato1962experimental, mcdougall2006solaris}, hardware branch predictors~\cite{mcfarling1993combining, young1994improving}, and caching algorithms~\cite{chrobak1999lru, o1993lru}. Reputation-based BFT algorithms also incorporate historic information and adaptively predict correctness. For example. DiemBFT~\cite{diemConsensus} calculates reputation by tracking active servers' total log length. However, DiemBFT still uses a passive view-change protocol; it restricts the use of reputation only when correct servers are rotated as leaders. In contrast, \algo takes a more proactive approach, fully utilizing reputation with an active view-change protocol to enhance system efficiency and robustness against failures.

\textbf{Leaderless BFT algorithms} do not use a designated leader to conduct replication, thereby mitigating the problem of single points of failures and single server bottlenecks~\cite{lamport2011brief, miller2016honey, crain2018dbft, duan2018beat, suri2021basil}. Without a leader, leaderless BFT algorithms often utilize binary Byzantine agreement~\cite{mostefaoui2014signature} to jointly form quorum certificates~\cite{ben1994asynchronous}. Compared with leader-based BFT algorithms, they are more robust and avoid leadership changes, but often suffer from high message and time costs for conflict resolutions, even after applying erasure coding (e.g., AVID broadcast~\cite{cachin2005asynchronous}).

\section{Conclusions}
This paper introduces \algo, a leader-based BFT consensus algorithm that enables active view changes with reputation mechanisms. The reputation mechanism learns from a server's history and ranks the server's correctness with a reputation penalty. The active view-change protocol allows servers to proactively campaign for leadership by performing reputation-determined work. Consequently, servers with good reputations are more likely to be elected as new leaders than servers with bad reputations. 
Our evaluation results show that \algo is robust and efficient. It achieves $5\times$ higher throughput than its best-performing baseline under normal operation.  It exhibits robustness under failures: while its baseline suffered from a $69\%$ drop in throughput under a variety of Byzantine failures, \algo witnessed only a $24\%$ drop with a vigorous recovery.

\bibliographystyle{ACM-Reference-Format}
\bibliography{ref}


\begin{thebibliography}{80}


\ifx \showCODEN    \undefined \def \showCODEN     #1{\unskip}     \fi
\ifx \showDOI      \undefined \def \showDOI       #1{#1}\fi
\ifx \showISBNx    \undefined \def \showISBNx     #1{\unskip}     \fi
\ifx \showISBNxiii \undefined \def \showISBNxiii  #1{\unskip}     \fi
\ifx \showISSN     \undefined \def \showISSN      #1{\unskip}     \fi
\ifx \showLCCN     \undefined \def \showLCCN      #1{\unskip}     \fi
\ifx \shownote     \undefined \def \shownote      #1{#1}          \fi
\ifx \showarticletitle \undefined \def \showarticletitle #1{#1}   \fi
\ifx \showURL      \undefined \def \showURL       {\relax}        \fi
\providecommand\bibfield[2]{#2}
\providecommand\bibinfo[2]{#2}
\providecommand\natexlab[1]{#1}
\providecommand\showeprint[2][]{arXiv:#2}

\bibitem[Abraham et~al\mbox{.}(2020)]%
        {abraham2020sync}
\bibfield{author}{\bibinfo{person}{Ittai Abraham}, \bibinfo{person}{Dahlia
  Malkhi}, \bibinfo{person}{Kartik Nayak}, \bibinfo{person}{Ling Ren}, {and}
  \bibinfo{person}{Maofan Yin}.} \bibinfo{year}{2020}\natexlab{}.
\newblock \showarticletitle{Sync {HotStuff}: Simple and practical synchronous
  state machine replication}. In \bibinfo{booktitle}{\emph{2020 IEEE Symposium
  on Security and Privacy (SP)}}. IEEE, \bibinfo{pages}{106--118}.
\newblock


\bibitem[Aiyer et~al\mbox{.}(2005)]%
        {aiyer2005bar}
\bibfield{author}{\bibinfo{person}{Amitanand~S Aiyer}, \bibinfo{person}{Lorenzo
  Alvisi}, \bibinfo{person}{Allen Clement}, \bibinfo{person}{Mike Dahlin},
  \bibinfo{person}{Jean-Philippe Martin}, {and} \bibinfo{person}{Carl Porth}.}
  \bibinfo{year}{2005}\natexlab{}.
\newblock \showarticletitle{{BAR} fault tolerance for cooperative services}. In
  \bibinfo{booktitle}{\emph{Proceedings of the twentieth ACM symposium on
  Operating systems principles}}. \bibinfo{pages}{45--58}.
\newblock


\bibitem[Amir et~al\mbox{.}(2008)]%
        {amir2008byzantine}
\bibfield{author}{\bibinfo{person}{Yair Amir}, \bibinfo{person}{Brian~A. Coan},
  \bibinfo{person}{Jonathan Kirsch}, {and} \bibinfo{person}{John Lane}.}
  \bibinfo{year}{2008}\natexlab{}.
\newblock \showarticletitle{Byzantine replication under attack}. In
  \bibinfo{booktitle}{\emph{The 38th Annual {IEEE/IFIP} International
  Conference on Dependable Systems and Networks, {DSN} 2008, June 24-27, 2008,
  Anchorage, Alaska, USA, Proceedings}}. \bibinfo{publisher}{{IEEE} Computer
  Society}, \bibinfo{pages}{197--206}.
\newblock
\urldef\tempurl%
\url{https://doi.org/10.1109/DSN.2008.4630088}
\showDOI{\tempurl}


\bibitem[Amiri et~al\mbox{.}(2021)]%
        {amiri2021sharper}
\bibfield{author}{\bibinfo{person}{Mohammad~Javad Amiri},
  \bibinfo{person}{Divyakant Agrawal}, {and} \bibinfo{person}{Amr El~Abbadi}.}
  \bibinfo{year}{2021}\natexlab{}.
\newblock \showarticletitle{Sharper: Sharding permissioned blockchains over
  network clusters}. In \bibinfo{booktitle}{\emph{Proceedings of the 2021
  International Conference on Management of Data}}. \bibinfo{pages}{76--88}.
\newblock


\bibitem[Androulaki et~al\mbox{.}(2018)]%
        {androulaki2018hyperledger}
\bibfield{author}{\bibinfo{person}{Elli Androulaki}, \bibinfo{person}{Artem
  Barger}, \bibinfo{person}{Vita Bortnikov}, \bibinfo{person}{Christian
  Cachin}, \bibinfo{person}{Konstantinos Christidis}, \bibinfo{person}{Angelo
  De~Caro}, \bibinfo{person}{David Enyeart}, \bibinfo{person}{Christopher
  Ferris}, \bibinfo{person}{Gennady Laventman}, \bibinfo{person}{Yacov
  Manevich}, {et~al\mbox{.}}} \bibinfo{year}{2018}\natexlab{}.
\newblock \showarticletitle{Hyperledger fabric: a distributed operating system
  for permissioned blockchains}. In \bibinfo{booktitle}{\emph{Proceedings of
  the Thirteenth EuroSys Conference}}. ACM, \bibinfo{pages}{30}.
\newblock


\bibitem[Aryal(2019)]%
        {causesofcrash}
\bibfield{author}{\bibinfo{person}{Mina Aryal}.}
  \bibinfo{year}{2019}\natexlab{}.
\newblock \bibinfo{title}{Top 9 Causes Of A Server Down Or Server Crash}.
\newblock
\newblock
\urldef\tempurl%
\url{https://ictframe.com/top-9-causes-of-a-server-down-or-server-crash/}
\showURL{%
\tempurl}


\bibitem[Attiya et~al\mbox{.}(1994)]%
        {attiya1994bounds}
\bibfield{author}{\bibinfo{person}{Hagit Attiya}, \bibinfo{person}{Cynthia
  Dwork}, \bibinfo{person}{Nancy Lynch}, {and} \bibinfo{person}{Larry
  Stockmeyer}.} \bibinfo{year}{1994}\natexlab{}.
\newblock \showarticletitle{Bounds on the time to reach agreement in the
  presence of timing uncertainty}.
\newblock \bibinfo{journal}{\emph{Journal of the ACM (JACM)}}
  \bibinfo{volume}{41}, \bibinfo{number}{1} (\bibinfo{year}{1994}),
  \bibinfo{pages}{122--152}.
\newblock


\bibitem[Attiya and Welch(2004)]%
        {attiya2004distributed}
\bibfield{author}{\bibinfo{person}{Hagit Attiya} {and}
  \bibinfo{person}{Jennifer Welch}.} \bibinfo{year}{2004}\natexlab{}.
\newblock \bibinfo{booktitle}{\emph{Distributed computing: fundamentals,
  simulations, and advanced topics}}. Vol.~\bibinfo{volume}{19}.
\newblock \bibinfo{publisher}{John Wiley \& Sons}.
\newblock


\bibitem[Aublin et~al\mbox{.}(2013)]%
        {aublin2013rbft}
\bibfield{author}{\bibinfo{person}{Pierre-Louis Aublin},
  \bibinfo{person}{Sonia~Ben Mokhtar}, {and} \bibinfo{person}{Vivien
  Qu{\'e}ma}.} \bibinfo{year}{2013}\natexlab{}.
\newblock \showarticletitle{Rbft: Redundant byzantine fault tolerance}. In
  \bibinfo{booktitle}{\emph{2013 IEEE 33rd International Conference on
  Distributed Computing Systems}}. IEEE, \bibinfo{pages}{297--306}.
\newblock


\bibitem[Baird(2016)]%
        {baird2016swirlds}
\bibfield{author}{\bibinfo{person}{Leemon Baird}.}
  \bibinfo{year}{2016}\natexlab{}.
\newblock \showarticletitle{The swirlds hashgraph consensus algorithm: Fair,
  fast, byzantine fault tolerance}.
\newblock \bibinfo{journal}{\emph{Swirlds Tech Reports SWIRLDS-TR-2016-01,
  Tech. Rep}}  \bibinfo{volume}{34} (\bibinfo{year}{2016}).
\newblock


\bibitem[Behl et~al\mbox{.}(2017)]%
        {behl2017hybrids}
\bibfield{author}{\bibinfo{person}{Johannes Behl}, \bibinfo{person}{Tobias
  Distler}, {and} \bibinfo{person}{R{\"u}diger Kapitza}.}
  \bibinfo{year}{2017}\natexlab{}.
\newblock \showarticletitle{Hybrids on steroids: SGX-based high performance
  BFT}. In \bibinfo{booktitle}{\emph{Proceedings of the Twelfth European
  Conference on Computer Systems}}. \bibinfo{pages}{222--237}.
\newblock


\bibitem[Ben-Or et~al\mbox{.}(1994)]%
        {ben1994asynchronous}
\bibfield{author}{\bibinfo{person}{Michael Ben-Or}, \bibinfo{person}{Boaz
  Kelmer}, {and} \bibinfo{person}{Tal Rabin}.} \bibinfo{year}{1994}\natexlab{}.
\newblock \showarticletitle{Asynchronous secure computations with optimal
  resilience}. In \bibinfo{booktitle}{\emph{Proceedings of the thirteenth
  annual ACM symposium on Principles of distributed computing}}.
  \bibinfo{pages}{183--192}.
\newblock


\bibitem[Bessani et~al\mbox{.}(2014)]%
        {bessani2014state}
\bibfield{author}{\bibinfo{person}{Alysson Bessani}, \bibinfo{person}{Jo{\~a}o
  Sousa}, {and} \bibinfo{person}{Eduardo~EP Alchieri}.}
  \bibinfo{year}{2014}\natexlab{}.
\newblock \showarticletitle{State machine replication for the masses with
  {BFT}-{SMART}}. In \bibinfo{booktitle}{\emph{2014 44th Annual IEEE/IFIP
  International Conference on Dependable Systems and Networks}}. IEEE,
  \bibinfo{pages}{355--362}.
\newblock


\bibitem[Boneh et~al\mbox{.}(2018)]%
        {boneh2018verifiable}
\bibfield{author}{\bibinfo{person}{Dan Boneh}, \bibinfo{person}{Joseph
  Bonneau}, \bibinfo{person}{Benedikt B{\"u}nz}, {and} \bibinfo{person}{Ben
  Fisch}.} \bibinfo{year}{2018}\natexlab{}.
\newblock \showarticletitle{Verifiable delay functions}. In
  \bibinfo{booktitle}{\emph{Advances in Cryptology--CRYPTO 2018: 38th Annual
  International Cryptology Conference, Santa Barbara, CA, USA, August 19--23,
  2018, Proceedings, Part I}}. Springer, \bibinfo{pages}{757--788}.
\newblock


\bibitem[Buchman(2016)]%
        {buchman2016tendermint}
\bibfield{author}{\bibinfo{person}{Ethan Buchman}.}
  \bibinfo{year}{2016}\natexlab{}.
\newblock \emph{\bibinfo{title}{Tendermint: Byzantine fault tolerance in the
  age of blockchains}}.
\newblock \bibinfo{thesistype}{Ph.\,D. Dissertation}.
\newblock


\bibitem[Cachin and Tessaro(2005)]%
        {cachin2005asynchronous}
\bibfield{author}{\bibinfo{person}{Christian Cachin} {and}
  \bibinfo{person}{Stefano Tessaro}.} \bibinfo{year}{2005}\natexlab{}.
\newblock \showarticletitle{Asynchronous verifiable information dispersal}. In
  \bibinfo{booktitle}{\emph{24th IEEE Symposium on Reliable Distributed Systems
  (SRDS'05)}}. IEEE, \bibinfo{pages}{191--201}.
\newblock


\bibitem[Cachin and Vukoli{\'c}(2017)]%
        {cachin2017blockchain}
\bibfield{author}{\bibinfo{person}{Christian Cachin} {and}
  \bibinfo{person}{Marko Vukoli{\'c}}.} \bibinfo{year}{2017}\natexlab{}.
\newblock \showarticletitle{Blockchain consensus protocols in the wild}.
\newblock \bibinfo{journal}{\emph{arXiv preprint arXiv:1707.01873}}
  (\bibinfo{year}{2017}).
\newblock


\bibitem[Castro et~al\mbox{.}(1999)]%
        {castro1999practical}
\bibfield{author}{\bibinfo{person}{Miguel Castro}, \bibinfo{person}{Barbara
  Liskov}, {et~al\mbox{.}}} \bibinfo{year}{1999}\natexlab{}.
\newblock \showarticletitle{Practical Byzantine fault tolerance}. In
  \bibinfo{booktitle}{\emph{OSDI}}, Vol.~\bibinfo{volume}{99}.
  \bibinfo{pages}{173--186}.
\newblock


\bibitem[Chrobak and Noga(1999)]%
        {chrobak1999lru}
\bibfield{author}{\bibinfo{person}{Marek Chrobak} {and} \bibinfo{person}{John
  Noga}.} \bibinfo{year}{1999}\natexlab{}.
\newblock \showarticletitle{LRU is better than FIFO}.
\newblock \bibinfo{journal}{\emph{Algorithmica}} \bibinfo{volume}{23},
  \bibinfo{number}{2} (\bibinfo{year}{1999}), \bibinfo{pages}{180--185}.
\newblock


\bibitem[Chun et~al\mbox{.}(2007)]%
        {chun2007attested}
\bibfield{author}{\bibinfo{person}{Byung-Gon Chun}, \bibinfo{person}{Petros
  Maniatis}, \bibinfo{person}{Scott Shenker}, {and} \bibinfo{person}{John
  Kubiatowicz}.} \bibinfo{year}{2007}\natexlab{}.
\newblock \showarticletitle{Attested append-only memory: Making adversaries
  stick to their word}.
\newblock \bibinfo{journal}{\emph{ACM SIGOPS Operating Systems Review}}
  \bibinfo{volume}{41}, \bibinfo{number}{6} (\bibinfo{year}{2007}),
  \bibinfo{pages}{189--204}.
\newblock


\bibitem[Civit et~al\mbox{.}(2021)]%
        {civit2021polygraph}
\bibfield{author}{\bibinfo{person}{Pierre Civit}, \bibinfo{person}{Seth
  Gilbert}, {and} \bibinfo{person}{Vincent Gramoli}.}
  \bibinfo{year}{2021}\natexlab{}.
\newblock \showarticletitle{Polygraph: Accountable byzantine agreement}. In
  \bibinfo{booktitle}{\emph{2021 IEEE 41st International Conference on
  Distributed Computing Systems (ICDCS)}}. IEEE, \bibinfo{pages}{403--413}.
\newblock


\bibitem[Clement et~al\mbox{.}(2009)]%
        {clement2009making}
\bibfield{author}{\bibinfo{person}{Allen Clement}, \bibinfo{person}{Edmund~L
  Wong}, \bibinfo{person}{Lorenzo Alvisi}, \bibinfo{person}{Michael Dahlin},
  {and} \bibinfo{person}{Mirco Marchetti}.} \bibinfo{year}{2009}\natexlab{}.
\newblock \showarticletitle{Making Byzantine Fault Tolerant Systems Tolerate
  Byzantine Faults.}. In \bibinfo{booktitle}{\emph{NSDI}},
  Vol.~\bibinfo{volume}{9}. \bibinfo{pages}{153--168}.
\newblock


\bibitem[Concord-bft(2019)]%
        {concord}
\bibfield{author}{\bibinfo{person}{Concord-bft}.}
  \bibinfo{year}{2019}\natexlab{}.
\newblock \bibinfo{title}{Concord-BFT: a Distributed Trust Infrastructure}.
\newblock
\newblock
\urldef\tempurl%
\url{https://github.com/vmware/concord-bft}
\showURL{%
\tempurl}


\bibitem[Corbat{\'o} et~al\mbox{.}(1962)]%
        {corbato1962experimental}
\bibfield{author}{\bibinfo{person}{Fernando~J Corbat{\'o}},
  \bibinfo{person}{Marjorie Merwin-Daggett}, {and} \bibinfo{person}{Robert~C
  Daley}.} \bibinfo{year}{1962}\natexlab{}.
\newblock \showarticletitle{An experimental time-sharing system}. In
  \bibinfo{booktitle}{\emph{Proceedings of the May 1-3, 1962, spring joint
  computer conference}}. \bibinfo{pages}{335--344}.
\newblock


\bibitem[Crain et~al\mbox{.}(2018)]%
        {crain2018dbft}
\bibfield{author}{\bibinfo{person}{Tyler Crain}, \bibinfo{person}{Vincent
  Gramoli}, \bibinfo{person}{Mikel Larrea}, {and} \bibinfo{person}{Michel
  Raynal}.} \bibinfo{year}{2018}\natexlab{}.
\newblock \showarticletitle{Dbft: Efficient leaderless byzantine consensus and
  its application to blockchains}. In \bibinfo{booktitle}{\emph{2018 IEEE 17th
  International Symposium on Network Computing and Applications (NCA)}}. IEEE,
  \bibinfo{pages}{1--8}.
\newblock


\bibitem[Daian et~al\mbox{.}(2020)]%
        {daian2020flash}
\bibfield{author}{\bibinfo{person}{Philip Daian}, \bibinfo{person}{Steven
  Goldfeder}, \bibinfo{person}{Tyler Kell}, \bibinfo{person}{Yunqi Li},
  \bibinfo{person}{Xueyuan Zhao}, \bibinfo{person}{Iddo Bentov},
  \bibinfo{person}{Lorenz Breidenbach}, {and} \bibinfo{person}{Ari Juels}.}
  \bibinfo{year}{2020}\natexlab{}.
\newblock \showarticletitle{Flash boys 2.0: Frontrunning in decentralized
  exchanges, miner extractable value, and consensus instability}. In
  \bibinfo{booktitle}{\emph{2020 IEEE Symposium on Security and Privacy (SP)}}.
  IEEE, \bibinfo{pages}{910--927}.
\newblock


\bibitem[Danezis et~al\mbox{.}(2022)]%
        {danezis2022narwhal}
\bibfield{author}{\bibinfo{person}{George Danezis}, \bibinfo{person}{Lefteris
  Kokoris-Kogias}, \bibinfo{person}{Alberto Sonnino}, {and}
  \bibinfo{person}{Alexander Spiegelman}.} \bibinfo{year}{2022}\natexlab{}.
\newblock \showarticletitle{Narwhal and Tusk: a DAG-based mempool and efficient
  BFT consensus}. In \bibinfo{booktitle}{\emph{Proceedings of the Seventeenth
  European Conference on Computer Systems}}. \bibinfo{pages}{34--50}.
\newblock


\bibitem[Diem(2020)]%
        {diem}
\bibfield{author}{\bibinfo{person}{Diem}.} \bibinfo{year}{2020}\natexlab{}.
\newblock \bibinfo{title}{The Diem blockchain}.
\newblock
\newblock
\newblock
\shownote{\url{https://developers.diem.com}}.


\bibitem[Diem(2021)]%
        {diemConsensus}
\bibfield{author}{\bibinfo{person}{Diem}.} \bibinfo{year}{2021}\natexlab{}.
\newblock \bibinfo{title}{DiemBFT v4: State Machine Replication in the Diem
  Blockchain}.
\newblock
\newblock
\newblock
\shownote{\url{https://developers.diem.com/docs/technical-papers/state-machine-replication-paper}}.


\bibitem[Distler and Kapitza(2011)]%
        {distler2011increasing}
\bibfield{author}{\bibinfo{person}{Tobias Distler} {and}
  \bibinfo{person}{R{\"u}diger Kapitza}.} \bibinfo{year}{2011}\natexlab{}.
\newblock \showarticletitle{Increasing performance in Byzantine fault-tolerant
  systems with on-demand replica consistency}. In
  \bibinfo{booktitle}{\emph{Proceedings of the sixth conference on Computer
  systems}}. \bibinfo{pages}{91--106}.
\newblock


\bibitem[Duan et~al\mbox{.}(2014)]%
        {duan2014hbft}
\bibfield{author}{\bibinfo{person}{Sisi Duan}, \bibinfo{person}{Sean Peisert},
  {and} \bibinfo{person}{Karl~N Levitt}.} \bibinfo{year}{2014}\natexlab{}.
\newblock \showarticletitle{hBFT: speculative Byzantine fault tolerance with
  minimum cost}.
\newblock \bibinfo{journal}{\emph{IEEE Transactions on Dependable and Secure
  Computing}} \bibinfo{volume}{12}, \bibinfo{number}{1} (\bibinfo{year}{2014}),
  \bibinfo{pages}{58--70}.
\newblock


\bibitem[Duan et~al\mbox{.}(2018)]%
        {duan2018beat}
\bibfield{author}{\bibinfo{person}{Sisi Duan}, \bibinfo{person}{Michael~K
  Reiter}, {and} \bibinfo{person}{Haibin Zhang}.}
  \bibinfo{year}{2018}\natexlab{}.
\newblock \showarticletitle{BEAT: Asynchronous BFT made practical}. In
  \bibinfo{booktitle}{\emph{Proceedings of the 2018 ACM SIGSAC Conference on
  Computer and Communications Security}}. \bibinfo{pages}{2028--2041}.
\newblock


\bibitem[Dwork et~al\mbox{.}(1988)]%
        {dwork1988consensus}
\bibfield{author}{\bibinfo{person}{Cynthia Dwork}, \bibinfo{person}{Nancy
  Lynch}, {and} \bibinfo{person}{Larry Stockmeyer}.}
  \bibinfo{year}{1988}\natexlab{}.
\newblock \showarticletitle{Consensus in the presence of partial synchrony}.
\newblock \bibinfo{journal}{\emph{Journal of the ACM (JACM)}}
  \bibinfo{volume}{35}, \bibinfo{number}{2} (\bibinfo{year}{1988}),
  \bibinfo{pages}{288--323}.
\newblock


\bibitem[El-Hindi et~al\mbox{.}(2019)]%
        {el2019blockchaindb}
\bibfield{author}{\bibinfo{person}{Muhammad El-Hindi}, \bibinfo{person}{Carsten
  Binnig}, \bibinfo{person}{Arvind Arasu}, \bibinfo{person}{Donald Kossmann},
  {and} \bibinfo{person}{Ravi Ramamurthy}.} \bibinfo{year}{2019}\natexlab{}.
\newblock \showarticletitle{BlockchainDB: A shared database on blockchains}.
\newblock \bibinfo{journal}{\emph{Proceedings of the VLDB Endowment}}
  \bibinfo{volume}{12}, \bibinfo{number}{11} (\bibinfo{year}{2019}),
  \bibinfo{pages}{1597--1609}.
\newblock


\bibitem[Ford(2019)]%
        {ford2019threshold}
\bibfield{author}{\bibinfo{person}{Bryan Ford}.}
  \bibinfo{year}{2019}\natexlab{}.
\newblock \showarticletitle{Threshold logical clocks for asynchronous
  distributed coordination and consensus}.
\newblock \bibinfo{journal}{\emph{arXiv preprint arXiv:1907.07010}}
  (\bibinfo{year}{2019}).
\newblock


\bibitem[Gifford(1979)]%
        {gifford1979weighted}
\bibfield{author}{\bibinfo{person}{David~K Gifford}.}
  \bibinfo{year}{1979}\natexlab{}.
\newblock \showarticletitle{Weighted voting for replicated data}. In
  \bibinfo{booktitle}{\emph{Proceedings of the seventh ACM symposium on
  Operating systems principles}}. \bibinfo{pages}{150--162}.
\newblock


\bibitem[Guerraoui et~al\mbox{.}(2010)]%
        {guerraoui2010next}
\bibfield{author}{\bibinfo{person}{Rachid Guerraoui}, \bibinfo{person}{Nikola
  Kne{\v{z}}evi{\'c}}, \bibinfo{person}{Vivien Qu{\'e}ma}, {and}
  \bibinfo{person}{Marko Vukoli{\'c}}.} \bibinfo{year}{2010}\natexlab{}.
\newblock \showarticletitle{The next 700 BFT protocols}. In
  \bibinfo{booktitle}{\emph{Proceedings of the 5th European conference on
  Computer systems}}. \bibinfo{pages}{363--376}.
\newblock


\bibitem[Gueta et~al\mbox{.}(2019)]%
        {gueta2019sbft}
\bibfield{author}{\bibinfo{person}{Guy~Golan Gueta}, \bibinfo{person}{Ittai
  Abraham}, \bibinfo{person}{Shelly Grossman}, \bibinfo{person}{Dahlia Malkhi},
  \bibinfo{person}{Benny Pinkas}, \bibinfo{person}{Michael Reiter},
  \bibinfo{person}{Dragos-Adrian Seredinschi}, \bibinfo{person}{Orr Tamir},
  {and} \bibinfo{person}{Alin Tomescu}.} \bibinfo{year}{2019}\natexlab{}.
\newblock \showarticletitle{SBFT: a scalable and decentralized trust
  infrastructure}. In \bibinfo{booktitle}{\emph{2019 49th Annual IEEE/IFIP
  international conference on dependable systems and networks (DSN)}}. IEEE,
  \bibinfo{pages}{568--580}.
\newblock


\bibitem[Gunn et~al\mbox{.}(2019)]%
        {gunn2019making}
\bibfield{author}{\bibinfo{person}{Lachlan~J Gunn}, \bibinfo{person}{Jian Liu},
  \bibinfo{person}{Bruno Vavala}, {and} \bibinfo{person}{N Asokan}.}
  \bibinfo{year}{2019}\natexlab{}.
\newblock \showarticletitle{Making speculative BFT resilient with trusted
  monotonic counters}. In \bibinfo{booktitle}{\emph{2019 38th Symposium on
  Reliable Distributed Systems (SRDS)}}. IEEE, \bibinfo{pages}{133--13309}.
\newblock


\bibitem[Guo et~al\mbox{.}(2013)]%
        {guo2013failure}
\bibfield{author}{\bibinfo{person}{Zhenyu Guo}, \bibinfo{person}{Sean
  McDirmid}, \bibinfo{person}{Mao Yang}, \bibinfo{person}{Li Zhuang},
  \bibinfo{person}{Pu Zhang}, \bibinfo{person}{Yingwei Luo},
  \bibinfo{person}{Tom Bergan}, \bibinfo{person}{Peter Bodik},
  \bibinfo{person}{Madan Musuvathi}, \bibinfo{person}{Zheng Zhang},
  {et~al\mbox{.}}} \bibinfo{year}{2013}\natexlab{}.
\newblock \showarticletitle{Failure recovery: When the cure is worse than the
  disease}. In \bibinfo{booktitle}{\emph{HotOS}}.
\newblock


\bibitem[Gupta et~al\mbox{.}(2020)]%
        {gupta2020resilientdb}
\bibfield{author}{\bibinfo{person}{Suyash Gupta}, \bibinfo{person}{Sajjad
  Rahnama}, \bibinfo{person}{Jelle Hellings}, {and} \bibinfo{person}{Mohammad
  Sadoghi}.} \bibinfo{year}{2020}\natexlab{}.
\newblock \showarticletitle{ResilientDB: global scale resilient blockchain
  fabric}.
\newblock \bibinfo{journal}{\emph{Proceedings of the VLDB Endowment}}
  \bibinfo{volume}{13}, \bibinfo{number}{6} (\bibinfo{year}{2020}),
  \bibinfo{pages}{868--883}.
\newblock


\bibitem[HotStuff(2019)]%
        {libhotstuff}
\bibfield{author}{\bibinfo{person}{HotStuff}.} \bibinfo{year}{2019}\natexlab{}.
\newblock \bibinfo{title}{Libhotstuff: A general-puropse BFT state machine
  replication library with modularity and simplicity}.
\newblock
\newblock
\urldef\tempurl%
\url{https://github.com/hot-stuff/libhotstuff}
\showURL{%
\tempurl}


\bibitem[Huang et~al\mbox{.}(2021)]%
        {huang2021rich}
\bibfield{author}{\bibinfo{person}{Yuming Huang}, \bibinfo{person}{Jing Tang},
  \bibinfo{person}{Qianhao Cong}, \bibinfo{person}{Andrew Lim}, {and}
  \bibinfo{person}{Jianliang Xu}.} \bibinfo{year}{2021}\natexlab{}.
\newblock \showarticletitle{Do the rich get richer? Fairness analysis for
  blockchain incentives}. In \bibinfo{booktitle}{\emph{Proceedings of the 2021
  International Conference on Management of Data}}. \bibinfo{pages}{790--803}.
\newblock


\bibitem[Kapitza et~al\mbox{.}(2012)]%
        {kapitza2012cheapbft}
\bibfield{author}{\bibinfo{person}{R{\"u}diger Kapitza},
  \bibinfo{person}{Johannes Behl}, \bibinfo{person}{Christian Cachin},
  \bibinfo{person}{Tobias Distler}, \bibinfo{person}{Simon Kuhnle},
  \bibinfo{person}{Seyed~Vahid Mohammadi}, \bibinfo{person}{Wolfgang
  Schr{\"o}der-Preikschat}, {and} \bibinfo{person}{Klaus Stengel}.}
  \bibinfo{year}{2012}\natexlab{}.
\newblock \showarticletitle{CheapBFT: Resource-efficient Byzantine fault
  tolerance}. In \bibinfo{booktitle}{\emph{Proceedings of the 7th ACM european
  conference on Computer Systems}}. \bibinfo{pages}{295--308}.
\newblock


\bibitem[Keidar et~al\mbox{.}(2021)]%
        {keidar2021all}
\bibfield{author}{\bibinfo{person}{Idit Keidar}, \bibinfo{person}{Eleftherios
  Kokoris-Kogias}, \bibinfo{person}{Oded Naor}, {and}
  \bibinfo{person}{Alexander Spiegelman}.} \bibinfo{year}{2021}\natexlab{}.
\newblock \showarticletitle{All you need is dag}. In
  \bibinfo{booktitle}{\emph{Proceedings of the 2021 ACM Symposium on Principles
  of Distributed Computing}}. \bibinfo{pages}{165--175}.
\newblock


\bibitem[Kokoris-Kogias et~al\mbox{.}(2018)]%
        {kokoris2018omniledger}
\bibfield{author}{\bibinfo{person}{Eleftherios Kokoris-Kogias},
  \bibinfo{person}{Philipp Jovanovic}, \bibinfo{person}{Linus Gasser},
  \bibinfo{person}{Nicolas Gailly}, \bibinfo{person}{Ewa Syta}, {and}
  \bibinfo{person}{Bryan Ford}.} \bibinfo{year}{2018}\natexlab{}.
\newblock \showarticletitle{{Omniledger: A secure, scale-out, decentralized
  ledger via sharding}}. In \bibinfo{booktitle}{\emph{2018 IEEE Symposium on
  Security and Privacy (SP)}}. IEEE, \bibinfo{pages}{583--598}.
\newblock


\bibitem[Kotla et~al\mbox{.}(2007)]%
        {kotla2007zyzzyva}
\bibfield{author}{\bibinfo{person}{Ramakrishna Kotla}, \bibinfo{person}{Lorenzo
  Alvisi}, \bibinfo{person}{Mike Dahlin}, \bibinfo{person}{Allen Clement},
  {and} \bibinfo{person}{Edmund Wong}.} \bibinfo{year}{2007}\natexlab{}.
\newblock \showarticletitle{Zyzzyva: speculative byzantine fault tolerance}. In
  \bibinfo{booktitle}{\emph{Proceedings of twenty-first ACM SIGOPS symposium on
  Operating systems principles}}. \bibinfo{pages}{45--58}.
\newblock


\bibitem[Lamport(2011)]%
        {lamport2011brief}
\bibfield{author}{\bibinfo{person}{Leslie Lamport}.}
  \bibinfo{year}{2011}\natexlab{}.
\newblock \showarticletitle{Brief announcement: Leaderless {Byzantine} paxos}.
  In \bibinfo{booktitle}{\emph{International Symposium on Distributed
  Computing}}. Springer, \bibinfo{pages}{141--142}.
\newblock


\bibitem[Levin et~al\mbox{.}(2009)]%
        {levin2009trinc}
\bibfield{author}{\bibinfo{person}{Dave Levin}, \bibinfo{person}{John~R
  Douceur}, \bibinfo{person}{Jacob~R Lorch}, {and} \bibinfo{person}{Thomas
  Moscibroda}.} \bibinfo{year}{2009}\natexlab{}.
\newblock \showarticletitle{TrInc: Small Trusted Hardware for Large Distributed
  Systems.}. In \bibinfo{booktitle}{\emph{NSDI}}, Vol.~\bibinfo{volume}{9}.
  \bibinfo{pages}{1--14}.
\newblock


\bibitem[Lewis(2014)]%
        {lewis2014flash}
\bibfield{author}{\bibinfo{person}{Michael Lewis}.}
  \bibinfo{year}{2014}\natexlab{}.
\newblock \bibinfo{booktitle}{\emph{Flash boys: a Wall Street revolt}}.
\newblock \bibinfo{publisher}{WW Norton \& Company}.
\newblock


\bibitem[Libert et~al\mbox{.}(2016)]%
        {libert2016born}
\bibfield{author}{\bibinfo{person}{Beno{\^\i}t Libert}, \bibinfo{person}{Marc
  Joye}, {and} \bibinfo{person}{Moti Yung}.} \bibinfo{year}{2016}\natexlab{}.
\newblock \showarticletitle{Born and raised distributively: Fully distributed
  non-interactive adaptively-secure threshold signatures with short shares}.
\newblock \bibinfo{journal}{\emph{Theoretical Computer Science}}
  \bibinfo{volume}{645} (\bibinfo{year}{2016}), \bibinfo{pages}{1--24}.
\newblock


\bibitem[Liu et~al\mbox{.}(2016)]%
        {liu2016xft}
\bibfield{author}{\bibinfo{person}{Shengyun Liu}, \bibinfo{person}{Paolo
  Viotti}, \bibinfo{person}{Christian Cachin}, \bibinfo{person}{Vivien
  Qu{\'e}ma}, {and} \bibinfo{person}{Marko Vukoli{\'c}}.}
  \bibinfo{year}{2016}\natexlab{}.
\newblock \showarticletitle{XFT: Practical fault tolerance beyond crashes}. In
  \bibinfo{booktitle}{\emph{12th USENIX Symposium on Operating Systems Design
  and Implementation (OSDI 16)}}. \bibinfo{pages}{485--500}.
\newblock


\bibitem[Martin and Alvisi(2006)]%
        {martin2006fast}
\bibfield{author}{\bibinfo{person}{J-P Martin} {and} \bibinfo{person}{Lorenzo
  Alvisi}.} \bibinfo{year}{2006}\natexlab{}.
\newblock \showarticletitle{Fast byzantine consensus}.
\newblock \bibinfo{journal}{\emph{IEEE Transactions on Dependable and Secure
  Computing}} \bibinfo{volume}{3}, \bibinfo{number}{3} (\bibinfo{year}{2006}),
  \bibinfo{pages}{202--215}.
\newblock


\bibitem[McDougall and Mauro(2006)]%
        {mcdougall2006solaris}
\bibfield{author}{\bibinfo{person}{Richard McDougall} {and}
  \bibinfo{person}{Jim Mauro}.} \bibinfo{year}{2006}\natexlab{}.
\newblock \bibinfo{booktitle}{\emph{Solaris Internals: Solaris 10 and
  OpenSolaris Kernel Architecture (paperback)}}.
\newblock \bibinfo{publisher}{Pearson Education}.
\newblock


\bibitem[McFarling(1993)]%
        {mcfarling1993combining}
\bibfield{author}{\bibinfo{person}{Scott McFarling}.}
  \bibinfo{year}{1993}\natexlab{}.
\newblock \bibinfo{booktitle}{\emph{Combining branch predictors}}.
\newblock \bibinfo{type}{{T}echnical {R}eport}.
  \bibinfo{institution}{Citeseer}.
\newblock


\bibitem[Miller et~al\mbox{.}(2016)]%
        {miller2016honey}
\bibfield{author}{\bibinfo{person}{Andrew Miller}, \bibinfo{person}{Yu Xia},
  \bibinfo{person}{Kyle Croman}, \bibinfo{person}{Elaine Shi}, {and}
  \bibinfo{person}{Dawn Song}.} \bibinfo{year}{2016}\natexlab{}.
\newblock \showarticletitle{The honey badger of BFT protocols}. In
  \bibinfo{booktitle}{\emph{Proceedings of the 2016 ACM SIGSAC Conference on
  Computer and Communications Security}}. \bibinfo{pages}{31--42}.
\newblock


\bibitem[Mostefaoui et~al\mbox{.}(2014)]%
        {mostefaoui2014signature}
\bibfield{author}{\bibinfo{person}{Achour Mostefaoui}, \bibinfo{person}{Hamouma
  Moumen}, {and} \bibinfo{person}{Michel Raynal}.}
  \bibinfo{year}{2014}\natexlab{}.
\newblock \showarticletitle{Signature-free asynchronous Byzantine consensus
  with t< n/3 and O (n2) messages}. In \bibinfo{booktitle}{\emph{Proceedings of
  the 2014 ACM symposium on Principles of distributed computing}}.
  \bibinfo{pages}{2--9}.
\newblock


\bibitem[Nakamoto(2019)]%
        {nakamoto2019bitcoin}
\bibfield{author}{\bibinfo{person}{Satoshi Nakamoto}.}
  \bibinfo{year}{2019}\natexlab{}.
\newblock \bibinfo{booktitle}{\emph{Bitcoin: A peer-to-peer electronic cash
  system}}.
\newblock \bibinfo{type}{{T}echnical {R}eport}. \bibinfo{institution}{Manubot}.
\newblock


\bibitem[Neu et~al\mbox{.}(2021)]%
        {neu2021ebb}
\bibfield{author}{\bibinfo{person}{Joachim Neu}, \bibinfo{person}{Ertem~Nusret
  Tas}, {and} \bibinfo{person}{David Tse}.} \bibinfo{year}{2021}\natexlab{}.
\newblock \showarticletitle{Ebb-and-flow protocols: A resolution of the
  availability-finality dilemma}. In \bibinfo{booktitle}{\emph{2021 IEEE
  Symposium on Security and Privacy (SP)}}. IEEE, \bibinfo{pages}{446--465}.
\newblock


\bibitem[Ngo et~al\mbox{.}(2020)]%
        {ngo2020tolerating}
\bibfield{author}{\bibinfo{person}{Khiem Ngo}, \bibinfo{person}{Siddhartha
  Sen}, {and} \bibinfo{person}{Wyatt Lloyd}.} \bibinfo{year}{2020}\natexlab{}.
\newblock \showarticletitle{Tolerating slowdowns in replicated state machines
  using copilots}. In \bibinfo{booktitle}{\emph{14th USENIX Symposium on
  Operating Systems Design and Implementation (OSDI 20)}}.
  \bibinfo{pages}{583--598}.
\newblock


\bibitem[O'neil et~al\mbox{.}(1993)]%
        {o1993lru}
\bibfield{author}{\bibinfo{person}{Elizabeth~J O'neil},
  \bibinfo{person}{Patrick~E O'neil}, {and} \bibinfo{person}{Gerhard Weikum}.}
  \bibinfo{year}{1993}\natexlab{}.
\newblock \showarticletitle{The LRU-K page replacement algorithm for database
  disk buffering}.
\newblock \bibinfo{journal}{\emph{Acm Sigmod Record}} \bibinfo{volume}{22},
  \bibinfo{number}{2} (\bibinfo{year}{1993}), \bibinfo{pages}{297--306}.
\newblock


\bibitem[Ongaro and Ousterhout(2014)]%
        {ongaro2014search}
\bibfield{author}{\bibinfo{person}{Diego Ongaro} {and} \bibinfo{person}{John
  Ousterhout}.} \bibinfo{year}{2014}\natexlab{}.
\newblock \showarticletitle{In search of an understandable consensus
  algorithm}. In \bibinfo{booktitle}{\emph{2014 USENIX Annual Technical
  Conference (USENIX ATC 14)}}. \bibinfo{pages}{305--319}.
\newblock


\bibitem[Pass and Shi(2018)]%
        {pass2018thunderella}
\bibfield{author}{\bibinfo{person}{Rafael Pass} {and} \bibinfo{person}{Elaine
  Shi}.} \bibinfo{year}{2018}\natexlab{}.
\newblock \showarticletitle{Thunderella: Blockchains with optimistic instant
  confirmation}. In \bibinfo{booktitle}{\emph{Annual International Conference
  on the Theory and Applications of Cryptographic Techniques}}. Springer,
  \bibinfo{pages}{3--33}.
\newblock


\bibitem[Peng et~al\mbox{.}(2020)]%
        {peng2020falcondb}
\bibfield{author}{\bibinfo{person}{Yanqing Peng}, \bibinfo{person}{Min Du},
  \bibinfo{person}{Feifei Li}, \bibinfo{person}{Raymond Cheng}, {and}
  \bibinfo{person}{Dawn Song}.} \bibinfo{year}{2020}\natexlab{}.
\newblock \showarticletitle{FalconDB: Blockchain-based collaborative database}.
  In \bibinfo{booktitle}{\emph{Proceedings of the 2020 ACM SIGMOD International
  Conference on Management of Data}}. \bibinfo{pages}{637--652}.
\newblock


\bibitem[Satija et~al\mbox{.}(2020)]%
        {satija2020blockene}
\bibfield{author}{\bibinfo{person}{Sambhav Satija}, \bibinfo{person}{Apurv
  Mehra}, \bibinfo{person}{Sudheesh Singanamalla}, \bibinfo{person}{Karan
  Grover}, \bibinfo{person}{Muthian Sivathanu}, \bibinfo{person}{Nishanth
  Chandran}, \bibinfo{person}{Divya Gupta}, {and} \bibinfo{person}{Satya
  Lokam}.} \bibinfo{year}{2020}\natexlab{}.
\newblock \showarticletitle{Blockene: A high-throughput blockchain over mobile
  devices}. In \bibinfo{booktitle}{\emph{14th USENIX Symposium on Operating
  Systems Design and Implementation (OSDI 20)}}. \bibinfo{pages}{567--582}.
\newblock


\bibitem[Schneider(1990)]%
        {schneider1990state}
\bibfield{author}{\bibinfo{person}{Fred~B Schneider}.}
  \bibinfo{year}{1990}\natexlab{}.
\newblock \showarticletitle{The state machine approach: A tutorial}.
\newblock \bibinfo{journal}{\emph{Fault-tolerant distributed computing}}
  (\bibinfo{year}{1990}), \bibinfo{pages}{18--41}.
\newblock


\bibitem[Shamis et~al\mbox{.}(2022)]%
        {shamis2022ia}
\bibfield{author}{\bibinfo{person}{Alex Shamis}, \bibinfo{person}{Peter
  Pietzuch}, \bibinfo{person}{Burcu Canakci}, \bibinfo{person}{Miguel Castro},
  \bibinfo{person}{C{\'e}dric Fournet}, \bibinfo{person}{Edward Ashton},
  \bibinfo{person}{Amaury Chamayou}, \bibinfo{person}{Sylvan Clebsch},
  \bibinfo{person}{Antoine Delignat-Lavaud}, \bibinfo{person}{Matthew Kerner},
  {et~al\mbox{.}}} \bibinfo{year}{2022}\natexlab{}.
\newblock \showarticletitle{{IA-CCF: Individual Accountability for Permissioned
  Ledgers}}. In \bibinfo{booktitle}{\emph{19th USENIX Symposium on Networked
  Systems Design and Implementation (NSDI 22)}}. \bibinfo{pages}{467--491}.
\newblock


\bibitem[Shoup(2000)]%
        {shoup2000practical}
\bibfield{author}{\bibinfo{person}{Victor Shoup}.}
  \bibinfo{year}{2000}\natexlab{}.
\newblock \showarticletitle{Practical threshold signatures}. In
  \bibinfo{booktitle}{\emph{International Conference on the Theory and
  Applications of Cryptographic Techniques}}. Springer,
  \bibinfo{pages}{207--220}.
\newblock


\bibitem[Song and van Renesse(2008)]%
        {song2008bosco}
\bibfield{author}{\bibinfo{person}{Yee~Jiun Song} {and}
  \bibinfo{person}{Robbert van Renesse}.} \bibinfo{year}{2008}\natexlab{}.
\newblock \showarticletitle{Bosco: One-step byzantine asynchronous consensus}.
  In \bibinfo{booktitle}{\emph{International Symposium on Distributed
  Computing}}. Springer, \bibinfo{pages}{438--450}.
\newblock


\bibitem[Sousa et~al\mbox{.}(2018)]%
        {sousa2018byzantine}
\bibfield{author}{\bibinfo{person}{Joao Sousa}, \bibinfo{person}{Alysson
  Bessani}, {and} \bibinfo{person}{Marko Vukolic}.}
  \bibinfo{year}{2018}\natexlab{}.
\newblock \showarticletitle{A byzantine fault-tolerant ordering service for the
  hyperledger fabric blockchain platform}. In \bibinfo{booktitle}{\emph{2018
  48th annual IEEE/IFIP international conference on dependable systems and
  networks (DSN)}}. IEEE, \bibinfo{pages}{51--58}.
\newblock


\bibitem[Spiegelman et~al\mbox{.}(2022)]%
        {spiegelman2022bullshark}
\bibfield{author}{\bibinfo{person}{Alexander Spiegelman}, \bibinfo{person}{Neil
  Giridharan}, \bibinfo{person}{Alberto Sonnino}, {and}
  \bibinfo{person}{Lefteris Kokoris-Kogias}.} \bibinfo{year}{2022}\natexlab{}.
\newblock \showarticletitle{{Bullshark: Dag BFT protocols made practical}}. In
  \bibinfo{booktitle}{\emph{Proceedings of the 2022 ACM SIGSAC Conference on
  Computer and Communications Security}}. \bibinfo{pages}{2705--2718}.
\newblock


\bibitem[Stathakopoulou et~al\mbox{.}(2019)]%
        {stathakopoulou2019mir}
\bibfield{author}{\bibinfo{person}{Chrysoula Stathakopoulou},
  \bibinfo{person}{Tudor David}, {and} \bibinfo{person}{Marko Vukolic}.}
  \bibinfo{year}{2019}\natexlab{}.
\newblock \showarticletitle{{Mir-bft: High-throughput BFT for blockchains}}.
\newblock \bibinfo{journal}{\emph{arXiv preprint arXiv:1906.05552}}
  (\bibinfo{year}{2019}), \bibinfo{pages}{92}.
\newblock


\bibitem[Stathakopoulou et~al\mbox{.}(2022)]%
        {stathakopoulou2022state}
\bibfield{author}{\bibinfo{person}{Chrysoula Stathakopoulou},
  \bibinfo{person}{Matej Pavlovic}, {and} \bibinfo{person}{Marko Vukoli{\'c}}.}
  \bibinfo{year}{2022}\natexlab{}.
\newblock \showarticletitle{State machine replication scalability made simple}.
  In \bibinfo{booktitle}{\emph{Proceedings of the Seventeenth European
  Conference on Computer Systems}}. \bibinfo{pages}{17--33}.
\newblock


\bibitem[Suri-Payer et~al\mbox{.}(2021)]%
        {suri2021basil}
\bibfield{author}{\bibinfo{person}{Florian Suri-Payer},
  \bibinfo{person}{Matthew Burke}, \bibinfo{person}{Zheng Wang},
  \bibinfo{person}{Yunhao Zhang}, \bibinfo{person}{Lorenzo Alvisi}, {and}
  \bibinfo{person}{Natacha Crooks}.} \bibinfo{year}{2021}\natexlab{}.
\newblock \showarticletitle{Basil: Breaking up BFT with ACID (transactions)}.
  In \bibinfo{booktitle}{\emph{Proceedings of the ACM SIGOPS 28th Symposium on
  Operating Systems Principles}}. \bibinfo{pages}{1--17}.
\newblock


\bibitem[Vassantlal et~al\mbox{.}(2022)]%
        {vassantlal2022cobra}
\bibfield{author}{\bibinfo{person}{Robin Vassantlal}, \bibinfo{person}{Eduardo
  Alchieri}, \bibinfo{person}{Bernardo Ferreira}, {and}
  \bibinfo{person}{Alysson Bessani}.} \bibinfo{year}{2022}\natexlab{}.
\newblock \showarticletitle{COBRA: Dynamic Proactive Secret Sharing for
  Confidential BFT Services}. In \bibinfo{booktitle}{\emph{2022 IEEE Symposium
  on Security and Privacy (SP)}}. IEEE Computer Society,
  \bibinfo{pages}{1528--1528}.
\newblock


\bibitem[Yang et~al\mbox{.}(2021)]%
        {yang2021dispersedledger}
\bibfield{author}{\bibinfo{person}{Lei Yang}, \bibinfo{person}{Seo~Jin Park},
  \bibinfo{person}{Mohammad Alizadeh}, \bibinfo{person}{Sreeram Kannan}, {and}
  \bibinfo{person}{David Tse}.} \bibinfo{year}{2021}\natexlab{}.
\newblock \showarticletitle{DispersedLedger: High-Throughput Byzantine
  Consensus on Variable Bandwidth Networks}.
\newblock \bibinfo{journal}{\emph{arXiv preprint arXiv:2110.04371}}
  (\bibinfo{year}{2021}).
\newblock


\bibitem[Yin et~al\mbox{.}(2019)]%
        {yin2019hotstuff}
\bibfield{author}{\bibinfo{person}{Maofan Yin}, \bibinfo{person}{Dahlia
  Malkhi}, \bibinfo{person}{Michael~K. Reiter}, \bibinfo{person}{Guy
  Golan{-}Gueta}, {and} \bibinfo{person}{Ittai Abraham}.}
  \bibinfo{year}{2019}\natexlab{}.
\newblock \showarticletitle{HotStuff: {BFT} Consensus with Linearity and
  Responsiveness}. In \bibinfo{booktitle}{\emph{Proceedings of the 2019 {ACM}
  Symposium on Principles of Distributed Computing, {PODC} 2019, Toronto, ON,
  Canada, July 29 - August 2, 2019}}. \bibinfo{publisher}{{ACM}},
  \bibinfo{pages}{347--356}.
\newblock


\bibitem[Young and Smith(1994)]%
        {young1994improving}
\bibfield{author}{\bibinfo{person}{Cliff Young} {and}
  \bibinfo{person}{Michael~D Smith}.} \bibinfo{year}{1994}\natexlab{}.
\newblock \showarticletitle{Improving the accuracy of static branch prediction
  using branch correlation}.
\newblock \bibinfo{journal}{\emph{ACM SIGOPS Operating Systems Review}}
  \bibinfo{volume}{28}, \bibinfo{number}{5} (\bibinfo{year}{1994}),
  \bibinfo{pages}{232--241}.
\newblock


\bibitem[Zhang and Jacobsen(2021)]%
        {zhang2021prosecutor}
\bibfield{author}{\bibinfo{person}{Gengrui Zhang} {and}
  \bibinfo{person}{Hans-Arno Jacobsen}.} \bibinfo{year}{2021}\natexlab{}.
\newblock \showarticletitle{Prosecutor: {An} efficient {BFT} consensus
  algorithm with behavior-aware penalization against Byzantine attacks}. In
  \bibinfo{booktitle}{\emph{Proceedings of the 22nd International Middleware
  Conference}}. \bibinfo{pages}{52--63}.
\newblock


\bibitem[Zhang et~al\mbox{.}(2020)]%
        {zhang2020byzantine}
\bibfield{author}{\bibinfo{person}{Yunhao Zhang}, \bibinfo{person}{Srinath
  T.~V. Setty}, \bibinfo{person}{Qi Chen}, \bibinfo{person}{Lidong Zhou}, {and}
  \bibinfo{person}{Lorenzo Alvisi}.} \bibinfo{year}{2020}\natexlab{}.
\newblock \showarticletitle{Byzantine Ordered Consensus without Byzantine
  Oligarchy}. In \bibinfo{booktitle}{\emph{14th {USENIX} Symposium on Operating
  Systems Design and Implementation, {OSDI} 2020, Virtual Event, November 4-6,
  2020}}. \bibinfo{publisher}{{USENIX} Association}, \bibinfo{pages}{633--649}.
\newblock


\end{thebibliography}

\appendix
\section{Correctness Argument}
\label{sec:ap:correctness}
In this section, we show the correctness argument of \algo's view-change protocol and prove its safety and liveness.
We continue to use the partition of servers as in \S\ref{sec:correctnes}, where $n=3f+1$ servers are divided into three sets: $\mathcal{S}_1$ ($|\mathcal{S}_1|=f+1$) and $\mathcal{S}_2$ ($|\mathcal{S}_2|=f$) are non-faulty, and $\mathcal{S}_f$ ($|\mathcal{S}_f|=f$) are faulty servers.

\subsection{View change correctness} \label{ap:vc-correctness}
We first show the correctness argument of the client interaction of the view-change protocol that attains two key correctness properties: 

\begin{enumerate}
    \item Under a correct leader, no view change will be initiated (leadership robustness).
    \item Under a faulty leader, a view change must be initiated when the faulty leader cannot achieve consensus for client requests (leadership completeness).
\end{enumerate}

To prove the above two key properties, we first define two types of view changes as follows.

\begin{definition}[Unnecessary view changes] \label{def:unnecessaryvc}
Under a non-faulty leader, any view change initiated by other non-faulty servers is an unnecessary view change.
\end{definition}

\begin{definition}[Necessary view changes]
\label{def:necessaryvc}
Under a faulty leader, a view change initiated by a non-faulty server is necessary.
\end{definition}

\algo requires a non-faulty client to broadcast its complaint to all servers. As such, all non-faulty servers ($\mathcal{S}_1 \cup \mathcal{S}_2$ where $|\mathcal{S}_1 \cup \mathcal{S}_2|=2f+1$) are able to receive the complaint (illustrated in Figure~\ref{fig:ap:1:vc:necessity}a) and start the procedure of handling a client complaint (Line~\ref{aglo:af:procedurecompt}). 
Next, we show that faulty clients cannot trigger an unnecessary view change (Definition.~\ref{def:unnecessaryvc}) with or without faulty servers.

\begin{lemma} \label{ap:1:novc-1}
Under a non-faulty leader, faulty clients and non-faulty servers cannot trigger a view change.
\end{lemma}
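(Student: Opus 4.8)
The plan is to show that, under a non-faulty leader, no server can ever assemble the object the protocol demands before the \texttt{Redeemer} transition — namely a \texttt{conf\_QC}, i.e.\ a threshold signature over $f{+}1$ distinct \textsc{ReVC} messages. Since every server in this lemma is non-faulty, every \textsc{ReVC} is protocol-governed: a non-faulty server emits \textsc{ReVC} only on receipt of a \textsc{ConfVC} whose enclosed \textsc{Compt} it has itself received from the complaining client, and it emits that \textsc{ConfVC} only after its timer for the complained-about transaction expires without the transaction being committed. So the whole argument reduces to bounding how many non-faulty servers can ever reach the \textsc{ConfVC}-broadcast step for a given complaint issued by the faulty client(s). (Policy-driven view changes — e.g.\ a timing policy — are legitimate by construction and lie outside this lemma, which concerns only the complaint-driven failure-detection path.)

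First I would dispose of well-formedness: a non-faulty server relays and tracks a \textsc{Compt} only when the enclosed proposal carries a valid client signature ($\delta_c$ valid), so without loss of generality the faulty client's complaint is a well-formed \textsc{Prop} for some $tx$. Then I would run a dichotomy on the number $k$ of non-faulty servers that receive an identical copy of this complaint. Case $k \ge f{+}1$: these servers relay it, so the non-faulty leader receives at least $f{+}1$ \textsc{Compt} messages from distinct servers, hence deterministically starts a consensus instance for $tx$; after GST, with timeouts chosen sufficiently larger than $\Delta$ as the protocol stipulates, the non-faulty leader commits $tx$ within every non-faulty server's timer, so each of these servers returns at the ``\textsc{Compt}.tx is committed'' check and none broadcasts \textsc{ConfVC}. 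Case $k \le f$: even if all $k$ of these servers time out and broadcast \textsc{ConfVC}, only servers holding that exact \textsc{Compt} — at most these $k \le f$ — reply with \textsc{ReVC}, so no initiator (counting itself) collects $f{+}1$ matching \textsc{ReVC}s and no \texttt{conf\_QC} forms. A faulty client may issue arbitrarily many distinct complaint versions, but each version falls into one of the two cases, and equivocated versions only fragment the \textsc{ReVC} count further; hence no \texttt{conf\_QC} is ever produced, so no non-faulty server transitions to \texttt{Redeemer} and no view change is triggered.

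I expect the delicate step to be Case $k \ge f{+}1$: it silently relies on partial synchrony — the claim holds only after GST, with timers long enough for one correct commit round — and on the fact that the non-faulty leader cannot be deflected by the adversary (for instance by also being fed a spurious but well-formed \textsc{Prop}); I would make the GST/timer hypothesis explicit there. A secondary subtlety is the threshold bookkeeping in Case $k \le f$, where the initiator counts itself toward $f{+}1$: I must argue carefully that the set of servers able to emit \textsc{ReVC} for a fixed complaint has size at most $k$, which is strictly below $f{+}1$. With these two points nailed down, the lemma follows.
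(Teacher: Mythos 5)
Your proposal is correct, and it takes a noticeably different (and in one respect more careful) route than the paper. The paper's own proof splits only on whether the faulty client sends a complaint at all: if any non-faulty server receives one, it relays it, the non-faulty leader ``will achieve the consensus for the transaction piggybacked in the complaint,'' so every non-faulty server exits at the \textsc{Compt}.tx-committed check and never broadcasts \textsc{ConfVC}; the $f{+}1$ \texttt{conf\_QC} threshold is only invoked in the next lemma, against colluding faulty servers. You instead run a dichotomy on the number $k$ of non-faulty servers holding the identical complaint and, for $k\le f$, fall back on the \texttt{conf\_QC} threshold: at most $k\le f$ servers can ever emit a matching \textsc{ReVC}, so no initiator reaches $f{+}1$ and no one becomes a redeemer. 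This buys you robustness against the protocol detail that the leader's complaint-triggered path fires only on $f{+}1$ \textsc{Compt}s from distinct servers (or a direct \textsc{Prop}); the paper's proof implicitly assumes the leader commits the piggybacked transaction whenever even one complaint is relayed, which your small-$k$ case covers without that assumption. Your flag on the other case is also well placed: both your argument and the paper's rely on post-GST timeliness (timers sufficiently larger than $\Delta$) for the committed-in-time conclusion, a hypothesis the paper leaves implicit (and indeed acknowledges elsewhere via the penalty-refresh mechanism for pre-GST timeouts), so stating it explicitly is an improvement rather than a defect. The only minor bookkeeping point to tidy is whether the leader itself is counted in $k$ (if so, $k\ge f{+}1$ recipients may yield only $f$ relayed \textsc{Compt}s, but the leader then holds the embedded, client-signed \textsc{Prop} directly, so its consensus trigger still fires); this is an edge case the paper does not address at all.
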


\begin{proof}
Lemma~\ref{ap:1:novc-1} is straightforward. A faulty client can behave in one of two ways: \wct<1> it sends no complaint, or \wct<2> it sends its complaint to at least one server (illustrated in Figure~\ref{fig:ap:1:vc:necessity}b). 

Scenario \wct<1> simply does not affect our system, which can be disregarded.
In Scenario \wct<2>, when a non-faulty server receives a complaint, it relays it to the leader (Line~\ref{algo:af:relay}). Then, the leader will achieve the consensus for the transaction piggybacked in the complaint. Note that since our failure assumption does not assume DDOS attacks, non-faulty servers are able to handle every proposed request; e.g., if a faulty client sends different transactions to different servers, non-faulty servers will relay every complaint, and the leader will receive them in time. Therefore, all non-faulty servers will terminate this procedure (Line~\ref{fol:compt-committed}) with no view change triggered.
\end{proof}

Next, we show that colluding faulty clients and faulty servers cannot trigger a view change when the current leader is non-faulty.

\begin{figure}[t]
    \centering
    \includegraphics[width=0.85\linewidth]{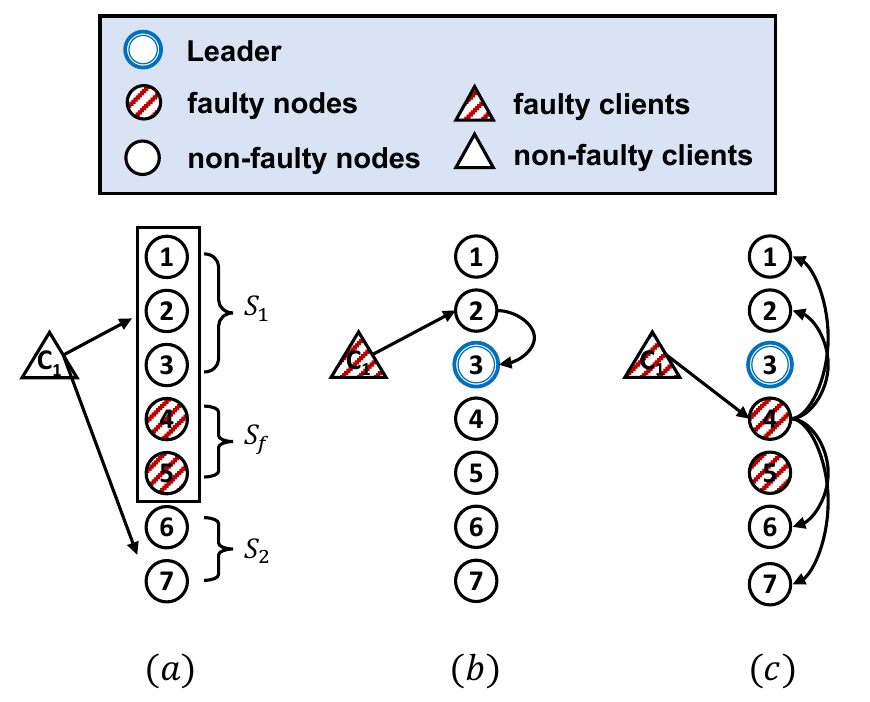}
    \caption{Leadership robustness analysis in \algo's active view-change protocol.}
    \label{fig:ap:1:vc:necessity}
\end{figure}

\begin{lemma} \label{ap:1:novc-2}
Under a non-faulty leader, faulty clients and faulty servers cannot trigger a view change.
\end{lemma}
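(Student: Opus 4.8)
The plan is to reduce Lemma~\ref{ap:1:novc-2} to the previous lemma by showing that adding up to $f$ faulty servers to the adversary's power does not help: the crucial bottleneck in triggering a view change is the \textsc{ReVC} quorum of size $f{+}1$, and a view change is only confirmed (Line~\ref{algo:af:confqc}) if a follower $S_i$ collects $f{+}1$ \textsc{ReVC} messages in response to its \textsc{ConfVC}. I would argue that under a non-faulty leader, any non-faulty server that receives a \textsc{ConfVC} referencing a complaint from client $c$ will only reply with \textsc{ReVC} if it has itself seen the same \textsc{Compt} from $c$ (Line~\ref{algo:af:confvc}) \emph{and} that complaint's transaction has not been committed. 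But by the reasoning in Lemma~\ref{ap:1:novc-1}, a non-faulty leader commits every relayed transaction in time, so every non-faulty server's timer for that complaint terminates successfully (Line~\ref{fol:compt-committed}) before it would broadcast \textsc{ConfVC}, and it never sends \textsc{ReVC}.

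The key steps, in order: (1)~Fix a non-faulty leader and suppose for contradiction some non-faulty server $S_i$ transitions to redeemer, i.e.\ it assembled \texttt{conf\_QC} from $f{+}1$ \textsc{ReVC} messages for some complaint from client $c$. (2)~Since $|\mathcal{S}_f| = f$, at least one of those $f{+}1$ \textsc{ReVC} senders lies in $\mathcal{S}_1 \cup \mathcal{S}_2$, i.e.\ is a non-faulty server $S_j$. (3)~By the \textsc{ReVC} guard, $S_j$ must have received a \textsc{Compt} from $c$ and must have believed the corresponding transaction was not committed in time. (4)~Invoke Lemma~\ref{ap:1:novc-1}: under a non-faulty leader, whenever a non-faulty server relays a client's complaint (and $S_j$ would have relayed it at Line~\ref{algo:af:relay}), the leader achieves consensus and the transaction is committed before $S_j$'s timer expires, so $S_j$ returns at Line~\ref{fol:compt-committed} and never reaches Line~\ref{algo:af:confvc}. (5)~This contradicts step (3), so no such $S_i$ exists; colluding faulty clients and faulty servers cannot produce a valid \texttt{conf\_QC}, and hence cannot trigger a view change. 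I would also note that faulty servers cannot forge \textsc{ReVC} messages of non-faulty servers (computational boundedness and threshold-signature unforgeability from \S\ref{sec:algo:system-model}), so they cannot manufacture the missing $f{+}1$ signatures themselves.

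The main obstacle I anticipate is handling the adversary's freedom in how complaints and relays are interleaved: a faulty client may send \emph{different} transactions to different non-faulty servers, and faulty servers may send spurious or inconsistent \textsc{ConfVC} messages trying to get non-faulty servers to cross-confirm. The argument must make clear that (a)~the \textsc{ConfVC} check at Line~\ref{algo:af:confvc} pins each reply to a complaint the replier \emph{itself} received from $c$, so a faulty server cannot coax a \textsc{ReVC} out of a non-faulty server for a complaint the latter never saw; and (b)~the no-DDoS assumption of \S\ref{sec:algo:system-model} guarantees the leader can process every distinct relayed request, so each non-faulty server's per-complaint timer always resolves to ``committed.'' Once these two points are isolated, the counting argument (at least one non-faulty server in any $f{+}1$-sized set) closes the gap cleanly, and the faulty servers' extra $f$ signatures are provably not enough on their own.
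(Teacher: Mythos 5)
Your proposal is correct and follows essentially the same route as the paper's proof: it reduces to Lemma~\ref{ap:1:novc-1} (under a correct leader every non-faulty server's complaint procedure terminates at Line~\ref{fol:compt-committed}, so no non-faulty server ever contributes a \textsc{ReVC}), and then uses the $f{+}1$ threshold of \texttt{conf\_QC} against only $f$ unforgeable faulty signatures, which is exactly how the paper rules out a faulty candidate via~\ref{criterion:2}. The only cosmetic difference is that you phrase it as a contradiction over the \textsc{ReVC} senders while the paper cases on the faulty client's behavior, but the substance is identical.
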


\begin{proof}
A faulty client in Lemma~\ref{ap:1:novc-2} can behave in one of two ways: \wct<1> it sends its complaint to at least a non-faulty server, and \wct<2> it does not send its complaint to any non-faulty servers. In Scenario \wct<1>, with Lemma~\ref{ap:1:novc-1}, any non-faulty server ($\forall S_i \in \mathcal{S}_1 \cup \mathcal{S}_2$) that receives a complaint will terminate the procedure under a correct leader. In Scenario \wct<2>, the procedure will not be invoked on non-faulty servers.

In the worst case, all faulty servers collude and try to invoke a view change. To make a successful candidate, they have to construct a \texttt{conf\_QC} of size $f+1$ (illustrated in Figure~\ref{fig:ap:1:vc:necessity}c). However, since all non-faulty servers will terminate the procedure, they cannot collect a \textsc{ReVC} from a non-faulty server; i.e., no server in $\forall S_i \in \mathcal{S}_1 \cup \mathcal{S}_2$ will be included in \texttt{conf\_QC}. Consequently, even if a faulty server becomes a candidate, non-faulty servers will not vote for it according to \ref{criterion:2}, and thus a faulty candidate cannot be elected.

Therefore, under a non-faulty leader, no view change will be triggered under $f$ faulty servers and unlimited faulty clients.
\end{proof}

\begin{theorem}[Leadership robustness] \label{ap:leadershiprobustness}
In any given view, under a non-faulty leader, no view change will be initiated.
\end{theorem}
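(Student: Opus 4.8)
The plan is to obtain Theorem~\ref{ap:leadershiprobustness} as a direct consequence of Lemmas~\ref{ap:1:novc-1} and~\ref{ap:1:novc-2} via an exhaustive case analysis on who could possibly initiate the view change. First I would pin down what ``initiated'' means operationally: a view change is initiated only when some server transitions from follower to redeemer, and by the follower code this transition requires assembling a \texttt{conf\_QC} --- that is, collecting $f{+}1$ \textsc{ReVC} replies to a broadcast \textsc{ConfVC} (Line~\ref{algo:af:confqc}). So it suffices to show that, under a non-faulty leader, no server ever gathers such a \texttt{conf\_QC}, and moreover that even a faulty server that fabricates a candidacy cannot be elected. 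Policy-driven view changes (the timing/throughput triggers of~\S\ref{sec:asfollower}) are deliberate rotations rather than robustness failures and are therefore outside the scope of this statement.

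Next I would split on the type of the complaining client. If the client is non-faulty it broadcasts its complaint to all servers, so every non-faulty server receives it; if the client is faulty it either sends nothing (the complaint-handling procedure is never invoked) or sends its complaint to some servers. In every subcase, Lemma~\ref{ap:1:novc-1} tells us that a non-faulty server that receives a complaint relays it to the (non-faulty) leader, the leader drives the piggybacked transaction to commitment, and the non-faulty server therefore terminates the procedure at Line~\ref{fol:compt-committed} without broadcasting \textsc{ConfVC} or emitting a \textsc{ReVC}. Hence the only servers that can ever sign a \textsc{ReVC} are the $f$ faulty ones, so a \texttt{conf\_QC} of threshold $f{+}1$ can never be formed and no non-faulty server ever becomes a redeemer. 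To close the last gap I would invoke Lemma~\ref{ap:1:novc-2}: even if the $f$ faulty servers collude and one of them manufactures the appearance of a candidate, non-faulty servers reject its \textsc{Camp} by criterion~\ref{criterion:2}, since its \texttt{conf\_QC} cannot meet the $f{+}1$ threshold with only faulty signers; such a candidate collects at most $f$ votes and is never elected, so no new \texttt{vcBlock} is ever produced. Combining the two lemmas over the exhaustive set of potential initiators --- non-faulty servers prompted by faulty or non-faulty clients, and colluding faulty servers --- yields the theorem.

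I expect the main obstacle to be not any single deduction but the exhaustiveness and precision of the case split: the argument is only as strong as the claim that ``initiated'' is correctly equated with the \texttt{conf\_QC}-forming step, so I must be careful that a lone non-faulty server momentarily suspecting a slow leader, or a faulty server shouting \textsc{ConfVC} into the void, genuinely does not count as an initiated view change, and that no alternative code path (e.g. a redeemer/candidate arriving from a stale prior view, or a split-vote restart) could sneak a view change in under a non-faulty leader. Making that catalogue of entry points complete, and tying each back to either Lemma~\ref{ap:1:novc-1} or Lemma~\ref{ap:1:novc-2}, is the delicate part; the quorum-counting itself is routine.
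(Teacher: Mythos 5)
Your proposal is correct and follows essentially the same route as the paper: the theorem is obtained directly from Lemma~\ref{ap:1:novc-1} (non-faulty servers terminate the complaint procedure under a correct leader, so no \textsc{ReVC}s from them) and Lemma~\ref{ap:1:novc-2} (colluding faulty servers cannot assemble a \texttt{conf\_QC} of size $f{+}1$ nor pass criterion~\ref{criterion:2}), combined by a case split over possible initiators. Your extra care in pinning down what ``initiated'' means and in setting aside policy-driven rotations only makes explicit what the paper leaves implicit; the substance of the argument is identical.
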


\begin{proof}
Without failures, \algo operates under the replication protocol, and the view-change protocol will not be invoked. With faulty clients and servers, Lemma~\ref{ap:1:novc-1} and~\ref{ap:1:novc-2} have shown that no view change will be invoked under either condition. Therefore, under a non-faulty leader, no view change will be invoked, which proves this theorem.
\end{proof}

Theorem~\ref{ap:leadershiprobustness} is critical for system availability. It shows that \textbf{\algo's active view change protocol will have a stable view under a correct leader regardless of the behavior of faulty clients, faulty servers, and their collusion}.

In addition, with Theorem~\ref{ap:leadershiprobustness}, faulty servers can only intervene in the view-change process when the current leader becomes faulty or a view change is invoked by policy-defined criteria, such as timing policies and throughput-threshold policies (discussed in \S\ref{sec:asfollower}). Next, we show that the interference of faulty servers cannot prevent view changes from being initiated (Definition~\ref{def:necessaryvc}).

\begin{lemma} \label{ap:faultyleaderprevent}
A faulty leader cannot prevent a necessary view change for a higher view.
\end{lemma}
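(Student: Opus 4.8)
The plan is to follow the failure-detection pipeline of \S\ref{sec:asfollower} stage by stage and show that the faulty leader---even colluding with the other $f-1$ faulty servers---can block none of its stages, so that a redeemer for a strictly higher view inevitably emerges. By Definition~\ref{def:necessaryvc}, a necessary view change is one that arises while the current leader is faulty; concretely, this is precisely the situation in which the leader fails to drive some non-faulty client's request $tx$ to a \texttt{commit\_QC}. First I would observe that such a client re-broadcasts its complaint \textsc{Compt} to all $n$ servers, and that (after GST, by the system-model assumption of \S\ref{sec:algo:system-model} that faulty servers cannot suppress communication among non-faulty servers) every one of the $2f+1$ non-faulty servers in $\mathcal{S}_1 \cup \mathcal{S}_2$ receives the same \textsc{Compt}, enters the complaint-handling procedure (Line~\ref{aglo:af:procedurecompt}), relays the complaint to the leader (Line~\ref{algo:af:relay}), and arms a timer.

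Next I would argue that all $2f+1$ of these timers expire without $tx$ being committed; this is the heart of the argument and the step I expect to require the most care. A non-faulty server treats $tx$ as committed only upon verifying a valid \texttt{commit\_QC}, which is a threshold signature of $2f+1$ servers; since at most $f$ servers are faulty, every such certificate embeds a signature from a non-faulty server, and a non-faulty server signs the commit phase only after verifying an \texttt{ordering\_QC} for $tx$. Under the hypothesis that the faulty leader does not achieve consensus for $tx$, this certificate chain is never completed, so no non-faulty server can be misled into believing $tx$ is committed, and every timer fires. Hence each of the $2f+1$ non-faulty servers broadcasts a \textsc{ConfVC} carrying the same \textsc{Compt} (Line~\ref{algo:af:inspection}), and, because they all hold that complaint, each replies to the others with a \textsc{ReVC} (the guard at Line~\ref{algo:af:confvc} is met). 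Every non-faulty server therefore collects at least $2f+1 \ge f+1$ \textsc{ReVC}s (counting its own), forms a \texttt{conf\_QC}, and transitions to \texttt{Redeemer} (Line~\ref{algo:af:confqc}); since a redeemer immediately increments its view to $V' = V+1 > V$, the view change it initiates targets a strictly higher view.

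Finally I would close the ``cannot prevent'' direction by noting that none of the above transitions can be undone by the faulty leader: the Byzantine-fault assumption forbids faulty servers from obstructing the state changes of non-faulty servers, so the faulty leader cannot evict any non-faulty server from this pipeline, and it cannot forge the \texttt{commit\_QC} that would short-circuit it. The only residual worry---a faulty server racing to ``win'' the very view change it is forced to permit---is immaterial to this lemma, which concerns solely whether a higher-view change is initiated, not which server ultimately becomes leader; the latter is governed by the voting criteria~\ref{criterion:1}--\ref{criterion:5} and addressed by the subsequent lemmas. Therefore a faulty leader cannot prevent a necessary view change for a higher view.
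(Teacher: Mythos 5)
Your proposal is correct and follows essentially the same route as the paper's proof: the non-faulty client's broadcast complaint reaches all $2f+1$ non-faulty servers, so a non-faulty server broadcasts \textsc{ConfVC}, gathers at least $f+1$ \textsc{ReVC} replies from $\mathcal{S}_1 \cup \mathcal{S}_2$ regardless of $\mathcal{S}_f$, forms a \texttt{conf\_QC}, and becomes a redeemer for the incremented view. You merely spell out details the paper leaves implicit (why no non-faulty server can be fooled into seeing $tx$ as committed, and why the leader cannot block the state transitions), and you note all non-faulty servers proceed where the paper only needs one, which is a harmless strengthening.
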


\begin{proof}

When a faulty leader stops committing a non-faulty client's transaction, all non-faulty servers ($\mathcal{S}_1 \cup \mathcal{S}_2$) will receive a complaint from the client. Then, at least a non-faulty server $S_i$ ($S_i \in \mathcal{S}_1 \cup \mathcal{S}_2$) will broadcast a \textsc{ConfVC} message. In this case, $S_i$ can receive at least $f+1$ \textsc{ReCV} replies from $\mathcal{S}_1 \cup \mathcal{S}_2$ and construct a \texttt{conf\_VC} (Line~\ref{algo:af:confqc}) regardless of servers in $\mathcal{S}_f$, starting a new view change with an incremented view. Thus, a faulty leader cannot prevent a necessary view change for a higher view.
\end{proof}

With Lemma~\ref{ap:faultyleaderprevent}, a faulty leader cannot prevent a necessary view change from being initiated by non-faulty servers. However, due to the nature of active view changes, other faulty servers can compete with non-faulty servers in the initiated view change; they may win the election and repossess the leadership. Next, we show the completeness of leadership in \algo; that is, faulty servers cannot indefinitely prevent the election of a non-faulty leader.

\begin{theorem}[Leadership completeness] \label{theorem:leadercompleteness}
In a given view $V$, faulty servers cannot indefinitely prevent a non-faulty leader from being elected in a higher view $V'$ ($V'>V$).
\end{theorem}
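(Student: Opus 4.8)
The plan is a case analysis on how much replication progress the faulty servers make while they hold leadership in views above $V$; write $\mathcal{S}_f$ for the faulty set as in \S\ref{sec:correctnes}. By Lemma~\ref{ap:faultyleaderprevent}, whenever a faulty leader stalls a non-faulty client's request a necessary view change is initiated ($f{+}1$ \textsc{ReVC} replies, a \texttt{conf\_QC}, an incremented view) and cannot be blocked by $\mathcal{S}_f$; and by Theorem~\ref{ap:leadershiprobustness} no view change is triggered spuriously under a correct leader. So the only way $\mathcal{S}_f$ keeps leadership across many views is by re-winning a sequence of such view changes $V<V_1<V_2<\cdots$. I would show this sequence cannot go on forever with all leaders faulty unless those faulty leaders are in fact committing client requests at a geometrically increasing rate --- the degenerate ``temporary correct leader'' situation already handled in the liveness sketch --- and that otherwise a non-faulty leader is elected in some $V'>V$.

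The quantitative core is a sharpening of Lemma~\ref{lemma:repossess}. By Eq.~\ref{eq:increase-penalty} every campaign raises $rp^{(V')}_{temp}$ by at least $1$; by Eqs.~\ref{eq:delta-tx}--\ref{eq:decrease-penalty}, cancelling that increment requires $\lfloor\delta\rfloor\ge 1$, i.e.\ $\delta_{tx}\,\delta_{vc}\ge 1/rp^{(V')}_{temp}$ with $C_\delta{=}1$. Since $\delta_{vc}<1$ always, and $\delta_{tx}=(\texttt{ti}-\texttt{ci})/\texttt{ti}$ with $\texttt{ci}$ reset to $\texttt{ti}$ after each granted compensation, sustaining $\delta_{tx}$ above any fixed positive level forces the faulty leaders' committed-block count to grow geometrically. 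Hence, outside the degenerate case, $\delta_{tx}$ eventually stays near $0$ and the $rp$ values of all servers in $\mathcal{S}_f$ grow without bound along $V_1,V_2,\dots$

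Now the suppression takes effect. The redeemer-to-candidate transition requires a hash preimage with $rp$ leading zero bytes, costing $\approx 2^{8\,rp}$ hashes in expectation, so once $rp$ is large enough no server in $\mathcal{S}_f$ can finish the computation within an election timeout given its computation bound $\gamma$; thus no faulty server can become a candidate. With $\mathcal{S}_f$ unable to campaign, Lemmas~\ref{lemma:f+1uptodate} and~\ref{lemma:oneleader} give $f{+}1$ up-to-date, eligible, non-faulty servers; after GST one of them triggers the next necessary view change, finishes its negligible computation, broadcasts \textsc{Camp}, and --- the $\le f$ stale non-faulty servers catching up through \textsc{SyncUp} --- collects \textsc{VoteCP} from all $2f{+}1$ non-faulty servers, meeting the threshold and winning in some $V'>V$. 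Split votes between non-faulty candidates are handled exactly as in Raft: randomized timeouts make them vanishingly rare, and a timed-out candidate re-campaigns in a still-higher view, so the election still terminates.

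I expect the main obstacle to be that quantitative version of Lemma~\ref{lemma:repossess}: pinning down why ``limited but nonzero'' replication cannot indefinitely offset the per-view $+1$ penalization requires reasoning jointly about $\delta_{tx}$ with its \texttt{ci}-reset, $\delta_{vc}\in(0,1)$, and the floor in Eq.~\ref{eq:decrease-penalty} while $rp$ itself is changing. A secondary subtlety to dispose of cleanly is the penalty-refresh mechanism of \S\ref{sec:optimizations}: a refresh needs a \texttt{rs\_QC} of size $2f{+}1$, hence $f{+}1$ \emph{non-faulty} servers whose own $rp$ exceeds $\pi$, which after GST does not arise while necessary view changes keep removing stalling leaders and correct leaders answer within $\Delta$ --- so refresh is not an escape hatch for $\mathcal{S}_f$.
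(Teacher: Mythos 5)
Your proposal is correct and follows essentially the same route as the paper's proof: a dichotomy derived from Lemma~\ref{lemma:repossess} between faulty servers whose penalties rise until the required hash work exceeds their computation bound $\gamma$ (so they can no longer become candidates) and faulty servers who can only avoid rising penalties by behaving correctly in replication or ceding leadership, with the latter treated as the ``temporary non-faulty leader'' degenerate case exactly as the paper does. Your version merely makes explicit some points the paper's sketch leaves implicit --- the geometric-growth sharpening of $\delta_{tx}$ with the \texttt{ci} reset, the post-suppression election via \textsc{SyncUp} and randomized timeouts, and the observation that the refresh mechanism of \S\ref{sec:optimizations} cannot be exploited by $\mathcal{S}_f$ --- which strengthens rather than changes the argument.
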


\begin{proof}
Faulty servers can delay the appearance of a non-faulty leader by repossessing leadership. 
With Lemma~\ref{lemma:repossess} in~\S\ref{sec:correctnes}, faulty servers cannot repossess leadership indefinitely without making progress in replication. Thus, faulty servers can behave in one of two ways: \wct<1> they launch attacks with the rise of their reputation penalties, or \wct<2> they launch attacks only when they can remain their reputation penalties unchanged by receiving compensation.

Pursuing \wct<1>, after faulty servers exhaust their computation capability, they can no longer be elected as future leaders. Pursuing \wct<2>, in order to get compensated, faulty servers must temporarily give up leadership to non-faulty servers ($\delta_{vc}$) or behave temporarily correctly in replication ($\delta_{tx}$). Therefore, in both ways, faulty servers cannot indefinitely prevent a non-faulty leader from being elected in a higher view.
\end{proof}

\subsection{Liveness}
Next, we prove liveness. We first prove the three properties of \algo's view-change protocol (discussed in~\S\ref{sec:algo:active-vc}).

\begin{lemma} [Property~\ref{vc:p1}] \label{ap:correctness:oneleader}
At most one leader can be elected in a given view.
\end{lemma}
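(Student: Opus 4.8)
The plan is to prove that at most one leader can be elected in any given view $V'$ by a standard quorum-intersection argument combined with the single-vote restriction enforced by criterion~\ref{criterion:1}. First I would observe that, by the algorithm, a server becomes the leader in view $V'$ only if it collects a \texttt{vc\_QC}, i.e., $2f{+}1$ \textsc{VoteCP} messages all carrying the view number $V'$ (see Line~\ref{algo:ascandy:camp} and the candidate block of Algo.~\textsc{State-Transition}). Since each \textsc{VoteCP} message is signed and the threshold signature scheme with $t{=}2f{+}1$ guarantees that a valid \texttt{vc\_QC} genuinely represents $2f{+}1$ distinct signers, any elected leader of $V'$ must be backed by a set of $2f{+}1$ servers that each voted for it in $V'$.

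Next I would suppose, for contradiction, that two distinct servers $S_j$ and $S_k$ are both elected in the same view $V'$. Then there exist two quorums $\mathcal{Q}_j$ and $\mathcal{Q}_k$, each of size $2f{+}1$, whose members voted for $S_j$ and $S_k$ respectively in view $V'$. Because $n = 3f{+}1$, we have $|\mathcal{Q}_j| + |\mathcal{Q}_k| = 2(2f{+}1) = 4f{+}2 > 3f{+}1 = n$, so $\mathcal{Q}_j \cap \mathcal{Q}_k \neq \emptyset$; in fact the intersection contains at least $f{+}1$ servers, hence at least one non-faulty server $S_i$. This $S_i$ would have had to send a \textsc{VoteCP} for $S_j$ in $V'$ and also a \textsc{VoteCP} for $S_k$ in $V'$. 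But criterion~\ref{criterion:1} (checked at Line~\ref{algo:fol:pro3}, ``vote only once in a view'') states that a non-faulty follower votes at most once per view, so $S_i$ cannot have voted for both. This contradiction establishes that $S_j = S_k$, i.e., at most one leader is elected in $V'$.

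The only subtlety — and the step I expect to require the most care — is arguing that the single-vote restriction is honored by the intersecting server, rather than circumvented. I would note that (i) a non-faulty server strictly follows the voting criteria, so it tracks the highest view in which it has cast a \textsc{VoteCP} and refuses a second vote in that view; and (ii) while faulty servers in the intersection could double-vote, the intersection of two $(2f{+}1)$-quorums is guaranteed to contain at least one \emph{non-faulty} server precisely because $|\mathcal{Q}_j \cap \mathcal{Q}_k| \ge f{+}1 > f$. It is this counting fact, together with the computational-boundedness assumption (faulty servers cannot forge a non-faulty server's signature, so they cannot fabricate a \textsc{VoteCP} from $S_i$), that forces a genuine conflicting vote by a non-faulty server and yields the contradiction. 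With these two observations in place the argument is complete, so this lemma reduces to the classical quorum-intersection proof of single-leader uniqueness, adapted to \algo's campaign-and-vote election.
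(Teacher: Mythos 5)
Your proposal is correct and follows essentially the same route as the paper's proof: a contradiction argument in which two elected leaders would each need a \texttt{vc\_QC} of $2f{+}1$ votes, and a counting/quorum-intersection step shows the second certificate must include a non-faulty server that already voted, violating criterion~\ref{criterion:1}. The only cosmetic difference is that you state the intersection bound $|\mathcal{Q}_j \cap \mathcal{Q}_k| \ge f{+}1$ directly, while the paper phrases the same count via its partition into up-to-date, stale, and faulty servers.
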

\begin{proof}
We prove this Lemma by contradiction. We claim that there are two legitimate leaders $S_i$ and $S_j$ in a given view $V$. For this claim to be true, $S_i$ and $S_j$ must both have constructed their \texttt{vc\_QC}s of size $2f+1$ when they were candidates, denoted by $\texttt{vc\_QC}_{S_i}$ and $\texttt{vc\_QC}_{S_j}$, respectively. 

In the worst case, $\texttt{vc\_QC}_{S_i}$ is constructed by $\mathcal{S}_1 \cup \mathcal{S}_f$. Since each server votes only once in a view (\ref{criterion:1}), $\mathcal{S}_1$ will not vote for another server in view $V$. However, $\texttt{vc\_QC}_{S_j}$ must also have a size of $2f+1$; it can be composed of $\mathcal{S}_f \cup \mathcal{S}_2$, as $\mathcal{S}_f$ are faulty servers. Nevertheless, since $|\mathcal{S}_f \cup \mathcal{S}_2| = 2f$, it must contain at least one server $S_k$ such that $S_k \in \mathcal{S}_1$. However, no server in $\mathcal{S}_1$ will vote again in view $V$; thus, $\texttt{vc\_QC}_{S_j}$ cannot be formed, which contradicts our claim and proves this Lemma.
\end{proof}

\begin{figure}[t]
    \centering
    \includegraphics[width=0.99\linewidth]{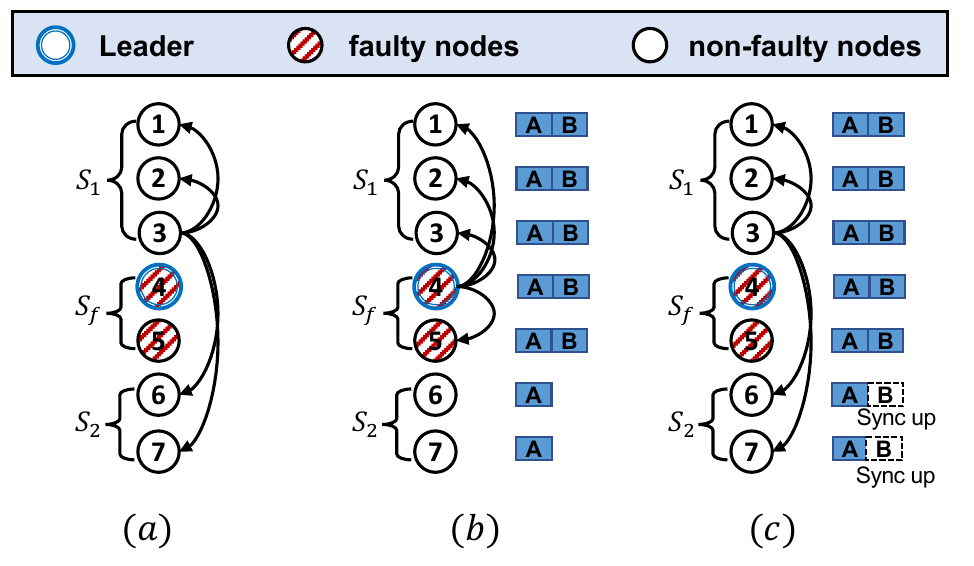}
    \caption{Examples of replication and view changes under a faulty leader.}
    \label{fig:ap:1:vc:properties}
\end{figure}

\begin{lemma} [Property~\ref{vc:property:or}]
An elected non-faulty leader has the most up-to-date replication. \label{lemma:p2}
\end{lemma}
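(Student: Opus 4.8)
\textit{Proof plan.} The plan is to show that any candidate who assembles a winning \texttt{vc\_QC} in view $V'$ must already hold the highest committed \texttt{txBlock}, by intersecting its set of voters with the set of up-to-date non-faulty servers from Lemma~\ref{lemma:f+1uptodate}. First I would recall that an elected leader in $V'$ has collected $2f{+}1$ \textsc{VoteCP} messages forming \texttt{vc\_QC}; since at most $f$ of the signers are faulty, at least $f{+}1$ of them are non-faulty. By Lemma~\ref{lemma:f+1uptodate} there is a set of at least $f{+}1$ non-faulty servers (those included in the latest \texttt{commit\_QC}) that are up-to-date in replication, i.e., each holds a \texttt{txBlock} with the maximal sequence number $n^{\star}$. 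Because $(2f{+}1)+(f{+}1) > 3f{+}1$, the set of non-faulty voters and the set of up-to-date non-faulty servers overlap in at least one server, call it $S_k$; this is the same quorum-intersection counting used in the Validity and Safety arguments.

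Next I would argue that $S_k$'s vote forces the candidate to be up-to-date. When $S_k$ processed the candidate's \textsc{CampVC}, it applied criterion~\ref{criterion:3} at Line~\ref{algo:candy:replistatus}: it set $ti = \textsc{CampVC}.\texttt{txBlock}.n$ and returned without voting whenever $ti < \texttt{myTxBlock}.n = n^{\star}$. Hence, for $S_k$ to have voted, the \texttt{txBlock} advertised in the candidate's \textsc{CampVC} must satisfy $ti \ge n^{\star}$. It then remains to rule out that a stale candidate simply forged this advertised block: every \texttt{txBlock} at sequence number $n$ carries a \texttt{commit\_QC} of threshold $2f{+}1$, and when $S_k$ is behind it invokes \textsc{SyncUp} at Line~\ref{algo:af:synctx} and re-validates every acquired block through its $QC$; since only $f$ servers are faulty and they cannot forge a non-faulty server's signature, no \texttt{commit\_QC} for a \texttt{txBlock} beyond the genuine commit frontier can be assembled. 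Therefore the candidate genuinely holds a \texttt{txBlock} at sequence number $\ge n^{\star}$, so it is itself up-to-date, and after election it is a leader with the most up-to-date replication, which is exactly optimistic responsiveness (Property~\ref{vc:property:or}).

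The main obstacle is this last step: making precise that a candidate cannot \emph{claim} a more advanced replication state than it truly possesses. I would state this crux as a reduction to the unforgeability of \texttt{commit\_QC}s (threshold $2f{+}1$ against only $f$ faulty signers) combined with the fact that the verifying follower re-checks each synced block's $QC$ rather than trusting the advertised sequence number, so the stale candidate is rejected whether it presents its genuine stale \texttt{txBlock} (fails \ref{criterion:3}) or a fabricated higher one (fails $QC$ validation). A minor point to dispatch is concurrency of campaigns: Property~\ref{vc:p1} / Lemma~\ref{ap:correctness:oneleader} already guarantee at most one leader per view, so the argument applies to that unique leader, and split-vote re-campaigns only re-run the identical check at a higher view.
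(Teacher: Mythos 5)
Your proposal is correct and takes essentially the same route as the paper's proof: the paper likewise obtains the $f{+}1$ up-to-date non-faulty servers from the $2f{+}1$ threshold of \texttt{commit\_QC} (Lemma~\ref{lemma:f+1uptodate}) and concludes via criterion~\ref{criterion:3} that a leader gathering a $2f{+}1$-vote \texttt{vc\_QC} must be elected from among those up-to-date servers. Your explicit quorum-intersection count and the unforgeability argument for an advertised \texttt{txBlock} (re-validation of $QC$s during \textsc{SyncUp}) simply spell out steps the paper leaves implicit.
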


\begin{proof}
The replication protocol (discussed in \S\ref{sec:algo:replication}) requires that a \texttt{txBlock} be committed with a \texttt{commit\_QC} of size $2f+1$. In the worst case, under a faulty leader, the \texttt{commit\_QC} is formed by $\mathcal{S}_1 \cup \mathcal{S}_f$ (illustrated in Figure~\ref{fig:ap:1:vc:properties}b). Similarly, the view-change protocol requires that a \texttt{vcBlock} be committed with a \texttt{vc\_QC} of size $2f+1$. Therefore, at least $f+1$ non-faulty servers have the most up-to-date logs, including \texttt{txBlock}s in replication and \texttt{vcBlock}s in view changes.
According to \ref{criterion:3}, a non-faulty leader will be elected at least from the $f+1$ most up-to-date servers, which proves this Lemma.
\end{proof}

\begin{lemma} [Property~\ref{vc:property:verify}] \label{lemma:property:verify}
An elected leader's reputation penalty and the correspondingly performed computation can be verified by all non-faulty servers. 
\end{lemma}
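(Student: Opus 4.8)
The plan is to show that the three quantities an elected leader publishes during its campaign --- its reputation penalty $rp$, its compensation index $ci$, and the hash $hr$ that witnesses the performed computation --- are all \emph{deterministic functions of data that every non-faulty server either already holds or can authenticate}. Hence each non-faulty server can independently recompute and check them, which is exactly what \ref{criterion:4} and \ref{criterion:5} in the voting criteria do.

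First I would establish that the inputs to \textsc{CalcRP} are canonical. \textsc{CalcRP} reads only $V'$, the current \texttt{vcBlock} together with the chain reached through its \texttt{preVcBlock} pointers (which supplies the penalty set $\mathcal{P}$, Lines~\ref{re:pstart}--\ref{re:pend}), the latest \texttt{txBlock}, and the campaigner's identifier; all of these live in committed blocks carrying \texttt{vc\_QC}s or \texttt{commit\_QC}s of threshold $2f{+}1$. Since faulty servers are computationally bounded and cannot forge the signature of a non-faulty server, they cannot fabricate a block that passes \texttt{QC} validation, so the block history a non-faulty server accepts is unique. By Lemma~\ref{lemma:f+1uptodate}, at least $f{+}1$ non-faulty servers already hold this history in the highest view, and any stale non-faulty server invokes \textsc{SyncUp} (Lines~\ref{algo:af:syncvc} and~\ref{algo:af:synctx}) to fetch and re-validate the missing \texttt{vcBlock}s and \texttt{txBlock}s before voting. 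Thus, before verification, every non-faulty voter possesses exactly the inputs the candidate itself used.

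Next I would invoke determinism. \textsc{CalcRP} performs only integer arithmetic together with the statistics $\mu_{\mathcal{P}}$ and $\sigma_{\mathcal{P}}$ of the fixed set $\mathcal{P}$, so re-running it on identical inputs reproduces the candidate's $rp'$ and $ci'$ exactly. A non-faulty voter therefore checks $ci' = \textsc{CampVC}.ci$ and $rp' = \textsc{CampVC}.rp$ (Lines~\ref{algo:af:verici}--\ref{algo:af:verirp}); agreement certifies \ref{criterion:4}. It then recomputes $hr' = \textsc{Hash}(\textsc{CampVC}.txBlock, \textsc{CampVC}.nc)$ with a single hash evaluation and tests whether $hr'$ has $rp'$ leading zero bytes (Lines~\ref{algo:af:rehash}--\ref{algo:af:rehashcheck}); since the hash is a fixed public function, this test is reproducible and certifies \ref{criterion:5} in $\mathcal{O}(1)$ work. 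An elected leader has collected $2f{+}1$ \textsc{VoteCP}s, hence at least $f{+}1$ non-faulty servers have passed both checks, and because the checks are deterministic every non-faulty server reaches the same verdict on the same campaign message --- which is the claim.

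The step I expect to be the main obstacle is the first one: arguing that \textsc{SyncUp} really delivers a history on which \textsc{CalcRP} is well-defined and agrees with what the candidate used --- in particular, that a Byzantine candidate cannot present a \texttt{txBlock}/\texttt{vcBlock} prefix that is internally \texttt{QC}-consistent yet differs across honest voters, and that the \texttt{preVcBlock} chain walked in \textsc{CalcRP} is itself uniquely pinned down by the \texttt{vc\_QC}s (this leans on Property~\ref{vc:p1} and Safety, Theorem~\ref{theorem:safety}). Once the uniqueness of the authenticated block history is nailed down, the remainder is routine determinism plus a counting argument over the $2f{+}1$ voters.
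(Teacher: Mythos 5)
Your proposal is correct and follows essentially the same route as the paper's proof: stale non-faulty servers obtain the candidate's $QC$-validated \texttt{vcBlock}/\texttt{txBlock} history via \textsc{SyncUp}, after which the deterministic \textsc{CalcRP} reproduces $rp$ and $ci$ (\ref{criterion:4}) and a single hash evaluation checks the computation (\ref{criterion:5}). The extra care you flag about uniqueness of the authenticated history and unforgeability of signatures is a finer-grained justification of what the paper simply attributes to $QC$ validation, not a different argument.
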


\begin{proof}
Since \texttt{vcBlock}s are the result of view-change consensus, they are replicated among at least $2f+1$ servers. For the sake of simplicity, we assume that all up-to-date non-faulty servers are in $\mathcal{S}_1$ and stale non-faulty servers are in $\mathcal{S}_2$. In the worst case, \texttt{vcBlock}s are replicated among $\mathcal{S}_1 \cup \mathcal{S}_f$.

From Lemma~\ref{lemma:p2}, a leader is elected among up-to-date servers (i.e., $\mathcal{S}_1$). When servers in $\mathcal{S}_2$ receive a \textsc{VoteCP} from a candidate from $\mathcal{S}_1$ (illustrated in Figure~\ref{fig:ap:1:vc:properties}c), they can verify any more advanced \texttt{txBlock}s and  \texttt{vcBlock}s by checking their $QC$s. Thus, they will sync to up-to-date (Line~\ref{algo:af:syncvc} to~\ref{algo:af:synctx}), obtaining logs as least as up-to-date as the candidate. 
After the sync up, $\mathcal{S}_2$ invokes Algo.~\textsc{CalcRP} using the same input as the candidate. Therefore, the candidate's $rp$ can be reproduced, which can be used to verify its corresponding hash computation result. 
\end{proof}

We have shown that \algo's view-change protocol guarantees the election of an up-to-date leader. Next, we show that it also guarantees that stale servers will not be penalized in unsuccessful elections.

\begin{lemma} \label{lemma:unsuccessfulattempts}
The reputation penalties of non-faulty but stale servers will not be increased in view changes.
\end{lemma}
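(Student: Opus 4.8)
The plan is to prove the lemma by a case analysis on the role a non-faulty stale server $S_s \in \mathcal{S}_2$ can play in a view change, leveraging two structural facts established earlier: (i) the reputation engine never writes to the state machine, so the penalty \textsc{CalcRP} produces for a redeemer is discarded unless that server is elected (the ``consultant'' property in \S\ref{sec:reputation}); and (ii) the only reputation entry a newly elected leader may alter in the committed \texttt{vcBlock} is its own (Line~\ref{algo:al:1} copies the old $rp$ array wholesale and Line~\ref{algo:al:2} overwrites only the leader's slot), and non-faulty followers send \textsc{vcYes} only after verifying that this is the sole change (\S\ref{sec:asleader}). Together these mean the $rp$ \emph{stored} for $S_s$ can change during a view change only if $S_s$ itself is the elected leader of that view.

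First I would handle the case where $S_s$ is not elected. Then the \texttt{vcBlock} committed for the new view $V'$ inherits $S_s$'s entry verbatim, so $rp^{(V')}=rp^{(V)}$ for $S_s$; whatever $S_s$ computed while a redeemer is never written. The same reasoning telescopes over a run of view changes that $S_s$ slept through: in each of them only the respective leader's slot was touched, so when $S_s$ later calls \textsc{SyncUp} (Line~\ref{algo:af:syncvc}) and adopts the up-to-date \texttt{vcBlock}s, it finds its own penalty exactly as it last left it --- in particular not increased. A faulty leader trying to smuggle an inflated penalty for $S_s$ into its \texttt{vcBlock} fails, since such a block changes more than the leader's own $rp$ and $ci$ and hence cannot gather the $f{+}1$ non-faulty \textsc{vcYes} messages it needs.

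Next I would rule out the only remaining possibility --- that $S_s$ is elected \emph{while} stale. A candidate's \textsc{Camp} carries the view $V'$ it derived from its own (stale) \texttt{vcBlock}; by the voting guard that rejects any campaign whose $V'$ is below the voter's \texttt{vcBlock} view, and by \ref{criterion:3}, the up-to-date voters in $\mathcal{S}_1$ will not vote for it, so it cannot reach $2f{+}1$ \textsc{VoteCP}s. Hence $S_s$ can be elected only after it has synced up, at which point it is no longer stale and its \textsc{CalcRP} input is the correct, current state --- a situation outside the lemma's scope. Combining the cases yields the lemma.

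I expect the main obstacle to be pinning down precisely what ``stale'' should mean across the lifetime of one view change, since a server in $\mathcal{S}_2$ may partially catch up mid-protocol. The argument must keep three strands airtight: (a) \textsc{CalcRP}'s output is persisted only on election; (b) election forces up-to-dateness through \ref{criterion:3} and Lemma~\ref{lemma:p2}; and (c) every intermediate \texttt{vcBlock} a catching-up server adopts carries its entry unchanged --- which is really an invariant maintained by Lines~\ref{algo:al:1}--\ref{algo:al:2} at every past view change. Strand (c) is the crux: it is what actually forbids a stale server's penalty from being retroactively raised while it was away.
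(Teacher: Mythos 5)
Your proposal is correct and follows essentially the same route as the paper's proof: a stale server cannot be elected because the up-to-date servers in $\mathcal{S}_1$ will never vote for it, and since each view change updates only the elected leader's own $rp$ and $ci$ in the \texttt{vcBlock}, a non-elected (stale) server's penalty is never written and hence never increases. Your additional elaborations (the telescoping over missed views and the argument that a faulty leader cannot smuggle an inflated entry past the \textsc{vcYes} check) merely make explicit what the paper compresses into its appeal to \S\ref{sec:asleader}.
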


\begin{proof}
Although a stale server's leader election will not be successful, the stale server can still invoke the view-change protocol and transition to the candidate state. It will not receive sufficient votes because up-to-date servers in $\mathcal{S}_1$ will never vote for it. In this case, its calculated reputation penalty will not be recorded in the \texttt{vcBlock} of the new view. Note that in each view change, only the elected leader's reputation penalty and compensation index are updated (discussed in \S\ref{sec:asleader}). Therefore, unsuccessful attempts of leader election will not change a server's reputation penalty.
\end{proof}

Now we show the proof of liveness; we repeat the theorem of liveness below:

\begin{theorem}[Liveness] (Same as Theorem~\ref{theorem:liveness:paper})
\label{theorem:liveness:appendix}
After GST, a non-faulty server eventually commits a proposed client request.
\end{theorem}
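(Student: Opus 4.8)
The plan is to prove liveness by a case analysis on who controls leadership after GST, reducing it to two facts already established: (i) faulty servers cannot hold leadership forever without performing genuine replication work, and (ii) once a non-faulty, up-to-date leader is in office it drives the two-phase replication protocol to completion. Fix a client request $tx$ broadcast after GST; by the partial-synchrony assumption every message between two non-faulty servers is delivered within $\Delta$, and since follower/client timeouts are chosen to comfortably exceed the $7$-round cost of a replication instance (see \S\ref{sec:asfollower}), a correct leader is never timed out while making progress. I would then argue that $tx$ is eventually committed by every non-faulty server, re-issuing the client's complaint if its timer expires before a non-faulty leader takes office.

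First I would treat the phase in which leadership is repeatedly repossessed by faulty servers. Here I invoke Lemma~\ref{lemma:repossess} and Theorem~\ref{theorem:leadercompleteness}: a faulty server that never makes replication progress keeps $\delta_{tx}=0$, so by Eq.~\ref{eq:decrease-penalty} its penalty $rp$ strictly increases on every campaign; since the hash puzzle for penalty $rp$ costs $\Theta(2^{8rp})$ work and faulty servers are computationally bounded by $\gamma$, after finitely many campaigns no faulty server can transition to candidate, so by Lemma~\ref{ap:correctness:oneleader}, Lemma~\ref{lemma:p2} and Lemma~\ref{lemma:oneleader} the next elected leader is a non-faulty, up-to-date server (there are at least $f{+}1$ such servers willing to vote). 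The adversary's only escape is to earn compensation, but $\delta_{tx}$ compensation requires the compensation index to have caught up ($\texttt{ci}=\texttt{ti}$) and then strictly more \texttt{txBlock}s to be committed before the next deduction, while $\delta_{vc}$ compensation requires the server to stay a follower across several views; in either case the adversary must cede leadership to a non-faulty server or actually commit pending requests. Since by Lemma~\ref{lemma:unsuccessfulattempts} only an elected leader's penalty is ever recorded, this accounting is sound, and together with the refresh mechanism of \S\ref{sec:optimizations} (which only fires when $f{+}1$ servers in an \texttt{rs\_QC} genuinely exceed $\pi$, so it cannot reset a single faulty server's penalty at will) it forces a non-faulty leader to eventually take office.

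Next I would handle the phase in which a non-faulty leader $L$ holds a view $V'$. By Theorem~\ref{ap:leadershiprobustness} no view change is initiated under $L$, so $V'$ persists. When $L$ receives $tx$ (directly, or via $f{+}1$ \textsc{Compt} relays), it assigns a sequence number and runs Phase~1 and Phase~2 of \S\ref{sec:algo:replication}; the $\le f$ faulty servers cannot block the $2f{+}1$-sized \texttt{ordering\_QC} and \texttt{commit\_QC} because the $2f{+}1$ non-faulty servers all reply within $\Delta$ of each round, and by Lemma~\ref{lemma:f+1uptodate} and the \textsc{SyncUp} discipline stale non-faulty servers first catch up and then accept. $L$ then broadcasts the \texttt{txBlock}; every non-faulty server validates it and sends a \textsc{Notif} to the client, so the client collects at least $f{+}1$ \textsc{Notif}s and considers $tx$ committed, and by the validity theorem proved earlier this is exactly the endorsed request. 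Thus in both global conditions $tx$ is eventually committed by all non-faulty servers.

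I expect the main obstacle to be making the "faulty-controlled phase terminates" step fully rigorous: one must show that the two compensation channels $\delta_{tx}$ and $\delta_{vc}$ cannot be interleaved indefinitely to keep a faulty server's penalty bounded while it still monopolizes leadership, i.e., that every unit of compensation the adversary extracts is paid for either by a newly committed \texttt{txBlock} (liveness progress) or by a non-faulty-led view (also progress), and that the penalty-refresh mechanism opens no loophole. Pinning down this accounting is where the real work lies; the replication-termination half is routine given Theorem~\ref{ap:leadershiprobustness} and the $QC$ size arguments.
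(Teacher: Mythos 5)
Your proposal is correct and follows essentially the same route as the paper's own proof: the same two-condition case analysis (leadership held by faulty servers versus released to non-faulty ones), resolved by Lemma~\ref{lemma:repossess} together with the reputation/compensation accounting in the first case and by Lemma~\ref{lemma:oneleader}, Lemma~\ref{lemma:unsuccessfulattempts}, the refresh mechanism, and leadership robustness in the second. If anything you supply more detail than the paper---the explicit computational-exhaustion bound and the step-by-step termination of the two-phase replication under a correct leader---and the accounting subtlety you flag about interleaving $\delta_{tx}$ and $\delta_{vc}$ compensation is likewise left at sketch level in the paper's Lemma~\ref{lemma:repossess} and Theorem~\ref{theorem:leadercompleteness}.
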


\begin{proof}
At any given time, leadership is in one of the following two conditions: \wct<1> leadership is controlled by $f$ faulty servers, or \wct<2> leadership is released by $f$ faulty servers.

In \wct<1>, with Lemma~\ref{lemma:repossess}, faulty leaders must at some point start to conduct replication. Otherwise, they cannot  control the leadership indefinitely. When they start to conduct replication, they become temporary non-faulty leaders.

In \wct<2>, with Lemma~\ref{lemma:oneleader}, a leader (e.g., $S_i$) will eventually be elected from up-to-date and non-faulty servers (i.e., $S_i \in \mathcal{S}_1$). 
With Lemma~\ref{lemma:unsuccessfulattempts}, the reputation penalties of stable servers do not increase in view changes. Note that in case of rising reputation penalties incurred by GST, all non-faulty servers can apply refresh penalties introduced in \S\ref{sec:optimizations}. 

In addition, we show that all non-faulty servers are able to move to a new view. Assume a server $S_i$ in a view $V$, which has one of three possible scenarios in a view change: \wct<1> $S_i$ initiates a leader election campaign for a higher view $V'$ ($V'>V$) and is elected as the new leader; \wct<2> $S_i$ initiates a leader election campaign for a higher view $V'$ ($V'>V$) but is not elected as the new leader; and \wct<3> $S_i$ does not initiate a leader election campaign and remained as a follower in view $V$.

Scenario \wct<1> is straightforward. When $S_i$ is elected, $S_i$ moves to the new view it initiated. In Scenario \wct<2>, if $S_i$ did not win an election, it can be in the redeemer state or the candidate state. In both states, its operating view is $V$ (from the current \texttt{vcBlock} of view $V$), and the view it is campaigning for is $V'$ ($V'>V$). Once $S_i$ receives a legit \texttt{vcBlock} of view $V^*$ ($V^*>V$), it aborts its campaign activity for view $V'$, moving to view $V^*$ by transitioning back to the follower state in accordance with the procedure of receiving a new \texttt{vcBlock} defined in \S\ref{sec:asleader}.
In Scenario \wct<3>, $S_i$ simply follows the same procedure moving to the new view when it receives a \texttt{vcBlock} of a higher view.

To conclude, after GST, all servers are able to move to a new view, and a non-faulty leader is eventually elected in the new view. Therefore, a client request will eventually be committed by all non-faulty servers; i.e., \algo ensures that a client eventually receives replies to its request after GST.
\end{proof}

\subsection{Safety}

After a leader is elected in a view, \algo uses a standard two-phase replication protocol to conduct consensus for transactions proposed by clients. We now prove that \algo ensures safety.

\begin{theorem}[Safety] (Same as Theorem~\ref{theorem:safety})
Non-faulty servers do not decide on conflicting blocks. That is, non-faulty servers do not commit two \texttt{txBlock}s at the same sequence number $n$.
\end{theorem}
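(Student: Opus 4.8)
The plan is to argue by contradiction. Suppose two distinct \texttt{txBlock}s, $B$ and $B_\diamond$, are both committed at the same sequence number $n$ by non-faulty servers. By the replication protocol (\S\ref{sec:algo:replication}), each commit must be endorsed by a \texttt{commit\_QC} of size $2f{+}1$, say $\texttt{commit\_QC}_B$ and $\texttt{commit\_QC}_{B_\diamond}$. In a system of $n{=}3f{+}1$ servers, any two quorums of size $2f{+}1$ share at least $2(2f{+}1)-(3f{+}1)=f{+}1$ servers, hence at least one \emph{non-faulty} server signs both. Using the partition $\mathcal{S}_1\cup\mathcal{S}_2\cup\mathcal{S}_f$ from Lemma~\ref{lemma:oneleader}, the whole proof reduces to showing that no non-faulty server can legitimately contribute to two conflicting \texttt{commit\_QC}s at the same $n$; I would split this into the intra-view case and the cross-view case.

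For the intra-view case — $B$ and $B_\diamond$ both committed in one view $V$ — Property~\ref{vc:p1} gives a unique leader in $V$, so two different values at $n$ would have to come from that (possibly faulty) leader equivocating. But a non-faulty follower replies to an \textsc{Ord} for $n$ only after checking that $n$ has not already been used, so it endorses at most one \texttt{ordering\_QC}, and therefore at most one \texttt{commit\_QC}, at $n$ in view $V$. Combined with the quorum-intersection observation above, the common non-faulty signer of $\texttt{commit\_QC}_B$ and $\texttt{commit\_QC}_{B_\diamond}$ yields the contradiction; this is precisely the sketch in \S\ref{sec:correctnes}, fleshed out.

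For the cross-view case — without loss of generality $B$ committed in $V$ and $B_\diamond$ committed in a later view $V'>V$ — the key step is to prove, by induction on $W$, that the leader of every view $W$ with $V<W\leq V'$ already holds $B$ at sequence number $n$. By Lemma~\ref{lemma:f+1uptodate}, the commit of $B$ means at least $f{+}1$ non-faulty servers store $B$ at $n$; electing the leader of $W$ required a \texttt{vc\_QC} of size $2f{+}1$, hence at least $f{+}1$ non-faulty voters; within the $2f{+}1$ non-faulty servers these two sets of size $f{+}1$ overlap, so some non-faulty voter both stores $B$ at $n$ and cast its vote. By criterion~\ref{criterion:3} it only voted because the candidate's committed log was at least as up-to-date as its own, and by the \textsc{SyncUp} $QC$-validation this must be an \emph{honestly} longer log; since (by the induction hypothesis) $\texttt{commit\_QC}_B$ is the only valid \texttt{commit\_QC} at $n$ in any view $\leq W$, the elected leader's log at position $n$ must equal $B$. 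Propagating this through all view changes up to $V'$, the leader of $V'$ will not assign $n$ to $B_\diamond\neq B$, and a non-faulty follower would reject such an \textsc{Ord}; hence $\texttt{commit\_QC}_{B_\diamond}$ at $n$ can never form, contradicting the assumption.

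The main obstacle I anticipate is the cross-view inductive step — specifically, tightening the meaning of ``most up-to-date'' in Property~\ref{vc:property:or} and criterion~\ref{criterion:3}, which compare only highest committed sequence numbers ($ti$ versus \texttt{myTxBlock.n}) rather than the content at position $n$. Making the argument watertight needs the \texttt{txBlock} log to be a prefix-consistent hash chain (so a longer committed log shares the common committed prefix), needs \textsc{SyncUp}'s $QC$ checks to certify exactly the committed blocks, and needs to confirm that neither split-vote retries (\S\ref{sec:ascandidate}) nor the refresh-penalty mechanism (\S\ref{sec:optimizations}) can cause a non-faulty server to drop a committed block. Once those structural facts are pinned down, the quorum-intersection counting itself is routine.
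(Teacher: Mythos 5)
Your proposal is correct, and its core — contradiction, two \texttt{commit\_QC}s of size $2f{+}1$, quorum intersection yielding a common non-faulty signer, and the fact that a non-faulty follower endorses at most one block per sequence number — is exactly the paper's counting argument (the paper phrases the intersection via the partition $\mathcal{S}_1\cup\mathcal{S}_f$ versus $\mathcal{S}_f\cup\mathcal{S}_2$, which is the same $2(2f{+}1)-(3f{+}1)=f{+}1$ computation). Where you genuinely diverge is the cross-view case. The paper dispatches it in two short steps: (a) it observes that \texttt{ordering\_QC} and \texttt{commit\_QC} must be formed within a single view because servers never respond to messages from lower views (\S\ref{sec:algo:replication}), and (b) it invokes Lemma~\ref{lemma:p2} and Theorem~\ref{theorem:leadercompleteness} to assert that any elected leader is most up-to-date, hence aware of the highest committed sequence number and unable to reassign a used $n$. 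You instead replace (b) with an explicit induction over every intervening view, intersecting each \texttt{vc\_QC} with the set of $f{+}1$ non-faulty holders of the committed block (Lemma~\ref{lemma:f+1uptodate}) and using criterion~\ref{criterion:3} plus \textsc{SyncUp}'s $QC$ validation to carry the committed block into each new leader's log. Your route is closer to the classical PBFT/HotStuff view-change safety argument and has the virtue of surfacing exactly what the paper leaves implicit — that \ref{criterion:3} compares only highest sequence numbers, so one additionally needs prefix consistency of the validated \texttt{txBlock} chain and that neither split-vote retries nor penalty refreshes discard committed blocks — whereas the paper's shorter version quietly leans on those same structural facts through its ``up-to-date leader will not reassign $n$'' remark. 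Both arguments are sound given those protocol properties; yours is more self-contained and auditable, the paper's is more economical.
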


\begin{proof}
With Lemma~\ref{lemma:oneleader}, \algo ensures that each view has at most one leader. When a view has a non-faulty leader, the replication protocol is invoked to conduct consensus for transactions proposed by clients and produces the consensus result as a \texttt{txBlock} with a unique sequence number $n$. 
Note that \algo does not allow the consensus process of a \texttt{txBlock} to operate across views, as servers never respond to a message from a lower view (discussed in \S\ref{sec:algo:replication}). The \texttt{ordering\_QC} and \texttt{commit\_QC} must be constructed in the same view. We claim that there are \texttt{txBlock} and $\texttt{txBlock}_{\diamond}$, both committed with sequence number $n$. 

In this case, \texttt{commit\_QC} and $\texttt{commit\_QC}_{\diamond}$ are both signed by $2f+1$ servers. Say \texttt{commit\_QC} is signed by servers in $\mathcal{S}_1 \cup \mathcal{S}_f$. Then, servers in $\mathcal{S}_1$ cannot sign $\texttt{commit\_QC}_{\diamond}$ with $n$. Although faulty servers in $\mathcal{S}_f$ can double commit, $\texttt{commit\_QC}_{\diamond}$ can only find servers in $\mathcal{S}_f \cup \mathcal{S}_2$ ($|\mathcal{S}_f|{+}|\mathcal{S}_2|{=}2f$) to sign it, which is not sufficient to form a $QC$ of size $2f+1$. Therefore, $\texttt{commit\_QC}_{\diamond}$ cannot be formed, which contradicts our claim.

In addition, with Lemma~\ref{lemma:p2} and Theorem~\ref{theorem:leadercompleteness}, a non-faulty server that has the most up-to-date log will be elected as a leader for normal operation. Thus, a non-faulty leader is always aware of the highest sequence number and will not reassign a used sequence number for a \texttt{txBlock}.

Therefore, the combination of \algo's view-change protocol and the standard two-phase replication protocol ensures safety, with no non-faulty servers deciding on conflicting blocks.
\end{proof}

\subsection{Leadership fairness}

Since the passive view-change protocol rotates leadership according to a predefined leader schedule, it intrinsically achieves leadership fairness as each server becomes a leader once in a circle of rotations. However, its leadership fairness is shared among all servers including faulty ones, which can always result in regular faulty views with unavailable leaders, especially under frequent view changes.

In contrast, \algo's active view-change protocol achieves a stronger form of leadership fairness. Since faulty servers are penalized with worsening reputation penalties after showing a pattern of launching attacks, leadership will be eventually shared among non-faulty servers over the long run.

\begin{theorem}[Strong leadership fairness]
\algo eventually achieves leadership fairness among all non-faulty servers.
\end{theorem}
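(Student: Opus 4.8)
The plan is to establish strong leadership fairness in two stages: first, that the cofinally-faulty servers are eventually and permanently barred from leadership; second, that once this holds, leadership circulates among the non-faulty servers with equal long-run frequency. The first stage is a strengthening of Lemma~\ref{lemma:repossess}; the second rests on Lemma~\ref{lemma:oneleader}, Theorem~\ref{theorem:leadercompleteness}, Theorem~\ref{ap:leadershiprobustness}, and the randomized timeouts of \S\ref{sec:asfollower}. Since the faulty set changes dynamically, the statement should be read for servers that are faulty cofinally in time; any server that is eventually permanently correct simply joins the honest pool and the argument for it degenerates to that pool's case.

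For the first stage I would argue that a server which behaves faulty infinitely often — in the sense of repeatedly repossessing leadership without the replication progress that $\delta_{tx}$ rewards — has an $rp$ that grows without bound. By Lemma~\ref{lemma:repossess}, whenever such a server does not advance its log past its compensation index, $\delta_{tx}=0$, so Eq.~\ref{eq:decrease-penalty} gives $\delta=0$ and each repossession strictly increases $rp$ by at least one via Eq.~\ref{eq:increase-penalty}. Since a hash with an $rp$-byte zero prefix is found with probability $2^{-8\,rp}$ per attempt, the work to reach the candidate state grows like $2^{8\,rp}$; once this exceeds the computational bound $\gamma$, the server can never again become a candidate, hence never again be elected. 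I would then rule out that the penalty-refresh mechanism of \S\ref{sec:optimizations} rescues it: a refresh requires a \texttt{rs\_QC} witnessing that $f{+}1$ \emph{non-faulty} servers have $rp>\pi$, and after GST, Theorem~\ref{ap:leadershiprobustness} guarantees a non-faulty leader is never displaced and hence never penalized, so the refresh precondition ceases to hold and cannot be invoked to resuscitate a Byzantine server.

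For the second stage, combine Theorem~\ref{theorem:leadercompleteness} and Lemma~\ref{lemma:oneleader}: after the suppression above, every view change that occurs elects a leader from the $\ge f{+}1$ up-to-date non-faulty servers $\mathcal{S}_1$, and under a non-faulty leader no view change arises except the policy-defined ones (timing or throughput-threshold criteria of \S\ref{sec:asfollower}) and GST-induced timeouts, which cease after GST. At each such view change all non-faulty servers reset the randomized timer of \S\ref{sec:asfollower}; since correct servers maintain a good reputation and thus perform negligible PoW, the first non-faulty server whose timer fires becomes a candidate and, by Lemma~\ref{lemma:oneleader} together with criteria \ref{criterion:1}--\ref{criterion:5}, is elected. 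As these timeouts are i.i.d. draws from an interval wide enough that split votes do not occur, each non-faulty server is equally likely to be that first server. Hence over an infinite run each non-faulty server becomes leader infinitely often with equal limiting frequency, while every cofinally-faulty server has limiting leadership frequency $0$ — exactly the ``strong'' form of fairness, in contrast to the passive protocol whose round-robin shares leadership with faulty servers forever.

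The main obstacle I anticipate is the fairness claim \emph{within} the honest pool. Unlike round-robin, \algo's election is probabilistic, so ``fairness'' must be phrased as equality of long-run leadership frequency (or merely that every non-faulty server leads infinitely often), and establishing it needs (i) the randomized-timeout model of \S\ref{sec:asfollower}, (ii) that a correct server's reputation stays small enough that its PoW does not skew the race — which itself leans on the refresh mechanism or a finite GST — and (iii) care with the dynamic faulty set, since a server oscillating between correct and faulty could extract some ``free'' leadership while correct before renewed attacks drive $rp$ back above $\gamma$. Bounding that extracted amount (it is at most what accrues between two successive refreshes, each requiring an honest super-quorum) and showing it does not disturb the limiting frequencies is the fiddliest piece; the rest reduces to the already-proved lemmas.
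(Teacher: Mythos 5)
There is a genuine gap in your first stage. The claim that every cofinally-faulty server is \emph{eventually and permanently barred} from leadership is false under the very attack the paper singles out: a faulty server that launches repeated view-change attacks only when it can be compensated (strategy \texttt{S2} in \S\ref{sec:eva:under-failures}, scenario \wct<2> in the proof of Theorem~\ref{theorem:leadercompleteness}). Such a server keeps its $rp$ bounded forever by intermittently replicating an increasing number of \texttt{txBlock}s (or by sitting out views to accrue $\delta_{vc}$), so its required PoW never exceeds $\gamma$ and it is never priced out of candidacy; your Lemma~\ref{lemma:repossess}-based unbounded-growth argument only covers the obstinate attacker of scenario \wct<1>. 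Your proposed repair --- bounding the leadership it ``extracts'' by what accrues between two successive refreshes --- is aimed at the wrong mechanism: the refresh of \S\ref{sec:optimizations} resets only the \emph{refreshing server's own} $rp$ and $ci$ and is gated on $f{+}1$ non-faulty servers exceeding $\pi$; it neither limits nor is triggered by an \texttt{S2} attacker, whose budget is instead controlled by the incremental-log-responsiveness criterion ($\delta_{tx}$ requires $ti$ to grow past the stored $ci$ before each new compensation). The paper's own proof absorbs this case differently: it observes that to stay compensable the faulty server must either behave correctly in replication (becoming a temporary non-faulty leader) or relinquish leadership for stretches of views, and during those stretches leadership is campaigned for only by non-faulty servers --- that, together with the burn-out of scenario \wct<1>, is the entire argument.

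Your second stage also proves more than the theorem asserts, and the extra part does not follow from the paper's model. The paper's ``strong leadership fairness'' is only the statement that leadership is eventually shared among non-faulty servers (in contrast to round-robin, which shares it with faulty ones); it is not a claim of equal limiting frequency. Your equal-frequency argument needs the non-faulty servers to be symmetric in the race to become redeemer/candidate, but nothing in the protocol provides this: who first assembles a \texttt{conf\_QC} depends on which servers the complaining client reaches first and on pre-GST message delays, the non-faulty servers generally carry different $rp$ values (a recent leader's penalty history differs from a perpetual follower's, so their PoW costs and $\delta_{vc}$ terms differ), and the paper assumes randomized timeouts only to make split votes rare, not that they are i.i.d.\ draws inducing a uniform winner. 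If you drop the equal-frequency claim and the permanent-barring claim, what remains of your proposal --- faulty servers cannot hold leadership indefinitely without either burning past $\gamma$ or behaving correctly, after which Lemma~\ref{lemma:oneleader} and Theorem~\ref{theorem:leadercompleteness} put leadership in the hands of up-to-date non-faulty servers --- is essentially the paper's two-case proof.
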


\begin{proof}
With Lemma~\ref{lemma:repossess}, faulty servers may \wct<1> become faulty leaders with increasing reputation penalties or \wct<2> become temporary non-faulty servers and launch attacks when they can get compensated.

In Scenario \wct<1>, after faulty servers exhaust their computation capability, leadership will be campaigned by only non-faulty servers. In Scenario \wct<2>, during the period when faulty servers behave correctly, leadership will also be campaigned by only non-faulty servers. Therefore, \algo eventually achieves leadership fairness among all non-faulty servers.
\end{proof}

\section{Collected questions}
\label{sec:ap:collectedq}
\renewcommand*{\proofname}{Answer}

In this section, we show questions that were collected during presentations, lectures, and conversions from various groups including ECE/CS graduate students, professors, and distributed system developers. Questions are arranged according to their related sections.

\begin{question}[Motivation]
The passive view-change protocol indeed suffers from performance degradation, but the good thing about passive VC is that it can decide on a leader regardless of whether the $f$ failures are crash failures or Byzantine failures. How does \algo perform under a variety of attacks compared to the passive protocol?
\end{question}

\begin{proof}
Compared to the simple passive VC protocol, \algo has a more advanced and sophisticated VC protocol. As shown in the evaluation section, \algo outperforms the passive VC protocol both under crash and Byzantine failures. When it comes to tolerating crash failures, \algo shows a significant advantage. Since \algo allows servers to actively campaign for leadership upon detecting a leader's failure, it never assigns an unavailable or a stale server as a leader. Additionally, the evaluation of quiet attacks (\ref{byz:quiet}), similar to crash failures, demonstrates that \algo remains unaffected while the passive VC protocol is severely impacted (see Figure~\ref{fig:eva:quietandequiv}).

Regarding tolerating Byzantine failures, \algo has the capability to mitigate the impact of arbitrary faults and progressively improve its availability over time. Despite the fact that faulty servers can launch attacks that come with computational costs, \algo may experience a brief period of low availability while increasing faulty servers' reputation penalties. However, \algo surpasses the passive VC protocol as soon as its reputation mechanism responds appropriately to accumulated historical data in view changes and replication (see Figure~\ref{fig:eva:availability}).
\end{proof}

\begin{question}[Motivation]
Why a speculative approach? Can we kick faulty servers out when some servers fail and reconfigure the system?
\end{question}

\begin{proof}
Excluding faulty servers can be a temporary solution to deal with failures, but it does not represent a fault-tolerant approach. The focus of fault tolerance is to ensure that the system continues to function correctly even in the presence of failures.

Furthermore, in the context of Byzantine fault tolerance, distinguishing between benign and malicious behavior can be difficult. It is often impossible to determine whether a server is intentionally dropping a request or if the network is responsible for the failure. If we continuously exclude servers every time they exhibit a failure, we may soon find ourselves running out of servers, leading to frequent and manual configuration changes.
\end{proof}

\begin{question}[Reputation mechanism]
What if bad clients collude with faulty leaders and send bad requests to the system to let the faulty leader gain some reputation and in turn let faulty servers enjoy penalty deductions?
\end{question}

\begin{proof}
\algo leaves the judgment of good and bad requests to the applications. As discussed in \S\ref{sec:reputation}, users can define the criteria of useful \texttt{txBlock}s and the impact factor $C_{\delta}$ based on specific use cases. For example, in a financial application, a \texttt{txBlock} can be considered useful if its transactions are worth more than $\$1,000$, while transactions below this amount will not be counted in \texttt{ti} to receive compensation. This strategy can prevent frequent small transactions from impacting the calculation of reputation penalties.

\algo proposes a general and versatile architecture incorporating a behavior-aware reputation mechanism, providing flexibility to its applications. This architecture enables user-defined information to convert behavior into a reputation penalty, which can be tailored to each application's unique requirements.
\end{proof}

\begin{question}[Reputation mechanism]
Will the increasing value of \texttt{ti} in the incremental log responsiveness make it more challenging for servers to receive compensation over time?
\end{question}

\begin{proof}
The criterion of incremental log responsiveness is intended to reward servers that make increasing progress in replication, which prevents faulty servers from receiving compensation for making only limited progress. When faulty servers temporarily pretend to be correct in order to receive compensation, this criterion forces them to keep replicating more transactions after each time they receive compensation (e.g., examples \wct<3> vs. \wct<4> in Figure~\ref{fig:reputation-examples}). This design has resulted in an improvement in availability when faulty servers choose to launch attacks only when they can receive compensation. In the long run, when the reputation penalties of at least $f+1$ non-faulty servers exceed the predefined threshold, the refresh mechanism will reset $rp$ and $ci$ to the initial value for these servers (discussed in~\ref{sec:optimizations}). Consequently, the refresh will ``rejuvenate'' the calculation of $\delta_{tx}$.
\end{proof}

\begin{question}[Reputation mechanism]
The reputation design is interesting. Your current approach seems to only reduce the interference of faulty servers in view changes. Can the reputation mechanism be adapted to also reduce the interference of faulty servers in replication? If so, will this increase the performance even more?
\end{question}

\begin{proof}
The primary focus of \algo is on view changes, as faulty leaders have the most harmful impact on leader-based BFT algorithms. While we have considered the possibility of introducing penalization in replication, we have two major concerns that have hindered us from implementing this feature. Firstly, under a faulty leader, $f$ correct servers can always be blacked out in replication. Thus, it is not possible to judge reputation based on states, as a $QC$ can always be constructed by $f+1$ correct servers and $f$ faulty servers. Secondly, the reputation mechanism is currently only activated during view changes, which does not impose any additional overhead on replication. Penalizing wrongdoing during replication may require additional message passing among servers, which could introduce overhead.

However, we remain open to the idea of introducing penalization in replication in the future, as we continue to explore ways to build up more efficient and more robust fault tolerance algorithms.
\end{proof}

\begin{question}[View changes]
You mentioned that VDF is an alternative way of using PoW to implement the effect of reputation penalties. How would incorporating VDF to implement the effect of reputation penalties change the overall architecture?
\end{question}

\begin{proof}
Changing PoW to VDF will not change \algo's overall architecture. In fact, the reputation mechanism does not need to change at all.
To use VDF, we first change the hash computation process of a redeemer (Line~\ref{algo:ar:powstarts} to~\ref{algo:ar:powends}) to a delay function where the delay time is determined by the reputation penalty. Then, we change the verification of PoW computational work (\ref{criterion:5} in \S\ref{sec:ascandidate}) to the verification of delayed time. 
\end{proof}

\begin{question}[View changes]
What if a faulty candidate colludes with a faulty leader and tricks correct servers by sending them a block that is not the latest one, since stale servers cannot know what the latest transaction block is?
\end{question}

\begin{proof}
This is a possible scenario, but it will not affect the correctness. In replication, each $QC$ has a size of $2f+1$, there must be at least $f+1$ up-to-date and non-faulty servers knowing the latest \texttt{txBlock}. The faulty candidate cannot receive sufficient votes because all up-to-date and non-faulty servers will never vote for them. Our view-change protocol ensures that an elected leader must have the most up-to-date logs (Property~\ref{vc:property:or}). In addition, when stale servers receive a \textsc{CampVC} message from an up-to-date and non-faulty candidate, they will sync to up-to-date, regardless of any tricks from faulty candidates.
\end{proof}

\begin{question}[View changes]
In addition to leadership fairness, how can your algorithm support fairness in handling client requests?
\end{question}

\begin{proof}
\algo's active VC protocol supports the fairness problem of handling client requests by making frequent view changes more efficient and robust. As discussed in the introduction, faulty leaders can unfairly handle client requests. For example, faulty leaders can choose to handle the requests from selected clients first and intentionally slow down the consensus process for targeted clients. Some approaches such as Aardvark~\cite{clement2009making} and Diem~\cite{diem} have proposed approaches to frequently change leadership through view changes in order to mitigate the unfair handling problem. \algo's active VC protocol can be applied to replace the passive VC protocol used in these approaches with enhanced efficiency and robustness, leading to high performance in terms of throughput and latency under frequent view changes.
\end{proof}

\section{Examples}
\label{sec:ap:examples}

In this section, we show the step-by-step calculations that the reputation mechanism converts a server's behavior history into a reputation penalty. 
By walking through these calculations, we aim to provide a clear and comprehensive understanding of how the reputation mechanism operates and the results are used in view changes.

We assume a $4$-server system including servers $S_1$, $S_2$, $S_3$, and $S_4$. The initial view is $V1$ where $rp=1$ and $ci=1$ for each server. The reputation segment of the initial \texttt{vcBlock} of view $V1$ (denoted by \texttt{vcBlock[V1]}) is presented below:

\noindent\begin{minipage}{.49\linewidth}
\scriptsize
\[
\texttt{vcBlock[V1]}.rp = 
    \left\{
    \begin{aligned}
    <ID: 1, rp: 1>\\
    <ID: 2, rp: 1>\\
    <ID: 3, rp: 1>\\
    <ID: 4, rp: 1>
    \end{aligned}
    \right.
    \]
\end{minipage}
\hfill
\begin{minipage}{.49\linewidth}
\scriptsize
    \[
    \texttt{vcBlock[V1]}.ci = 
    \left\{
    \begin{aligned}
    <ID: 1, ci: 1>\\
    <ID: 2, ci: 1>\\
    <ID: 3, ci: 1>\\
    <ID: 4, ci: 1>    
    \end{aligned}
    \right.
    \]
\end{minipage}

~\\

We show how the reputation penalty ($rp$) and compensation index ($ci$) of server $S_1$ are calculated based on its different behavior in view changes and replication.

\begin{figure}[h]
    \centering
    \includegraphics[width=0.65\linewidth]{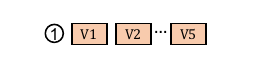}
\end{figure}

The blocks in \wct<1> show that $S_1$ has repeatedly possessed the leadership without making progress in replication. After $V1$, $S_1$ campaigned for view $V2$. It first goes through Eq.~\ref{eq:increase-penalty} (penalization), where $V=1$, $V'=2$, and $rp^{(1)}=1$:

$$rp^{(2)}_{temp}=rp^{(1)}+V'-V=2$$

After this, $S_1$ is not eligible to receive compensation because it has not replicated any transaction, resulting in its $\delta_{tx}=0$.
Thus, $rp^{(2)} = rp^{(2)}_{temp}=2$, and \texttt{vcBlock[V2]} that $S_1$ prepares for the view $V2$ is as follows:

\noindent\begin{minipage}{.49\linewidth}
\scriptsize
\[
\texttt{vcBlock[V2]}.rp = 
    \left\{
    \begin{aligned}
    <ID: 1, rp: 2>\\
    <ID: 2, rp: 1>\\
    <ID: 3, rp: 1>\\
    <ID: 4, rp: 1>
    \end{aligned}
    \right.
    \]
\end{minipage}
\hfill
\begin{minipage}{.49\linewidth}
\scriptsize
    \[
    \texttt{vcBlock[V2]}.ci = 
    \left\{
    \begin{aligned}
    <ID: 1, ci: 1>\\
    <ID: 2, ci: 1>\\
    <ID: 3, ci: 1>\\
    <ID: 4, ci: 1>    
    \end{aligned}
    \right.
    \]
\end{minipage}

$S_1$ repeats this behavior to view $V5$, and \texttt{vcBlock[V5]} is as follows:

\noindent\begin{minipage}{.49\linewidth}
\scriptsize
\[
\texttt{vcBlock[V5]}.rp = 
    \left\{
    \begin{aligned}
    <ID: 1, rp: 5>\\
    <ID: 2, rp: 1>\\
    <ID: 3, rp: 1>\\
    <ID: 4, rp: 1>
    \end{aligned}
    \right.
    \]
\end{minipage}
\hfill
\begin{minipage}{.49\linewidth}
\scriptsize
    \[
    \texttt{vcBlock[V5]}.ci = 
    \left\{
    \begin{aligned}
    <ID: 1, ci: 1>\\
    <ID: 2, ci: 1>\\
    <ID: 3, ci: 1>\\
    <ID: 4, ci: 1>    
    \end{aligned}
    \right.
    \]
\end{minipage}

\begin{figure}[h]
    \centering
    \includegraphics[width=0.8\linewidth]{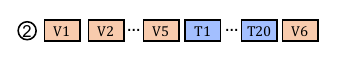}
\end{figure}

In view $V5$, $S_1$ can not afford the increasing $rp$ and decides to temporarily behave like a correct leader. 
The blocks in \wct<2> indicates that $S_1$ has replicated $20$ transactions (i.e., \texttt{txBlock[T1]} to txBlock[T20]). Then, at the end of $V5$, when $S_1$ campaigns for $V6$, it first gets penalized by Eq.~\ref{eq:increase-penalty}, where $V=5$, $V'=6$, and $rp^{(5)}=5$:

$$rp^{(6)}_{temp}=rp^{(5)}+V'-V=6 $$

Then, $S_1$ can receive compensation, as it has replicated $20$ \texttt{txBlock}s ($\texttt{ti}=20$).

$$\delta_{tx} =  \dfrac{\texttt{ti} - \texttt{ci}}{\texttt{ti}} = 1$$

Since $\mathcal{P} = \{1,2,3,4,5\}$, $\mu_{\mathcal{P}}=3$ and $\sigma_{\mathcal{P}}=1.41$; thus,

$$\delta_{vc} = 1- Sigmoid(\dfrac{rp^{(5)}-\mu_{\mathcal{P}}}{\sigma_{\mathcal{P}}}) = 0.19
$$

$$\delta= \delta_{tx} \times \delta_{vc} \times rp^{(6)}_{temp} = 1.14$$

$$rp^{(6)} = rp^{(6)}_{temp} - \floor{\delta} = 5$$

Consequently, the reputation segment of \texttt{vcBlock[V6]} that $S_1$ sends is as follows:

\noindent\begin{minipage}{.49\linewidth}
\scriptsize
\[
\texttt{vcBlock[V6]}.rp = 
    \left\{
    \begin{aligned}
    <ID: 1, rp: 5>\\
    <ID: 2, rp: 1>\\
    <ID: 3, rp: 1>\\
    <ID: 4, rp: 1>
    \end{aligned}
    \right.
    \]
\end{minipage}
\hfill
\begin{minipage}{.49\linewidth}
\scriptsize
    \[
    \texttt{vcBlock[V6]}.ci = 
    \left\{
    \begin{aligned}
    &<ID: 1, ci: 20>\\
    &<ID: 2, ci: 1>\\
    &<ID: 3, ci: 1>\\
    &<ID: 4, ci: 1>    
    \end{aligned}
    \right.
    \]
\end{minipage}

\begin{figure}[h]
    \centering
    \includegraphics[width=\linewidth]{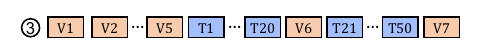}
\end{figure}

In view $V6$, $S_1$ wants to get compensated again, so as shown in \wct<3>, it temporarily behaves correctly again by replicating another $30$ \texttt{txBlock}s (i.e., $50$ \texttt{txBlock}s in total). At the end of $V6$, when $S_1$ campaigns for view $V7$. It first gets penalized by Eq.~\ref{eq:increase-penalty}, where $V=6, V'=7, rp^{(6)}=5$:

$$rp^{(7)}_{temp}=rp^{(6)}+V'-V=6 $$

Then, the calculation moves to compensation. Since $S_1$ has used $20$ \texttt{txBlock}s for the last compensation calculation, its $\texttt{ci}=20$ and $\texttt{ti}=50$.

$$\delta_{tx} =  \dfrac{\texttt{ti} - \texttt{ci}}{\texttt{ti}} = 0.6$$

Since $\mathcal{P} = \{1,2,3,4,5,5\}$, $\mu_{\mathcal{P}}=3.33$ and $\sigma_{\mathcal{P}}=1.49$; thus,

$$\delta_{vc} = 1- Sigmoid(\dfrac{rp^{(5)}-\mu_{\mathcal{P}}}{\sigma_{\mathcal{P}}}) = 0.25$$

$$\delta= \delta_{tx} \times \delta_{vc} \times rp^{(7)}_{temp} = 0.89$$

$$rp^{(7)} = rp^{(7)}_{temp} - \floor{\delta} = 6$$

In \wct<3>, $S_1$ cannot receive any compensation, and its reputation penalty increases to $rp=6$. 
If $S_1$ wants to receive compensation in this view, it must get a higher $\delta_{tx}$, as $\delta_{vc}$ remains unchanged in a view. For example, if $S_1$ replicates $80$ more \texttt{txBlock}s in this view with $\texttt{ti}=100$ (shown in \wct<4>), its $\delta_{tx}$ will result in compensation.

\begin{figure}[h]
    \centering
    \includegraphics[width=\linewidth]{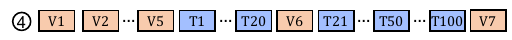}
\end{figure}

$$\delta_{tx} =  \dfrac{\texttt{ti} - \texttt{ci}}{\texttt{ti}} = 0.8$$

In this case, with $\delta_{vc}=0.25$,
$$\delta= \delta_{tx} \times \delta_{vc} \times rp^{(7)}_{temp} = 1.2$$

Thus, $S_1$ will be compensated.
$$rp^{(7)} = rp^{(7)}_{temp} - \floor{\delta} = 5$$

Consequently, the reputation segment of \texttt{vcBlock[V7]} that $S_1$ sends is as follows:

\noindent\begin{minipage}{.49\linewidth}
\scriptsize
\[
\texttt{vcBlock[V7]}.rp = 
    \left\{
    \begin{aligned}
    <ID: 1, rp: 5>\\
    <ID: 2, rp: 1>\\
    <ID: 3, rp: 1>\\
    <ID: 4, rp: 1>
    \end{aligned}
    \right.
    \]
\end{minipage}
\hfill
\begin{minipage}{.49\linewidth}
\scriptsize
    \[
    \texttt{vcBlock[V7]}.ci = 
    \left\{
    \begin{aligned}
    &<ID: 1, ci: 100>\\
    &<ID: 2, ci: 1>\\
    &<ID: 3, ci: 1>\\
    &<ID: 4, ci: 1>    
    \end{aligned}
    \right.
    \]
\end{minipage}

~\\

The comparison of \wct<3> and \wct<4> shows that \textbf{the criterion of replication ($\delta_{tx}$) entices servers to behave correctly by incentivizing a more up-to-date replication log.} In order to continuously receive compensation, $S_1$ must make incrementally growing progress in replication. If $S_1$ is a faulty server and only temporarily behaves correctly to receive compensation, the temporary period increases significantly after each time $S_1$ receives compensation.

In addition to enticing servers to have a more up-to-date replication, the \textbf{reputation mechanism also incentives ``heavily penalized servers'' to give up leadership} and stay as a follower for a while (e.g., until they can receive compensation again).

\begin{figure}[h]
    \centering
    \includegraphics[width=0.95\linewidth]{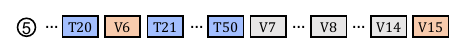}
\end{figure}

For example, in \wct<5>, if $S_1$ gives up leadership and does not campaign for view $V7$, its $rp$ and $ci$ remain unchanged. If $S_1$ operates as a follower through view $V7$ to $V14$ (gray \texttt{vcBlock}s), then at the end of $V14$, its $rp=5$.

$$\mathcal{P} = \lbrace 1, 2, 3, 4, 5, 5, ... , 5 \rbrace \text{ //5 appears 10 times}$$ 

If $S_1$ campaigns for view $V15$, after penalization,
$$rp^{(15)}_{temp}=rp^{(14)}+V'-V=6 $$

Then, it goes through compensation with $\mu_{\mathcal{P}}=4.28$ and $\sigma_{\mathcal{P}}=1.27$.

$$\delta_{tx} =  \dfrac{\texttt{ti} - \texttt{ci}}{\texttt{ti}} = 0.6$$

$$\delta_{vc} = 1- Sigmoid(\dfrac{rp^{(14)}-\mu_{\mathcal{P}}}{\sigma_{\mathcal{P}}}) = 0.36$$

$$\delta= \delta_{tx} \times \delta_{vc} \times rp^{(15)}_{temp} = 1.29$$

Thus, $S_1$ will be compensated by a deduction of $1$ with its $rp$ unchanged.
$$rp^{(15)} = rp^{(15)}_{temp} - \floor{\delta} = 5$$

In addition, if $S_1$ has replicated $400$ \texttt{txBlock}s throughout the $14$ views with its $\texttt{ti}=400$ (as shown in \wct<6>), it will receive higher compensation for the better behavior from both sides.

\begin{figure}[h]
    \centering
    \includegraphics[width=\linewidth]{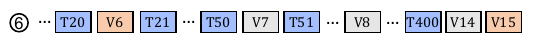}
\end{figure}

$$\delta_{tx} =  \dfrac{\texttt{ti} - \texttt{ci}}{\texttt{ti}} = 0.95$$

$$\delta= \delta_{tx} \times \delta_{vc} \times rp^{(15)}_{temp} = 2.05$$

Then, $S_i$ will be compensated by a deduction of $2$ with its $rp$ decreased to $4$.
$$rp^{(15)} = rp^{(15)}_{temp} - \floor{\delta} = 4$$

In this section, we have demonstrated how the reputation mechanism calculates a server's reputation penalty and compensation index during view changes based on its behavior history through various examples. We have shown how the reputation mechanism incentivizes servers to maintain up-to-date replication and avoid frequent leadership repossession.
\end{document}